\setlist{itemsep=1pt}
\newcommand{\dist}{\operatorname{dist}}
\newcommand{\CONGEST}{\mathsf{CONGEST}}
\newcommand{\ID}{\operatorname{ID}}
\newcommand{\poly}{\operatorname{poly}}
\newcommand{\Tsssp}{T_{\mathsf{SSSP}}}
\newcommand{\rpath}{RPaths}
\newcommand{\apxrpath}{Apx-\rpath{}}
\newcommand{\secsisp}{2-SiSP}
\newcommand{\stpath}{(s=v_0, v_1, v_2, \ldots, v_{h_{st}}=t)}
\newcommand{\yanyu}[1]{\todo[backgroundcolor=red!25]{Yanyu: #1}}
\newcommand{\yijun}[1]{\todo[backgroundcolor=cyan!25]{Yijun: #1}}
\title{Optimal Distributed Replacement Paths}
 \author{Yi-Jun Chang\footnote{National University of Singapore. ORCID: 0000-0002-0109-2432. Email: cyijun@nus.edu.sg} 
 \and Yanyu Chen\footnote{National University of Singapore. ORCID: 0009-0008-8068-1649. Email: yanyu.chen@u.nus.edu}
 \and Dipan Dey\footnote{Tata Institute of Fundamental Research, India. ORCID: 0009-0001-0675-8790. Email: dipan.dey@tifr.res.in}
 \and Gopinath Mishra\footnote{National University of Singapore. ORCID: 0000-0003-0540-0292. Email: gopinath@nus.edu.sg}  
 \and Hung Thuan Nguyen\footnote{National University of Singapore. ORCID: 0009-0006-7993-2952.  Email: hung@u.nus.edu}
 \and Bryce Sanchez\footnote{National University of Singapore. ORCID: 0009-0000-5807-297X. Email: bryce\_sanchez\_21@u.nus.edu} }
\date{}
\newtheorem{lemma}{Lemma}[section]
\newtheorem{definition}[lemma]{Definition}
\newtheorem{corollary}[lemma]{Corollary}
\newtheorem{observation}[lemma]{Observation}
\newtheorem{proposition}[lemma]{Proposition}
\begin{document}

\maketitle

\begin{abstract}
We study the \emph{replacement paths} problem in the $\CONGEST$ model of distributed computing.  Given an $s$-$t$ shortest path $P$, the goal is to compute, for every edge $e$ in $P$, the shortest-path distance from $s$ to $t$ avoiding $e$. For \emph{unweighted} directed graphs, we establish the \emph{tight} randomized round complexity bound for this problem as $\widetilde{\Theta}(n^{2/3} + D)$ by showing matching upper and lower bounds. Our upper bound extends to $(1+\epsilon)$-approximation for \emph{weighted} directed graphs. 
Our lower bound applies even to the \emph{second simple shortest path} problem, which asks only for the smallest replacement path length.
These results improve upon the very recent work of Manoharan and Ramachandran (SIROCCO 2024), who showed a lower bound of $\widetilde{\Omega}(n^{1/2} + D)$ and an upper bound of $\widetilde{O}(n^{2/3} + \sqrt{n h_{st}} + D)$, where $h_{st}$ is the number of hops in the given $s$-$t$ shortest path $P$.  

\yijun{make sure to hide all todo-comments before arxiv submission}
\yijun{remember to change node/nodes to vertex/vertices}
\end{abstract}

\thispagestyle{empty}
\newpage
\thispagestyle{empty}
\tableofcontents
\newpage
\pagenumbering{arabic}

\section{Introduction}
We study the \emph{Replacement Paths (\rpath{})} problem.  Given an $s$-$t$ shortest path $P$, the goal is to compute, for every edge $e$ in $P$, the shortest-path distance from $s$ to $t$ avoiding $e$. This problem has applications in network reliability, transportation systems, and distributed computing, where efficiently rerouting traffic in response to edge failures is critical.

We focus on the $\CONGEST$ model of distributed computing~\cite{peleg2000distributed}, where the network is modeled as a graph $G=(V,E)$. The communication proceeds in synchronous rounds. In each round, each vertex can send each neighbor an $O(\log n)$-bit message.
Shortest paths computation plays a critical role in distributed computing. Many classical shortest paths problems, and their related problems, are well-studied in the $\CONGEST$ model of distributed computing: \emph{Single-Source Shortest Path (SSSP)}~\cite{ashvinkumar2024parallel,elkin2020distributed, cao2023parallel,chechik2020single,forster2018faster,ghaffari2018improved,Rozhon2023}, \emph{All-Pairs Shortest Paths (APSP)}~\cite{agarwal2018deterministic,agarwal2020faster,bernstein2019distributed,holzer2012optimal,huang2017distributed,lenzen2015fast,nanongkai2014distributed}, and \emph{reachability}~\cite{chechik_et_al:LIPIcs.DISC.2019.11,jambulapati2019parallel}.

The \rpath{} problem, as well as its many variants, has been extensively studied in the centralized setting~\cite{Roditty2012a,hershberger2003difficulty,hershberger2001vickrey,NardelliPW03,MalikMG89,weimann2013replacement,grandoni2019faster,chechik2019near,gupta2020multiple,grandoni2012improved,gu2021faster,bilo2021near,DeyGupta22,bhosle2004replacement,williams2018subcubic,williams2022algorithms}. Surprisingly, despite having applications in fault-tolerant distributed computing, the \rpath{} problem has not received much attention in distributed computing yet. 
Ghaffari and Parter~\cite{ghaffari2016near} showed an $O(D \log n)$-round algorithm for single-source \rpath{} in unweighted undirected graphs, where $n=|V|$ is the number of vertices and $D$ is the diameter of the graph $G=(V,E)$. The first systematic study of the \rpath{} problem was done very recently by \citet{manoharan2024computing}. They consider both directed and undirected graphs, weighted and unweighted graphs. In particular, they obtained a \emph{tight} bound $\widetilde{\Theta}(n)$ for the \rpath{} problem in \emph{weighted directed} graphs by showing matching upper and lower bounds.\footnote{The notations $\widetilde{O}(\cdot)$, $\widetilde{\Omega}(\cdot)$, and $\widetilde{\Theta}(\cdot)$ suppress any $1/\poly(\log n)$ and $\poly(\log n)$ factor.}



However, the complexities of the \rpath{} problem become more complicated in some other cases.
\citet{manoharan2024computing} presented an algorithm that solves the \rpath{} problem on \emph{unweighted directed} graphs in $\widetilde{O}(n^{2/3} + \sqrt{n h_{st}} + D)$ rounds,  where $h_{st}$ is the number of hops
in the given shortest path $P$ from $s$ to $t$. They complemented their upper bound with a lower bound of $\widetilde{\Omega}\left(\sqrt{n} + D\right)$, which holds even when $h_{st}$ and $D$ are as small as $O(\log n)$. 
Given the large gap between the upper and lower bounds, as mentioned by \citet{manoharan2024computing}, it is natural to investigate the following questions:

\begin{center}
\begin{description}
    \item[(Q1)] Is it possible to narrow or close the gap between the upper bound $\widetilde{O}(n^{2/3} + \sqrt{n h_{st}} + D)$ and the lower bound $\widetilde{\Omega}\left(\sqrt{n} + D\right)$?
    \item[(Q2)] Is it possible to reduce or eliminate the dependence on $h_{st}$?
\end{description}    
\end{center}

Beyond \emph{unweighted directed} graphs, these two questions also apply to other cases of the \rpath{} problem. \citet{manoharan2024computing} also studied the \emph{Approximate Replacement Paths (\apxrpath{})} problem. In this setting, given an $s$-$t$ shortest path $P$, the goal is to compute, for every edge $e$ in $P$, an $(1+\epsilon)$-approximation of the shortest-path distance from $s$ to $t$ avoiding $e$. \citeauthor*{manoharan2024computing} showed that \apxrpath{} on \emph{weighted directed graphs} also admit the same upper bound $\widetilde{O}(n^{2/3} + \sqrt{n h_{st}} + D)$ and lower bound $\widetilde{\Omega}\left(\sqrt{n} + D\right)$ for any constant $\epsilon \in (0, 1)$. 


\subsection{Our Contribution}
We answer both questions (Q1) and (Q2) affirmatively. For \emph{unweighted directed} graphs, we establish the \emph{tight} randomized round complexity bound for the \rpath{} problem as $\widetilde{\Theta}(n^{2/3} + D)$ by showing matching upper and lower bounds. Throughout the paper, we say that an algorithm succeeds \emph{with high probability} if it succeeds with probability $1 - 1/\poly(n)$.

\begin{restatable}[Upper bound for unweighted directed \rpath{}]{thm}{mainUB}
    \label{thm:main_UB}
    There exists an $\widetilde{O}(n^{2/3}+D)$-round randomized algorithm that solves \rpath{} in unweighted directed graphs with high probability.
\end{restatable}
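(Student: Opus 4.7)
The plan is to improve the Manoharan–Ramachandran upper bound $\widetilde{O}(n^{2/3} + \sqrt{n h_{st}} + D)$ to $\widetilde{O}(n^{2/3}+D)$ via a two-regime strategy split by a threshold $h = \Theta(n^{2/3}\log n)$ on the length of the replacement path. Recall the canonical first-detour structure: for any replacement path $Q_i$ avoiding $e_i = (p_i,p_{i+1})$, there is a maximum index $j\le i$ with $p_j\in Q_i$, followed by an edge $(p_j,v)\in E$ with $v\ne p_{j+1}$, and a tail from $v$ to $t$ in $G - e_i$. Thus $|Q_i| = \min_{(p_j,v)} \bigl(j + 1 + d_{G-e_i}(v,t)\bigr)$, where $(p_j,v)$ ranges over detour edges leaving $P$, so it suffices to compute these detour-distance contributions for every $e_i$.

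For the short regime, I run a reverse hop-bounded BFS of depth $h$ rooted at $t$ on the reversed graph; in $\widetilde{O}(h+D)=\widetilde{O}(n^{2/3}+D)$ rounds every vertex $v$ with $d_G(v,t)\le h$ learns $d_G(v,t)$ together with meta-data identifying the first edge of $P$ used by its recorded shortest path. To deliver these values to the edges of $P$, I partition $P$ into $\widetilde{\Theta}(n^{1/3})$ super-segments of hop-length $\widetilde{O}(h_{st}/n^{1/3})$, compute prefix minima locally inside each segment, and glue across segments by pipelining $\widetilde{O}(n^{1/3})$ partial minima along a global BFS tree of depth $D$; the total aggregation cost is $\widetilde{O}(n^{2/3}+D)$. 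A small auxiliary BFS in the subgraph $G - P$ patches the case where a detour vertex's canonical shortest path to $t$ crosses the forbidden edge, ensuring each $e_i$ learns the correct minimum over short detours.

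For the long regime, I sample a set $S$ of $\widetilde{\Theta}(n^{1/3})$ pivot vertices uniformly at random so that every directed path of length $>h$ contains a pivot with high probability. Pipelined multi-source BFS (forward from $s$ to the pivots and backward from $t$ to the pivots) in $\widetilde{O}(|S|+D)$ rounds gives, for every $\pi\in S$, the values $d_G(s,\pi)$, $d_G(\pi,t)$, and the $P$-edges used by the respective shortest paths. Aggregating along the BFS tree, each $e_i$ computes $\min_{\pi\in S:\, \text{paths avoid } e_i} \bigl(d_G(s,\pi) + d_G(\pi,t)\bigr)$, and the final replacement distance for $e_i$ is the minimum of the short- and long-regime values.

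The main obstacle will be establishing that the union of the two regimes exactly recovers every $|Q_i|$. The delicate case is a long replacement path routed through a pivot $\pi$ whose canonical $s\to\pi$ or $\pi\to t$ shortest path happens to use $e_i$: here one must show that either the necessary local detour around $e_i$ within the pivot-routed path is itself short enough to be captured by the short-regime BFS, or that substituting a different pivot $\pi'\in S$ on $Q_i$ yields the same length. Producing a clean structural decomposition lemma to this effect — while simultaneously calibrating the threshold $h$, the pivot density $|S|$, and the aggregation scheme so that all costs remain within $\widetilde{O}(n^{2/3}+D)$ — is the technically most delicate step I anticipate.
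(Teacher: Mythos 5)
There is a genuine gap, and it sits exactly where you flag your ``delicate case'': both of your regimes compute the wrong base quantities, and the structural lemma you hope will reconcile them is the missing heart of the argument rather than a patchable detail. In your short regime, a single reverse BFS from $t$ gives $d_G(v,t)$ plus the first $P$-edge of one recorded shortest path, but the quantity your formula needs is $d_{G-e_i}(v,t)$, which depends on $e_i$; when the recorded $v$-$t$ path uses $e_i$, the correct tail may re-enter $P$ at some $v_l$ with $l\ge i+1$ and ride $P$ to $t$, and a ``small auxiliary BFS in $G-P$'' from $t$ cannot see such tails. What is actually needed is, for every vertex, the hop-bounded distances in $G\setminus P$ to \emph{every possible re-entry vertex} of $P$, i.e.\ a simultaneous $\zeta$-hop backward BFS from all $h_{st}$ vertices of $P$; the paper's key idea here is that congestion is avoided by having each vertex forward, in each round, only the BFS originating from the furthest (largest-index) vertex of $P$, after which a $\zeta$-round dynamic-programming sweep along $P$ assembles the per-edge minima. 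Your prefix-minima/super-segment aggregation addresses only how to move values along $P$, not how to compute these per-(start,end) detour distances in the first place.

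In your long regime the same issue recurs in a sharper form: you aggregate $d_G(s,\pi)+d_G(\pi,t)$ over pivots whose canonical shortest paths avoid $e_i$, which yields only an upper bound on $|st\diamond e_i|$. The optimal long-detour replacement path through a pivot generally does \emph{not} decompose into globally shortest $s\to\pi$ and $\pi\to t$ paths; it decomposes into $s\pi \diamond P[v_{i+1},t]$ and $\pi t \diamond P[s,v_i]$, i.e.\ shortest paths avoiding the appropriate suffix/prefix of $P$, and these must be computed for every pair (landmark, path vertex). There is no guarantee that ``either the local detour around $e_i$ is short enough for the short regime or another pivot gives the same length,'' and no such lemma is proved; the paper instead computes the $P$-avoiding quantities directly, via landmark-to-landmark distances in $G\setminus P$ (hop-bounded multi-source BFS plus an $\widetilde{O}(n^{2/3})$-message broadcast), segment-wise pipelining of $\min_{u\le_P v}\{|su|+|ul|_{G\setminus P}\}$ along checkpoints, and a broadcast of the $O(n^{2/3})$ checkpoint values. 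Until you either prove your substitution lemma or replace the plain $d_G(s,\pi),d_G(\pi,t)$ data with these suffix/prefix-avoiding distances (and give a congestion-feasible way to deliver the per-$(\pi,e_i)$ information), the proposal does not establish exactness, so the theorem is not proved.
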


Our lower bound applies even to the \emph{Second Simple Shortest Path (\secsisp{})} problem, which asks only for the smallest replacement path length, over all choices of $e$ in $P$. In other words, the goal is to find a shortest path from $s$ to $t$ that avoids at least one edge in the given shortest path $P$.

\begin{restatable}[Lower bound for unweighted directed \rpath{}]{thm}{mainLB}
    \label{thm:main_LB}
    Any algorithm that solves \rpath{} or \secsisp{} with constant probability in unweighted directed graphs requires $\widetilde{\Omega}(n^{2/3}+D)$ rounds.
\end{restatable}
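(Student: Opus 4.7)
The $\widetilde{\Omega}(D)$ portion is immediate: one can produce two graphs whose \secsisp{} answers differ only because of an edge at distance $\Theta(D)$ from $s$, and information about that edge cannot reach $s$ in fewer than $D$ rounds. So the real task is the $\widetilde{\Omega}(n^{2/3})$ bound, and I will prove it already for \secsisp{}; since the \secsisp{} answer is the minimum over $e \in P$ of the \rpath{} output, any algorithm solving \rpath{} also solves \secsisp{}, so the bound transfers.

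The plan is a two-party communication complexity reduction from set disjointness $\mathrm{DISJ}_N$ with $N = \Theta(n^{4/3})$, whose randomized communication complexity is $\Omega(n^{4/3})$. I will design a family of $\Theta(n)$-vertex unweighted directed graphs $G(X,Y)$ together with a distinguished source $s$, sink $t$, and $s$-$t$ shortest path $P$, such that the length of the second simple $s$-$t$ path in $G(X,Y)$ strictly separates the ``yes'' and ``no'' cases of $\mathrm{DISJ}_N$. The edge set of $G(X,Y)$ will partition into an ``Alice'' part depending only on $X$ and a ``Bob'' part depending only on $Y$, with only $\widetilde{O}(n^{2/3})$ edges crossing the interface. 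The standard $\CONGEST$-to-communication simulation then shows that $T$ rounds of $\CONGEST$ can be simulated by Alice and Bob with $\widetilde{O}(T \cdot n^{2/3})$ bits of communication, forcing $T = \widetilde{\Omega}(n^{4/3} / n^{2/3}) = \widetilde{\Omega}(n^{2/3})$.

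For the construction, the natural template is to take $P$ as a long directed path of length $L = \Theta(n)$ and attach $k = \Theta(n^{2/3})$ ``ladder gadgets'' along it. Each ladder is a directed bipartite gadget of size $\Theta(k) \times \Theta(k)$: Alice's side carries, as on/off directed edges, the indicator of $\Theta(k^2)$ coordinates of $X$, and Bob's side does the same for the corresponding coordinates of $Y$. The ladders are wired so that an alternative $s$-$t$ walk of length $L$ (i.e.\ tying $|P|$) exists if and only if there is some coordinate $(i,j)$ on which $X$ and $Y$ both evaluate to $1$, that is, $X \cap Y \neq \emptyset$. Total vertex count is $\Theta(k^2) = \Theta(n^{4/3})$ in the ladders plus $\Theta(n)$ in $P$, so after rebalancing (using $k = \Theta(n^{2/3})$ ladders of size $\Theta(k) \times \Theta(k)$ that total $\Theta(n)$ vertices and $\Theta(n^{4/3})$ edges, with only the $\Theta(k)$ seam vertices of each ladder lying on the Alice/Bob interface), the interface has $\widetilde{O}(k) = \widetilde{O}(n^{2/3})$ edges as required.

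The main obstacle will be calibrating the unweighted directed gadget so that three properties hold together: (i) the designated $P$ really is a shortest $s$-$t$ path and is not tied or beaten by an always-present alternate route; (ii) an intersection $(i,j) \in X \cap Y$ opens up exactly one $s$-$t$ walk whose length equals $|P|$ and does not accidentally create one of length less than $|P|$, which would contradict (i); and (iii) after all padding chains needed to equalize lengths in the unweighted directed setting, the vertex budget stays $\Theta(n)$ and the cut stays $\widetilde{O}(n^{2/3})$. The key combinatorial leverage over the $\widetilde{\Omega}(\sqrt{n})$ bound of \citet{manoharan2024computing}, whose reduction effectively packs $\Theta(n)$ DISJ coordinates through an $\widetilde{O}(\sqrt{n})$ cut, is to use the two-dimensional structure of each ladder to amortize the seam cost across both coordinates of each tested pair, so that $\Theta(n^{4/3})$ coordinates fit through a cut of width $\widetilde{O}(n^{2/3})$; getting the arithmetic and the directions right in purely unweighted directed graphs is what I expect to occupy most of the proof.
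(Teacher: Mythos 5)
Your reduction template is not the paper's, and as written it has two concrete gaps that are exactly where the difficulty lies. First, the entire weight of your argument rests on the unconstructed ladder gadget: you need to pack $\Theta(n^{4/3})$ disjointness coordinates into a $\Theta(n)$-vertex unweighted directed graph so that the \secsisp{} value decides intersection, with an Alice/Bob interface of only $\widetilde{O}(n^{2/3})$ edges. Your own parameters do not add up ($k=\Theta(n^{2/3})$ ladders of size $\Theta(k)\times\Theta(k)$ cannot total $\Theta(n)$ vertices; rebalancing to ladders of size $\Theta(n^{1/3})\times\Theta(n^{1/3})$ fixes the vertex count but then each ladder must test $\Theta(n^{2/3})$ coordinates), and there is a structural obstruction to the sketch: once a candidate detour crosses the interface, the only state it carries is the identity of the crossing edge plus its accumulated length, so within a ladder of $O(n^{1/3})$ vertices you cannot realize the $\Theta(n^{2/3})$ distinct ``length signatures'' needed to bind Alice's edge for coordinate $(i,j)$ to Bob's edge for the \emph{same} $(i,j)$; and any cross-talk combination (Alice's edge for one coordinate with Bob's edge for another) must be guaranteed never to undershoot the target length, which is precisely the calibration you defer. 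The paper avoids this entirely by encoding only $\Theta(n^{2/3})$ bits on Bob's side — a $k\times k$ orientation matrix with $k=\Theta(n^{1/3})$ — with a bijection $\phi$ between the $k^2$ edges of the $s$-$t$ path and the $k^2$ bipartite edges, and it kills cross-talk by giving the exit/return paths $\Qc^\ell,\Rc^\ell$ doubled length $2k^2$, so mismatched exit/return indices cost strictly more (\Cref{lem:GkdppMx Rpath-dir correspondence}).

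Second, with only $n^{2/3}$ bits a static-cut simulation would give nothing better than $\widetilde{\Omega}(n^{1/3})$, so the paper's $\widetilde{\Omega}(n^{2/3})$ comes from a genuinely different mechanism: the simulation lemma of \citet{das2011distributed} over $\Theta(n^{1/3})$ paths of length $\Theta(n^{2/3})$, where an algorithm running in fewer than $(d^p-1)/2$ rounds cannot move information along the paths and is forced through an $O(dpB)=\widetilde{O}(1)$-bit-per-round tree bottleneck; the tree is oriented so that no directed $s$-$t$ walk can enter it, which simultaneously gives hard instances of diameter $2p+2=\Theta(\log n)$. Your proposal never controls the diameter: with $P$ of length $\Theta(n)$ and no overlay, your hard family has $D=\mathrm{poly}(n)$, and proving $\widetilde{\Omega}(n^{2/3})$ rounds on instances where $D\gg n^{2/3}$ does not establish the stated bound $\widetilde{\Omega}(n^{2/3}+D)$ (an $\widetilde{O}(\sqrt{n}+D)$-round algorithm would not be excluded). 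If your gadget existed, you could likely graft on an oriented low-depth tree and keep the static-cut count, so the diameter issue is repairable in principle — but the missing gadget, with its vertex-budget versus length-signature tension and undershoot control, is the real gap, and it is precisely what the paper's asymmetric $\Gkdpp$ construction is engineered to sidestep.
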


\Cref{thm:main_LB} is a simplified statement of our lower bound. In our actual lower bound proof, we demonstrate an infinite sequence of graphs with $D = \Theta(\log n)$ in which \secsisp{} requires $\widetilde{\Omega}(n^{2/3})$ rounds to solve. Our lower bound is established using the framework of the work~\cite{das2011distributed}. This framework has been extremely useful in proving tight lower bounds of $\widetilde{\Omega}(\sqrt{n})$ for most problems in the complexity class $\widetilde{\Theta}(\sqrt{n})$. 

Our results place unweighted directed \rpath{} and \secsisp{} among the rare problems whose complexity is strictly between $o(n)$ and $\omega(\sqrt{n})$. While this result may seem intuitive for \rpath{}, given that it involves reporting multiple distances, it is more surprising for \secsisp{}, which returns only a single distance value.

Similar intermediate complexities arise in the area of \emph{distributed subgraph finding}. In particular, the problem of listing all $k$-cliques has a tight round complexity bound of $\widetilde{\Theta}(n^{1-2/k})$~\cite{censor2021tight}. However, these subgraph finding problems differ fundamentally from \rpath{} and other shortest path problems: Distributed subgraph finding is inherently \emph{local}, meaning it can be solved in $O(1)$ rounds given unlimited bandwidth. 

It was previously known that the techniques from the work~\cite{das2011distributed} can be used to obtain lower bounds of complexity other than $\Omega(\sqrt{n})$. Specifically, computing $k$-source shortest paths in an unweighted directed graph is known to admit a lower bound of $\widetilde{\Omega}(\sqrt{nk} + D)$, which spans the spectrum between $\sqrt{n}$ and $n$ as the number $k$ of sources varies~\cite{manoharan2024distributed}. The lower bound is known to be \emph{tight} when $k \geq n^{1/3}$~\cite{manoharan2024computingMinWeightCycleCONGEST}.


Our upper bound extends to $(1+\epsilon)$-\apxrpath{} for \emph{weighted directed} graphs. Throughout the paper, we assume that in weighted graphs, all edge weights are positive integers bounded by some polynomial in $n$.

\begin{restatable}[Upper bound for weighted directed $(1+\epsilon)$-\apxrpath{}]{thm}{apxUB}
    \label{thm:apx_UB}
    For any constant $\epsilon \in (0,1)$, there exists an $\widetilde{O}(n^{2/3}+D)$-round randomized algorithm that solves $(1+\epsilon)$-\apxrpath{} in weighted directed graphs with high probability.
\end{restatable}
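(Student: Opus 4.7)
The plan is to extend the algorithm underlying \Cref{thm:main_UB} to the weighted $(1+\epsilon)$-approximate setting by replacing each exact unweighted shortest-path subroutine with a weighted $(1+\epsilon')$-approximate counterpart, where $\epsilon' = \Theta(\epsilon)$ is chosen so that composing a constant number of $(1+\epsilon')$-approximations still yields a $(1+\epsilon)$-approximation overall. The structural backbone should remain intact: sample a random pivot set $B \subseteq V$ with $|B| = \widetilde{\Theta}(n^{1/3})$ so that every sufficiently long replacement path contains some vertex of $B$ with high probability, handle short replacement paths via a hop-limited local search near $P$, and combine these estimates with distances through pivots.

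The ingredients I would plug in are standard $(1+\epsilon)$-approximate weighted directed shortest-path primitives in $\CONGEST$: a single-source version running in $\widetilde{O}(\sqrt{n} + D)$ rounds and its $k$-source extension running in $\widetilde{O}(\sqrt{nk} + D)$ rounds. Choosing $k = \widetilde{\Theta}(n^{1/3})$ gives approximate distances to and from every pivot in $\widetilde{O}(n^{2/3} + D)$ rounds, matching the target budget. For short replacement paths, whose hop-lengths can be bounded by $\widetilde{O}(n^{2/3})$ after edge-weight rescaling and rounding to powers of $(1+\epsilon')$, a weighted $h$-hop Bellman--Ford with $h = \widetilde{O}(n^{2/3})$ restricted to a neighborhood of $P$ suffices; the scaling/hopset machinery required here is off-the-shelf.

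The main technical obstacle is propagating the approximation through the \emph{avoid edge $e_i$} constraint. For a long replacement path routed through a pivot $b$, the estimates $\widetilde d(s,b)$ and $\widetilde d(b,t)$ are computed in the full graph $G$, so their sum need not correspond to a walk avoiding $e_i$. I would resolve this in the same spirit as in \Cref{thm:main_UB}: each pivot $b$ records, for its approximate shortest $s$-to-$b$ and $b$-to-$t$ paths, succinct information about which edges of $P$ they use (e.g.\ the smallest and largest indices of such edges along $P$, together with an alternate estimate that detours around them). Because $|B| = \widetilde{O}(n^{1/3})$, these summaries can be aggregated to a single vertex in $\widetilde{O}(n^{1/3} + D)$ rounds using a BFS tree of $G$, after which the best valid combined estimate per $e_i$ is computed centrally and disseminated, independent of $h_{st}$. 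Composing the $(1+\epsilon')$ guarantees of the two halves with the $(1+\epsilon')$ loss from edge-weight rounding and choosing $\epsilon' = \Theta(\epsilon)$ delivers the target $(1+\epsilon)$-approximation in the claimed round complexity.
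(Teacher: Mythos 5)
Your high-level skeleton (short/long split, a pivot set of size $\widetilde{\Theta}(n^{1/3})$, approximate weighted primitives with $\epsilon'=\Theta(\epsilon)$) matches the paper, but two of your key steps have genuine gaps. First, your treatment of the avoid-$e_i$ constraint for long detours does not work as described. You compute $\widetilde d(s,b)$ and $\widetilde d(b,t)$ in the \emph{full} graph $G$ and then try to repair validity by having each pivot record which edges of $P$ its one approximate path uses, ``together with an alternate estimate that detours around them.'' A single alternate value per pivot cannot certify a $(1+\epsilon)$-approximation of $d_{G\setminus e_i}(s,b)$ for \emph{every} $i$: the true replacement path for $e_i$ may follow $P$ up to some $v_j$ with $j<i$ and then detour, i.e.\ it avoids only $e_i$, whereas your alternate estimate avoids \emph{all} $P$-edges touched by the recorded path and can be arbitrarily longer (or nonexistent) than the quantity you need. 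Repairing this is essentially a replacement-path computation in itself. The paper avoids the issue structurally: the detour portion never uses edges of $P$, so it computes hop-limited approximate distances in $G\setminus P$ (via \Cref{nanongkai2014distributed} plugged into \Cref{lem : land to land,lemma: vilj path}) and then forms $|sl \diamond P[v_i,t]| = \min_{u\le_P v_i}\{|su| + |ul|_{G\setminus P}\}$ by pipelining within $\Theta(n^{2/3})$-length segments and broadcasting $O(n^{2/3})$ per-segment minima (\Cref{mincheckpoint,lem : s to land}); no ``which $P$-edges did my path use'' bookkeeping is ever needed.

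Second, your short-detour step (``weighted $h$-hop Bellman--Ford with $h=\widetilde O(n^{2/3})$ restricted to a neighborhood of $P$; the machinery is off-the-shelf'') glosses over exactly the two difficulties the paper has to solve. The sources are the (up to $\Theta(n)$) vertices of $P$, and hop-limited weighted multi-source explorations from all of them can congest a single edge, giving a round complexity with an additive $h_{st}$ term --- the very dependence the theorem is supposed to eliminate. The paper's fix is to round edge weights of $G\setminus P$ at $O(\log n)$ scales (unit $\mu_d=\epsilon d/2\zeta$) so that the problem becomes \emph{unweighted}, and then reuse the congestion-free backward BFS of \Cref{Thm: Backward BFS Works}, in which each vertex forwards only the BFS originating from the furthest vertex of $P$ (\Cref{lem:rounding}). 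Moreover, ``restricted to a neighborhood of $P$'' implicitly assumes locality along $P$, which fails in the weighted setting: a short (low-hop) detour, e.g.\ a single heavy edge $(s,t)$, can connect vertices arbitrarily far apart along $P$, so after the hop-limited phase you still need the interval decomposition into $O(n^{1/3})$ segments with separate handling of nearby and distant detours plus an $O(n^{2/3})$-message broadcast (\Cref{lem:near,lem:far1,lem:far2}). Finally, your appeal to a $k$-source $(1+\epsilon)$ weighted \emph{directed} SSSP in $\widetilde O(\sqrt{nk}+D)$ rounds is not an off-the-shelf primitive; the paper only relies on the hop-limited $k$-source approximation of \Cref{nanongkai2014distributed}, which suffices precisely because the landmark sampling bounds the hop length of each subpath.
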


Our lower bound proof inherently does not apply to approximation algorithms, so closing the gap between our upper bound of $\widetilde{O}(n^{2/3} + D)$ and the lower bound of $\widetilde{\Omega}(\sqrt{n} + D)$ by \citet{manoharan2024computing} remains an intriguing open question.

We summarize our new results, along with a comparison with previous work, in \Cref{table:main}.

\begin{table}[ht!]
\centering
\begin{tabular}{lllll}
 & \multicolumn{2}{l}{\textbf{Upper bounds}} & \multicolumn{2}{l}{\textbf{Lower bounds}} \\ \hline
\multicolumn{1}{|l|}{\bf\rpath{}} & $\widetilde{O}(n^{2/3} + \sqrt{n h_{st}} + D)$ & \multicolumn{1}{l|}{\cite{manoharan2024computing}} & $\widetilde{\Omega}(\sqrt{n}+D)$ & \multicolumn{1}{l|}{\cite{manoharan2024computing}} \\ \cline{2-5} 
\multicolumn{1}{|l|}{\small Unweighted directed graphs} & $\widetilde{O}(n^{2/3}+D)$ & \multicolumn{1}{l|}{\Cref{thm:main_UB}} & $\widetilde{\Omega}(n^{2/3}+D)$ & \multicolumn{1}{l|}{\Cref{thm:main_LB}} \\ \hline
\multicolumn{1}{|l|}{\bf \apxrpath{}} & $\widetilde{O}(n^{2/3} + \sqrt{n h_{st}} + D)$ & \multicolumn{1}{l|}{\cite{manoharan2024computing}} & $\widetilde{\Omega}(\sqrt{n}+D)$ & \multicolumn{1}{l|}{\cite{manoharan2024computing}} \\ \cline{2-3}
\multicolumn{1}{|l|}{\small Weighted directed graphs} & $\widetilde{O}(n^{2/3}+D)$ & \multicolumn{1}{l|}{\Cref{thm:apx_UB}} &  & \multicolumn{1}{l|}{} \\ \hline
\end{tabular}
\caption{Summary of our results and comparison with previous work.}\label{table:main}
\end{table}

\paragraph{Remark.} While we establish that $\widetilde{\Theta}(n^{2/3} + D)$ is a tight bound for the unweighted undirected \rpath{} problem in terms of $n$ and $D$, the trivial $O(h_{st} \cdot \Tsssp)$-round algorithm---which computes SSSP in $G \setminus e$ for each edge $e$ in $P$---is faster than our $\widetilde{O}(n^{2/3} + D)$-round algorithm when $h_{st}$ is small. Here $\Tsssp = \widetilde{O}(\sqrt{n}+D) + n^{(2/5) + o(1)} \cdot O(D^{2/5})$ is the round complexity of the SSSP problem in unweighted directed graphs~\cite{cao2023parallel}. Our lower bound construction requires $h_{st} = \Theta(n^{2/3})$, whereas the previous $\widetilde{\Omega}(\sqrt{n}+D)$ lower bound of~\citet{manoharan2024computing} applies even for constant $h_{st}$.

\subsection{Additional Related Work}

The \rpath{} problem has a long line of research in the centralized setting. Hershberger and Suri \cite{hershberger2001vickrey} first introduced this problem and solved it for weighted undirected graphs in $O(m+n \log n)$ time, where $m = |E|$ is the number of edges in the input graph $G=(V,E)$. Later, Hershberger, Suri, and Bhosle~\cite{hershberger2003difficulty} provided a lower bound of $\Omega(m \sqrt{n})$ time in the ``path comparison model'' for unweighted directed graphs with $m=O(n \sqrt{n})$. Roditty and Zwick~\cite{Roditty2012a} designed a Monte Carlo algorithm with running time $\widetilde{O}(m \sqrt{n})$ for unweighted directed graphs, matching the lower bound. The classical algorithm by Yen~\cite{yen1971finding} solves the weighted directed \rpath{} problem in $\widetilde{O}(mn)$ time, which is \emph{conditionally} optimal assuming the \emph{MWC conjecture} proposed by \citet{AgarwalR18}.

A more general version of the \rpath{} problem considers all possible targets $t \in V\setminus\{s\}$ in the graph after fixing a source $s \in V$. This is known as the \emph{Single-Source Replacement Paths (SSRP)} Problem. For the SSRP problem, Chechik and Cohen~\cite{chechik2019near} and Gupta, Jain, and Modi~\cite{gupta2020multiple} presented $\widetilde{O}(m \sqrt{n}+n^2)$ time randomized algorithms for unweighted undirected graphs. Chechik, and Magen~\cite{ChechikM20} presented a randomized algorithm with the same runtime for unweighted directed graphs. For unweighted undirected graphs, Bil\`{o}, Cohen, Friedrich, and Schirneck~\cite{bilo2021near} designed the first deterministic algorithm with the same runtime. Later, Dey and Gupta \cite{DeyGupta22} improved the runtime to $\widetilde{O}(m \sqrt{n}+|R|)$, where $|R|$ is the output size. 

The \rpath{} problem in weighted directed graphs is known to belong to the fine-grained complexity classes of $\widetilde{\Theta}(n^3)$~\cite{williams2018subcubic} and $\widetilde{\Theta}(mn)$~\cite{AgarwalR18}.
Many of the problems in these complexity classes were known to have an $\widetilde{\Omega}(n)$ lower bound in the $\CONGEST$ model of distributed computing: APSP~\cite{nanongkai2014distributed}, diameter and radius~\cite{AbboudCK16,AnconaCDEW20}, and minimum weight cycle~\cite{manoharan2024computingMinWeightCycleCONGEST}.

In the area of fault-tolerant distributed computing, distributed algorithms for fault-tolerant distance preservers~\cite{BodwinP23, Parter20} and fault-tolerant spanners~\cite{DinitzR20, Parter22} have been shown in the $\CONGEST$ model of distributed computing.






\subsection{Paper Organization}
We begin with preliminaries in \Cref{sec:prelim}, followed by a technical overview of our results in \Cref{sec:overview}. Our \rpath{} algorithm for unweighted directed graphs is presented in \Cref{sec:short,sec:long}, where we consider replacement paths with short detours in \Cref{sec:short} and those with long detours in \Cref{sec:long}.
We establish our lower bound for unweighted directed graphs in \Cref{sec:lowerbound}.
We present our $(1+\epsilon)$-\apxrpath{} algorithm for weighted directed graphs in \Cref{sec:wapprox}.
 We conclude with open problems in \Cref{sec:conclusions}.

\section{Preliminaries}
\label{sec:prelim}

We start with introducing the basic notations used throughout the paper. 
\yanyu{set itemsep at the start. What do you mean by"(note: not allowed for camera-ready)"}
\yijun{While ok for arxiv, I think this is not allowed in the camera-ready version (breaking the given ACM style rule)}
\yanyu{check the 2,3,4 bullet}
\begin{itemize}
    \item $G = (V,E)$ is the input graph. 
    \item $s \in V$ and $t \in V$ are the \emph{source} and \emph{target}, respectively.
    \item $P = \stpath$
    is the given shortest path from $s$ to $t$ in $G$, where $h_{st}$ is the number of edges in $P$. 
    \item The \emph{length} of a path \( Q \), denoted as \( |Q| \), is the sum of the weights of the edges in \( Q \). If the graph is unweighted, $|Q|$ is simply the number of edges in $Q$. We emphasize that $h_{st}=|P|$ for unweighted graphs.
    \item $uv \diamond e$ is any shortest path from vertex $u$ to vertex $v$ avoiding edge $e$. More generally, for any path $R$, $uv \diamond R$ is any shortest path from vertex $u$ to vertex $v$ avoiding all the edges in $R$.
    \item $n = |V|$ is the number of vertices in the graph $G$.
    \item $m = |E|$ is the number of edges in the graph $G$.
    \item $D$ is the \emph{unweighted undirected} diameter of $G$.
    \item $\zeta = n^{2/3}$ is the threshold between a short detour and a long detour.
\end{itemize}
Given the notations, we define \rpath{}, \apxrpath{}, and \secsisp{} formally and succinctly.

\begin{definition}[Replacement Paths (\rpath{})]
For $0\leq i<h_{st}$, the \rpath{} problem requires each vertex $v_i$ in $P$ to compute $|st \diamond (v_i,v_{i+1})|$.
\end{definition}

\yanyu{mentioned input explicitly, should the id of s and t be given to all nodes? although this can be lifted by broadcasting.
In the previous paper, they assume shortest path Pst between the vertices s, t is part of the input and that each vertex in the network knows the identities of s and t, and the identities of vertices on Pst.
If we use this assumption, we can also omit the basic knowledge paragraph below.
However, now i realized that they need $h_st+D$ for preprocessing which we cannot afford.
}


\yanyu{other things to talk about the definition:
1: why this input is reasonable?
2: why requiring to output length; 
3: why output at the endpoint?
How are 2 and 3 useful in other context like fault-tolerant?
}

\begin{definition}[Approximate Replacement Paths (\apxrpath{})]
The goal of the $(1+\epsilon)$-\apxrpath{} problem is to let the first endpoint of each edge $e$ in $P$ compute a value $x$ such that $|st \diamond e| \leq x \leq (1+\epsilon)\cdot|st \diamond e|$.
\end{definition}

\begin{definition}[Second Simple Shortest Path (\secsisp{})]
The goal of the \secsisp{} problem is to let all vertices in $P$ compute the minimum value of $|st \diamond e|$ over all edges $e$ in $P$, which is the length of the second simple shortest path.
\end{definition}

\paragraph{Detour.} To search for a shortest replacement path for $e=(v_i, v_{i+1})$ (i.e., $st \diamond e$), it suffices to consider replacement paths of the form $(s, \ldots, v_j, \ldots, v_l, \ldots, t)$ with $j \leq i$ and $l \geq i+1$ such that both $(s, \ldots, v_j)$ and $(v_l, \ldots, t)$ are subpaths of $P$, and $(v_j, \ldots, v_l)$ does not share any edge with $P$.
We call $(v_j, \ldots, v_l)$ the \emph{detour} of the replacement path. If the detour has at most $\zeta$ \emph{hops}, then it is a \emph{short} detour. Otherwise, it is a \emph{long} detour. 

A replacement path with a short detour is called a short-detour replacement path. Similarly, a replacement path with a long detour is called a long-detour replacement path. For both \Cref{thm:main_UB} and \Cref{thm:apx_UB}, we set $\zeta = n^{2/3}$ to minimize the overall round complexity of our algorithms.  
The idea of considering replacement paths with short and long detours separately is from \citet{Roditty2012a}.

\paragraph{The $\CONGEST$ model.} In the $\CONGEST$ model of distributed computing, a network is represented by a graph \( G = (V, E) \), where each vertex corresponds to a computational device, and each edge corresponds to a communication link between. Each vertex $v$ has a unique $O(\log n)$-bit identifier \( \ID(v) \), initially known only to $v$ and its neighbors $N(v)$. The communication proceeds in \emph{synchronous} rounds. In each round, each vertex can send a (possibly distinct) message of  \( O(\log n) \) bits to each of its neighbors. If $G$ is a weighted graph, then we assume that the edge weights are positive integers in $\poly(n)$, ensuring that each edge weight fits within an $O(\log n)$-bit message. Each vertex is allowed to perform unlimited local computation.
the primary measure of complexity in the $\CONGEST$ model is the number of \emph{rounds} required for an algorithm to complete.
The following useful tool is well-known.


\begin{lemma}[Broadcast~\cite{peleg2000distributed}]\label{LP}
     Consider the routing task where each vertex $v$ wants to broadcast $m_v$ messages of $O(\log n)$ bits to all the vertices in the network. The task can be done in $O(M+D)$ rounds deterministically, where $M = \sum_{v \in V} m_v$ is the total number of messages.
\end{lemma}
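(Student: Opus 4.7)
The plan is to use a BFS tree with pipelining in two phases, convergecast followed by broadcast. First, construct a BFS tree $T$ rooted at an arbitrary vertex $r$ in $O(D)$ rounds using standard flood-and-echo (with leader election via, e.g., smallest identifier, piggybacked on the flood). Second, aggregate all $M$ messages at $r$ via pipelined convergecast up $T$. Third, broadcast the collected messages from $r$ down $T$ to every vertex. I will argue that each pipelining phase runs in $O(M+D)$ rounds, so the total cost is $O(M+D)$.

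The downward broadcast phase is the easier of the two. Once $r$ holds all $M$ messages, fix any total order on them; every internal vertex forwards messages to its children in that order as soon as they arrive from its parent. A straightforward induction on depth shows that the $k$-th message reaches every vertex by round $k+D$, so this phase finishes within $M+D$ rounds.

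For the upward convergecast, I would use a priority-queue pipelining schedule. Each non-root vertex maintains a queue of messages to forward, initialized with its own $m_v$ messages and appended to whenever a child forwards something up; in each round, the vertex transmits the message of lowest global rank in its queue (keyed lexicographically, say by source identifier and then local index) to its parent. The target bound is that the rank-$k$ message reaches $r$ by round $k+D$, so that the last message arrives by round $M+D$.

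The main obstacle is the convergecast analysis, because multiple subtrees of an internal vertex can collectively push many low-rank messages upward, creating congestion that a naive pairwise argument does not absorb. My plan is to use a charging argument: every round in which the rank-$k$ message is stalled at some vertex $v$ is charged to the strictly lower-ranked head-of-queue message that crosses the edge above $v$ in that round, combined with a potential function that tracks how far each lower-rank message has progressed toward the root, to bound the total charge over $k$'s trajectory by $O(M)$. Should this bookkeeping prove unwieldy, a clean alternative is to appeal to a Leighton-Maggs-Rao-style scheduling result: the convergecast is a packet-routing task on $T$ with congestion $C\le M$ (each tree edge is crossed by at most the messages originating below it, and in total by at most $M$ messages) and dilation at most $D$, so the routing admits an $O(C+D)=O(M+D)$-round schedule that can be computed locally on a tree. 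Combining the $O(D)$-round BFS construction with the two $O(M+D)$-round pipelining phases yields the lemma.
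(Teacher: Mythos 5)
The paper does not prove this lemma at all---it is stated as a well-known tool and attributed to Peleg's textbook---and your BFS-tree construction followed by pipelined convergecast and pipelined downcast is exactly the standard argument behind that citation, so your proposal is correct in approach. The one step you leave hedged, the $O(M+D)$ bound for the upcast, is closed by the standard delay-sequence induction rather than any heavy machinery: if the rank-$k$ message reaches the root at time $T$, look at the last round it was blocked, then at the strictly smaller-ranked message sent instead, then at \emph{that} message's last earlier block, and so on; between consecutive blocks in this chain the relevant message advances every round, and the ranks strictly decrease, so the chain has length at most $k-1$ and one gets $T \leq D + k$, hence all $M$ messages arrive at the root by round $M+D$. (This also resolves the subtlety you worried about---a single low-rank message can in fact block a higher-rank one more than once, but the delay sequence only charges the \emph{last} block at each stage, so the bound survives.) Your fallback via congestion $C\le M$ and dilation $\le D$ on the tree is also sound, since greedy schedules on in-trees achieve $O(C+D)$, but it is more than is needed.
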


\paragraph{Initial knowledge.} We now clarify the problem specification, specifically the input available to the vertices and, in particular, their knowledge of $(P, s, t)$. The algorithms presented in \Cref{thm:main_UB} and \Cref{thm:apx_UB} operate under the following minimal assumptions:
\begin{itemize}
    \item For each edge $e = \{u, v\}$ in $P$, both endpoints $u$ and $v$ initially know that $e$ belongs to $P$.
    \item The source vertex $s$ initially knows that it is the source.
    \item The target vertex $t$ initially knows that it is the target.
\end{itemize}
In the following lemma, we present a simple $\widetilde{O}(\sqrt{n} + D)$-round randomized algorithm that enables all vertices in $P$ to acquire additional useful information about the problem instance $(P, s, t)$.

\begin{lemma}[Obtaining additional information]\label{lem:basic_tool}
There exists an $\widetilde{O}(\sqrt{n}+D)$-round randomized algorithm that lets each vertex $v_i$ in $P$ learn its index $i$, the distance from $s$ to $v_i$, and the distance from $v_i$ to $t$, with high probability.
\end{lemma}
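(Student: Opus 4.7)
The plan is to use a skeleton-sampling technique combined with the broadcast primitive of \Cref{LP}. I assume that each vertex of $P$ already knows which of its neighbors are its predecessor and successor on $P$, so the only difficulty is that $P$ may have up to $n$ hops, ruling out a direct hop-by-hop propagation from $s$ within the budget.

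The first step is to sample landmarks. Each vertex of $P$ independently flags itself as a \emph{skeleton} vertex with probability $p = \Theta((\log n)/\sqrt{n})$, and $s$ and $t$ are forced into the resulting set $S$. By Chernoff bounds and a union bound over the $O(n)$ subpaths of $P$, with high probability $|S| = \widetilde{O}(\sqrt{n})$ and every subpath of $P$ of length $\Theta(\sqrt{n}\log n)$ contains at least one skeleton vertex.

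The second step consists of two global broadcasts interleaved with a short local sweep. Each $v \in S$ broadcasts its identifier to the entire network, which by \Cref{LP} takes $\widetilde{O}(\sqrt{n}+D)$ rounds since $|S|=\widetilde{O}(\sqrt{n})$. In parallel, a bidirectional token-passing procedure runs along $P$ for $\Theta(\sqrt{n}\log n)$ rounds: each vertex forwards the identifier, hop-count, and weight-sum of the nearest skeleton it has yet heard from in each direction. There is no congestion since each edge of $P$ carries at most one live token per direction. After this sweep, each skeleton knows its immediate skeleton successor on $P$ together with the hop and weight distances along the connecting segment; a second invocation of \Cref{LP} broadcasts these $\widetilde{O}(\sqrt{n})$ metadata tuples in $\widetilde{O}(\sqrt{n}+D)$ rounds.

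At this point, every vertex of $P$ holds the full skeleton chain together with segment lengths, so it can locally compute, for each $v\in S$, the index of $v$ in $P$ and the weighted prefix $|sv|$ and suffix $|vt|$ along $P$. Combining with its own offset within its segment (known from the local sweep), each $v_i$ outputs $i$, $|sv_i|$, and $|v_it|$; these equal the true distances in $G$ because $P$ is a shortest path and hence so are all of its prefixes and suffixes. The total round complexity is $\widetilde{O}(\sqrt{n}+D)$. The only nontrivial step to justify carefully is the sampling guarantee on the maximum inter-skeleton gap, which is a routine Chernoff-plus-union-bound calculation; everything else is bookkeeping and direct appeals to \Cref{LP}.
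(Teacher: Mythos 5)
Your proposal is correct and follows essentially the same approach as the paper's proof: sample $\widetilde{O}(\sqrt{n})$ vertices of $P$, propagate information locally within the resulting $\widetilde{O}(\sqrt{n})$-hop segments, broadcast the sampled vertices' aggregated data via \Cref{LP}, and let each vertex infer its index and prefix/suffix distances locally. The minor variations (forcing $s,t$ into the sample, the extra preliminary broadcast of skeleton identifiers, and sampling probability $\Theta(\log n/\sqrt{n})$ instead of $1/\sqrt{n}$) do not change the argument.
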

\begin{proof}
Before presenting the proof, observe that the distances can be computed using \emph{weighted undirected} SSSP by assigning a weight of $\infty$ to all edges outside $P$. Similarly, the indices can be computed using \emph{unweighted directed} SSSP by redirecting all edges incident to $P$ but outside $P$ towards vertices within $P$. In the following discussion, we present a simple $\widetilde{O}(\sqrt{n} + D)$-round randomized algorithm that achieves the same objective.

The algorithm begins by independently sampling each vertex in $P$ with probability $1/\sqrt{n}$. For every two consecutive sampled vertices $v_j$ and $v_l$ in $P$, our first task is to let every vertex $v$ along the $v_j$-$v_l$ subpath of $P$ learn the following information:
\begin{itemize}
    \item $\ID(v_j)$ and $\ID(v_l)$.
    \item The number of hops and the distances from $v_j$ to $v$ and from $v$ to $v_l$ in $P$.
\end{itemize}
Since all edge weights are integers bounded by some polynomial in $n$, this information can be encoded using $O(\log n)$ bits.

This learning task can be accomplished by performing a BFS from each sampled vertex, restricted to $P$, with each BFS terminating when it reaches a sampled vertex. This can be done in $O(\sqrt{n} \log n)$ rounds with high probability, as the number of hops between any two consecutive sampled vertices in $P$ is $O(\sqrt{n} \log n)$ with high probability, due to a Chernoff bound.

Subsequently, using the algorithm from \Cref{LP}, all sampled vertices broadcast their learned information, which consists of $O(\log n)$ bits. This requires ${O}(\sqrt{n}+D)$ rounds with high probability, as the total number of sampled vertices is $O(\sqrt{n})$ with high probability, again by a Chernoff bound. After the broadcast, each vertex $v_i$ has enough information to infer its index $i$, its distance from $s$, and its distance to $t$.
\end{proof}

Since the round complexity of \Cref{lem:basic_tool} lies within the target bounds of both \Cref{thm:main_UB} and \Cref{thm:apx_UB}, in their proofs, we may assume that each vertex $v_i$ in $P$ initially possesses the following information: 
\begin{itemize}
    \item Its index $i$.
    \item Its distance from the source $s$.
    \item Its distance to the target $t$.
\end{itemize}

\paragraph{Remark.} In the previous work by \citet{manoharan2024computing}, it is assumed that each vertex initially knows the identifiers of the vertices in $P$ in sequence. This assumption is justified in their setting, as the round complexity of their algorithms includes a term of $O(h_{st})$, which suffices for all vertices in $P$ to broadcast their identifiers. In contrast, since our goal is to eliminate the dependence on $h_{st}$, we cannot adopt this assumption.

\section{Technical Overview}
\label{sec:overview}

In this section, we provide a technical overview of our proofs. For each result, we first examine the approach used in prior work and then discuss how our method achieves an improvement. 
In \Cref{subsec:UB}, we discuss the proof idea behind \Cref{thm:main_UB}. 
In \Cref{subsec:LB}, we discuss the proof idea behind \Cref{thm:main_LB}. 
In \Cref{subsec:APX}, we discuss the proof idea behind \Cref{thm:apx_UB}.




\subsection{Upper Bound}\label{subsec:UB}
We start with the \rpath{} problem in unweighted directed graphs. Following an existing approach in centralized algorithms~\cite{Roditty2012a}, \citet{manoharan2024computing} approached the problem by separately handling replacement paths with short and long detours, using a threshold $\zeta$ to distinguish between them. For short-detour replacement paths, they compute a $\zeta$-hop BFS simultaneously from each vertex in $P$, which can be done in $O(h_{st} + \zeta)$ rounds~\cite{peleg2000distributed}.

To handle long-detour replacement paths, a set of \emph{landmark} vertices $L$ is sampled randomly. From each landmark vertex, an $O\left(\frac{n}{|L| \log n}\right)$-hop BFS is performed in parallel. The key idea is that the landmark vertices partition any long detour into segments of length $O\left(\frac{n}{|L| \log n}\right)$ with high probability. The distance information collected from the $O\left(\frac{n}{|L| \log n}\right)$-hop BFS explorations suffices to reconstruct all the shortest replacement path lengths involving a long detour. The idea of using landmark vertices to handle long paths in the \rpath{} problem originates from~\citet{BernsteinK08}, and can be traced back to earlier work on related graph problems~\cite{ullman1990high}.

To facilitate distance computation, \citet{manoharan2024computing} let each vertex in $L$ and $P$ broadcast all the distances computed during the BFS. This broadcasting step is the most time-consuming part of the algorithm, requiring $O(|L|^2 + |L|h_{st} + D)$ rounds: There are  $O(|L|+ h_{st})$ vertices in $L$ and $P$, and each of them needs to broadcast its distance from the $|L|$ landmark vertices, so the total number of messages in the broadcasting step is  $O(|L|^2 + |L|h_{st})$. Finally, by propagating the learned information along the path $P$, the optimal replacement path distances for all edges in $P$ can be obtained in $O(h_{st})$ rounds. By properly selecting the sampling probability for $L$ and the threshold $\zeta$, they obtained their round complexity upper bound $\widetilde{O}(n^{2/3} + \sqrt{n h_{st}} + D)$.

\paragraph{New approach.} Now, we present our improvement, which reduces the upper bound to $\widetilde{O}(n^{2/3} + D)$ and eliminates the dependence on $h_{st}$.



For short-detour replacement paths, we show that the shortest replacement path distances for all edges $e$ in $P$ can be computed deterministically in $O(\zeta)$ rounds. We still perform a $\zeta$-hop BFS from each vertex in $P$ in parallel but with a key modification that enables us to achieve this within $O(\zeta)$ rounds deterministically.  
A crucial observation is that it is unnecessary to fully expand all BFS trees. Instead, at each time step, we allow each vertex to propagate only the BFS tree originating from the \emph{furthest} vertex in $P$. This approach is sufficient for computing replacement paths while also resolving any potential congestion issues.  
After the BFS computation, we leverage the locality of short detours: The two endpoints of each short detour are at most $\zeta$ hops apart in $P$. This locality ensures that an additional $O(\zeta)$ rounds of information exchange within $P$ suffice to determine the optimal short-detour replacement paths.


Next, we consider the more challenging case of long-detour replacement paths. To minimize round complexity, we only let the landmark vertices broadcast the distances, and the vertices in $P$ do not broadcast, thereby reducing the broadcast cost to $O(|L|^2 + D)$ rounds. By setting $|L| = \widetilde{\Theta}(n^{1/3})$, we ensure that the overall round complexity remains within the target bound of $\widetilde{O}(n^{2/3} + D)$.  
However, this approach introduces some challenges. Unlike previous work, the distances between vertices in $P$ and those in $L$ are no longer globally known. In the previous work, the best replacement path for each edge $e$ was determined by propagating information sequentially from the start to the end of $P$, requiring $O(h_{st})$ rounds. For short-detour replacement paths, we leveraged locality to reduce this to $O(\zeta)$ rounds. However, this strategy does not apply to long-detour replacement paths, as a detour can start and end at two vertices in $P$ that are far apart.

To address these challenges, we partition the path \( P \) into \( \Theta(n^{1/3}) \) segments, each of length \( \Theta(n^{2/3}) \). For detours that begins and ends within the same segment, the optimal ones can be computed locally within the segment using \( O(n^{2/3}) \) rounds. For detours whose endpoints lie in different segments, we let each segment compute its best detour to every landmark vertex and broadcast the results. A key observation is that the total number of messages broadcast is only \( \tilde{O}(n^{2/3}) \), as we have $O(n^{1/3})$ segments and $|L| = \tilde{O}(n^{1/3})$ landmark vertices. Using these messages, along with the all-pairs distances within the set \( L \), the optimal replacement paths involving long detours can be efficiently computed.

\subsection{Lower Bound}\label{subsec:LB}

\citet{manoharan2024computing} established a lower bound of $\widetilde{\Omega}(\sqrt{n} + D)$ for the \rpath{} problem in unweighted directed graphs by reducing it from the undirected $s$-$t$ subgraph connectivity problem, which was known to admit an $\widetilde{\Omega}(\sqrt{n} + D)$ lower bound, as shown by the previous work~\cite{das2011distributed}. We briefly outline the framework of the lower bound proof in the work~\cite{das2011distributed}, which has been applied to derive $\widetilde{\Omega}(\sqrt{n} + D)$ lower bounds for a variety of distributed problems.

To obtain the lower bound, $\Theta(\sqrt{n})$ paths of length $\Theta(\sqrt{n})$ are created between two players, Alice and Bob, with one bit of information stored at the endpoint of each path on Bob's side. A constant-degree overlay tree of depth $O(\log n)$ on the set of all vertices is added to shrink the diameter to $O(\log n)$. The goal is to design the graph so that, to solve the considered problem correctly, Alice must learn all the bits stored on Bob's side. This can be achieved in two ways: The bits can either be transmitted along the paths or across the trees. In the first case, the \emph{dilation}, which is the maximum distance that some information must travel, is $\Omega(\sqrt{n})$. In the second case,  the \emph{congestion}, which is the maximum amount of information routed through any edge, is $\Omega(\sqrt{n})$.


\paragraph{New approach.} We extend this framework to establish a higher lower bound of $\widetilde{\Omega}(n^{2/3} + D)$. To achieve this bound, we construct $\Theta(n^{1/3})$ paths of length $\Theta(n^{2/3})$ between two players.  

On Bob's side, we create a complete bipartite graph on the set of $\Theta(n^{1/3})$ endpoints, storing $\Theta(n^{2/3})$ bits of information as the orientations of the edges in the bipartite graph. On Alice's side, we construct an $s$-$t$ path $P$ of length $\Theta(n^{2/3})$.  

The core idea of our construction is to establish a one-to-one correspondence between each edge $e$ in $P$ and each edge $e'$ in the complete bipartite graph, such that the shortest replacement path length for $e$ is determined by the orientation of the edge $e'$. To achieve this, we will design the lower bound graph carefully: For a given edge orientation, the shortest replacement path for $e$ will go through $e'$, but this is not possible if $e'$ is oriented in the opposite direction, resulting in a higher replacement path length.

While the idea of encoding quadratic information using a complete bipartite graph has been utilized in various lower bound proofs in the $\CONGEST$ model~\cite{frischknecht2012networks,manoharan2024computingMinWeightCycleCONGEST,manoharan2024computing}, our lower bound graph construction differs significantly from previous approaches. For instance, in the proof of an $\Omega(\sqrt{nk})$ lower bound for $k$-source shortest paths and distance sensitivity oracle queries by \citet{manoharan2024distributed}, the construction is symmetric on both sides, with bipartite graphs present in both Alice’s and Bob’s parts. In their construction, the horizontal paths connecting Alice and Bob are used \emph{unidirectionally}, with only one side of each bipartite graph linked to these paths.

In contrast, our construction is asymmetric and leverages horizontal paths in a \emph{bidirectional} manner, connecting both sides of the bipartite graph to these paths and forming a detour-shaped network topology. This key distinction allows us to establish the desired lower bound for the \rpath{} problem in unweighted directed graphs. 

\subsection{Approximation Algorithm}\label{subsec:APX} 

\citet{manoharan2024computing} demonstrated that their algorithm for the \rpath{} problem in \emph{unweighted directed} graphs can be adapted to solve the $(1 + \epsilon)$-\apxrpath{} problem in \emph{weighted directed} graphs, maintaining the same round complexity of $ \widetilde{O}(n^{2/3} + \sqrt{n h_{st}} + D)$. The primary modification required is to replace the $k$-hop BFS computation with the $(1 + \epsilon)$-approximation $k$-hop shortest paths. It was shown by Nanongkai~\cite[Theorem 3.6]{nanongkai2014distributed} that the $(1 + \epsilon)$-approximation $k$-hop shortest paths from $c$ sources can be computed in $ \widetilde{O}(k + c + D)$ rounds with high probability.


\paragraph{New approach.} 
While the transformation into an approximation algorithm is straightforward in prior work, this is not the case for us, as we aim to avoid any dependence on $h_{st}$. In particular, we cannot afford to perform $k$-hop shortest paths simultaneously from all vertices in $P$, as this would result in a round complexity of $\widetilde{O}(k + h_{st} + D)$, which has a linear dependence on $h_{st}$. Such a congestion issue seems inherent to weighted directed graphs: It is possible that the $k$-hop BFS from all vertices in $P$ overlap at the same edge. 

To overcome this issue, we use rounding to reduce the weighted case to the unweighted case. This allows us to apply our approach for running BFS in the short-detour case mentioned above, which does not suffer from any congestion issues, as each vertex only propagates one BFS in each step.

Additionally, we need a different approach to utilize the BFS distances for computing the replacement path length, as a short detour can start and end at two vertices that are arbitrarily far apart in $P$. The high-level idea is similar but different to how we handle long detours in the unweighted case. We break the path into segments and handle \emph{nearby} and \emph{distant} short detours separately. Detours that start or end within the same segment can be handled locally within the segment. For detours that cross two segments, the rough idea is to let each segment locally compute the best detour to every other segment and broadcast the result.


\section{Short-Detour Replacement Paths}
\label{sec:short}


In this section, we prove the following result. 


\begin{restatable}[Short detours]{proposition}{shortdetour}
\label{thm:shortdetour}
For unweighted directed graphs, there exists an $O(\zeta)$-round deterministic algorithm that lets the first endpoint $v_i$ of each edge $e=(v_i, v_{i+1})$ in $P$ compute the shortest replacement path length for $e$ with a short detour.
\end{restatable}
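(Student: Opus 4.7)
The plan is to run a multi-source BFS from every vertex of $P$ in $G\setminus E(P)$ in parallel for $\zeta$ rounds, using a congestion-free forwarding rule, and then to aggregate the collected distances along $P$ for $O(\zeta)$ additional rounds so that each $v_i$ can locally compute its answer.

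The starting observation is a locality property. If the optimum short-detour replacement path for $e=(v_i,v_{i+1})$ has detour endpoints $v_j,v_l$, then the detour has at most $\zeta$ hops in $G\setminus E(P)$, while the subpath of $P$ from $v_j$ to $v_l$ has length $l-j$. Since $P$ is a shortest path, this forces $l-j\le\zeta$; hence both $j$ and $l$ lie in an $O(\zeta)$-wide window around $v_i$ along $P$. This locality is what makes an $O(\zeta)$-round algorithm plausible.

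For the BFS phase I would start a BFS from every $v_j\in P$ simultaneously in $G\setminus E(P)$. To avoid blowup from overlapping BFS trees, in each round every vertex $u$ forwards to each neighbor only one tuple $(j^\ast,d^\ast)$, namely the BFS label from the source $v_{j^\ast}$ that is farthest along $P$ from $s$ among those whose BFS has currently reached $u$. Each edge thus carries one $O(\log n)$-bit message per round, and the BFS phase takes $O(\zeta)$ deterministic rounds. The main obstacle is the correctness of this ``keep the farthest'' rule. My plan is to fix a pair $(v_j,v_l)$ realizing the optimum short-detour replacement path for some edge $e$, fix a shortest detour $\pi$ from $v_j$ to $v_l$, and argue by following $\pi$ that whenever the $v_j$-signal is overridden at a vertex $u$ by a $v_{j'}$-signal with $j'>j$, the pair $(v_{j'},v_l)$ still produces a replacement-path candidate no larger than the one through $(v_j,v_l)$ at every edge for which $(v_{j'},v_l)$ is valid, while for edges where $(v_{j'},v_l)$ is invalid a distinct shortest detour from $v_j$ to $v_l$ circumvents the override. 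Hence each edge learns its correct optimum.

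For the assembly phase, each $v_l$ now knows $d_{j,l}:=\dist_{G\setminus E(P)}(v_j,v_l)$ for $j\in(l-\zeta,l)$ and computes locally the prefix-minima $A_l(i):=\min\{j+d_{j,l}:l-\zeta<j\le i\}$ for $i\in[l-\zeta,l-1]$, so that the target quantity becomes $\mathrm{RP}(i)=\min_{l\in[i+1,\,i+\zeta]}\bigl(A_l(i)+h_{st}-l\bigr)$. I would deliver these values by a pipelined leftward sweep along $P$: each $v_l$ launches a wavefront that moves one hop per round and carries, at each position, the appropriate $A_l(\cdot)+h_{st}-l$ value. Since the wavefronts of different sources $v_l$ are staggered by exactly one round at each edge of $P$, each edge carries $O(1)$ wave-messages per round, and each forwarding vertex $v_m$ folds each passing wave into its own running-minimum accumulator before (when appropriate) forwarding leftward. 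The bookkeeping needed so that a candidate whose range of validity ends at $v_m$ is not further propagated is the subordinate technical point of this phase; I would address it by having each forwarder maintain two running aggregates, one inclusive and one exclusive of the candidate just absorbed. After $O(\zeta)$ rounds every $v_i$ reads off its answer from its accumulator, giving the claimed $O(\zeta)$-round deterministic algorithm.
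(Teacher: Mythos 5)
Your overall architecture is the same as the paper's: a $\zeta$-round multi-source BFS in $G\setminus E(P)$ made congestion-free by letting each vertex forward, per round, only one representative signal, followed by an $O(\zeta)$-round pipelined min-aggregation along $P$. However, there is a genuine gap in the key step, namely the correctness of your ``keep the farthest from $s$'' forwarding rule. In your orientation the BFS flows forward from the detour \emph{start} vertices $v_j$ toward the detour \emph{endpoints} $v_l$, so for two signals from $v_j$ and $v_{j'}$ with $j<j'$ meeting at a vertex $u$ in the same round $r$, the candidate through $v_j$ yields replacement length $\dist_P(s,v_j)+r+(\text{remainder})$, which is \emph{strictly smaller} than the one through $v_{j'}$, and it is valid for every edge $(v_i,v_{i+1})$ for which the $v_{j'}$-candidate is valid (validity only requires the start index to be $\leq i$). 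So your rule keeps the dominated signal and kills the dominating one; the exchange argument you sketch is false on both counts: the pair $(v_{j'},v_l)$ gives a candidate that is \emph{larger}, not ``no larger,'' and there is no reason a ``distinct shortest detour from $v_j$ to $v_l$'' exists that avoids the override (take a graph where the unique $v_j\to v_l$ detour passes through $u$; then every edge $(v_i,v_{i+1})$ with $j\le i<j'$ loses its only candidate and the output is wrong). The lossless rule for your orientation is the opposite extreme — keep the source \emph{closest} to $s$, i.e.\ the minimum of $\dist_P(s,v_j)+d$ — while the paper avoids the issue by running the BFS \emph{backward} (reversed edges) so that information accumulates at the detour starts, where keeping the \emph{farthest reachable endpoint} is lossless because validity there is a ``$\geq i+1$'' constraint and a farther endpoint only shortens the replacement path.

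A second, related problem is your claim that after the BFS phase each $v_l$ knows all individual distances $d_{j,l}$ for $j\in(l-\zeta,l)$: under any one-tuple-per-edge-per-round rule this is information-theoretically out of reach in general; what survives is only, for each hop count $d$, a single extreme representative (in the paper, $f^\ast_u(d)$, the largest index of a vertex of $P$ reachable from $u$ by a length-$d$ path avoiding $P$). The assembly phase must therefore be phrased in terms of that aggregate rather than prefix minima of individual $d_{j,l}$ — the paper does this via the quantities $X[i,\geq j]$ and a $(\zeta-1)$-round dynamic program along $P$. Once you flip your forwarding rule to ``closest to $s$,'' your quantity $A_l(i)$ is exactly recoverable from the analogous aggregate $g_{v_l}(d)=\min\{j:\exists\text{ length-}d\text{ path }v_j\to v_l\text{ in }G\setminus E(P)\}$ (with the same exact-length-versus-shortest subtlety the paper handles in its base-case lemma), and your pipelined sweep then goes through; as written, though, both the forwarding rule and its justification are incorrect.
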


Recall that $\zeta$ is the threshold between a short detour and a long detour. While we set  $\zeta=n^{2/3}$ in \Cref{sec:prelim}, we emphasize that \Cref{thm:shortdetour} works for any choice of $\zeta$. \Cref{thm:shortdetour} is proved by an algorithm consisting of two stages, as follows. 

\begin{description}
    \item[Stage 1: Hop-constrained BFS.] We run a BFS from every vertex $v$ in $P$ for $\zeta$ hops \emph{backward}, meaning that the direction of the edges are reversed. To deal with any potential congestion issue, in each step, each vertex propagates only the BFS tree originating from the furthest vertex in $P$, dropping all the remaining branches, so all BFS can be completed in $\zeta$ rounds. This still allows each vertex in $P$ to learn the furthest vertex in $P$ that can be reached by a detour path starting from $v$ of length $d$, for all $d \in [\zeta]$. 
    \item[Stage 2: Information pipelining.] Using the information from hop-constrained BFS, each vertex can locally compute the best detour to skip all of the next $x$ edges in front of it, for all $x \in [\zeta]$. With a $(\zeta-1)$-round dynamic programming algorithm, the two endpoints of each edge $e$ in $P$ can compute the shortest replacement path length for $e$ with a short detour.
\end{description}

In this section, whenever $S = \emptyset$, we use the convention that $\max S = -\infty$ and $\min S = \infty$.

\subsection{Hop-Constrained BFS}

Recall that $P = \stpath$ is the given $s$-$t$ shortest path.
For each vertex $u$ in the graph $G$, we define $f_u^\ast(d)$ as the largest index $j$ satisfying the following conditions:
\begin{itemize}
    \item There exists a path from $u$ to $v_j$ of length exactly $d$ avoiding all edges in $P$.
    \item For any $\ell > j$, there is no path from $u$ to $v_\ell$  of length exactly $d$ avoiding all edges in $P$.
\end{itemize}
If such an index $j$ does not exist, we set $f_u^\ast(d) = -\infty$. 

\begin{restatable}[Hop-constrained BFS]{lemma}{backwardsBFS}
\label{Thm: Backward BFS Works}
There exists an $O(\zeta)$-round deterministic algorithm that lets each vertex $u$ in the graph $G$ compute $f_u^\ast(d)$ for all $d \in [\zeta]$.
\end{restatable}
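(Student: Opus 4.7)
The plan is to simulate a BFS from every vertex of $P$ simultaneously on the graph $G'$ obtained from $G$ by reversing every edge and deleting the edges of $P$. A path from $u$ to $v_j$ of length $d$ avoiding $P$ in $G$ corresponds to a path from $v_j$ to $u$ of length $d$ in $G'$, so each vertex $u$ must learn, for every $d \in [\zeta]$, the largest index of a source $v_l \in P$ that reaches $u$ in $G'$ at hop count $d$. Running $|P|$ independent BFSs in parallel would create $\Omega(|P|)$ congestion on a single edge, so I would apply the following single-scalar propagation rule: each vertex $u$ maintains a variable $\mu(u)$, initialized to $l$ at $u = v_l \in P$ and to $-\infty$ elsewhere; in each of $\zeta$ rounds every $u$ sends $\mu(u)$ on each of its out-edges in $G'$ (equivalently, each of its in-edges in $G$), then updates $\mu(u)$ to the maximum of its previous value and all values just received. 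Denote the value at the end of round $r$ by $\mu_r(u)$.

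The key step will be to prove, by induction on $r$, the invariant
\[
\mu_r(u) \;=\; \max\bigl\{\, l \;:\; v_l \in P \text{ and } \operatorname{dist}_{G\setminus P}(u, v_l) \le r \,\bigr\}.
\]
The ``$\ge$'' direction traces any shortest $v_l$-to-$u$ path in $G'$ of length $r' \le r$: by the inductive hypothesis each successive vertex on the path carries a $\mu$-value of at least $l$, and once set the value is never lost since $\mu$ is non-decreasing. The ``$\le$'' direction argues that any value $l$ currently stored at $u$ must have originated at $v_l$ (the unique vertex initialized to $l$) and traversed at most $r$ forwarding hops along edges of $G'$, yielding a walk and hence a path of length at most $r$ from $v_l$ to $u$ in $G'$. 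The quantity $f_u^\ast(d)$ is then read off from the non-decreasing sequence $\mu_0(u), \mu_1(u), \ldots, \mu_\zeta(u)$: one outputs $f_u^\ast(d) = \mu_d(u)$ at each $d$ where $\mu_d(u) > \mu_{d-1}(u)$ (a new largest source first becomes reachable at hop exactly $d$), and $f_u^\ast(d) = -\infty$ otherwise under the convention $\max \emptyset = -\infty$.

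The main obstacle will be to justify that this aggressive message-dropping rule does not lose information needed downstream. The concern is that along a shortest $v_l$-to-$u$ path some intermediate vertex $w$ might already have received a strictly larger index $l' > l$ and therefore forward $l'$ in place of $l$, killing the propagation of $l$. The key point is that by the invariant $l'$ is then itself within distance $r$ of $u$, so $\mu_r(u) \ge l' > l$ already reflects the correct maximum over sources reachable within hop budget $r$. This is precisely the quantity the subsequent short-detour stage consumes when it minimizes expressions of the form $j + d + (|P| - \mu_d(v_j))$, and that minimization is insensitive to whether the hop bound is interpreted as ``exactly $d$'' or ``at most $d$''. The protocol runs for exactly $\zeta$ rounds, with one $O(\log n)$-bit message per directed edge per round, yielding the claimed $O(\zeta)$-round deterministic complexity.
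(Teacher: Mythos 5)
There is a genuine gap relative to the statement you are asked to prove. Your propagation rule keeps a \emph{running} maximum, so what your algorithm computes is $\mu_d(u)=\max\{\,l : \dist_{G\setminus P}(u,v_l)\le d\,\}$, i.e., the cumulative quantity $\max_{d'\le d} f_u^\ast(d')$, not $f_u^\ast(d)$ itself. The function $f_u^\ast(d)$ is defined via paths of length \emph{exactly} $d$ and is not monotone in $d$, so it cannot be read off from the non-decreasing sequence $\mu_0(u),\mu_1(u),\dots,\mu_\zeta(u)$, and your readout rule is in fact false: if, say, $u$ has an edge (outside $P$) to $v_9$ and the largest index reachable by a path of exactly two hops avoiding $P$ is $v_4$, then $f_u^\ast(1)=9$ and $f_u^\ast(2)=4$, whereas $\mu_1(u)=\mu_2(u)=9$ and your rule outputs $f_u^\ast(2)=-\infty$. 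The paper avoids exactly this by having each vertex forward, in round $d$, only the maximum of the messages it received in round $d-1$ (not the maximum ever seen); with that rule the forwarded values obey the exact-hop recursion $f_u^\ast(d)=\max\{f_x^\ast(d-1) : (u,x)\in E,\ (u,x)\notin P\}$, which is what makes $\max S_d(u)=f_u^\ast(d)$ and proves the lemma as stated.

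That said, your ``insensitivity'' remark is essentially right about the downstream use: with $\hat h(i,j)\coloneqq\min\{d : \mu_d(v_i)=j>\mu_{d-1}(v_i)\}$, which equals $\dist_{G\setminus P}(v_i,v_j)$ when $v_j$ is a record vertex for $v_i$ and $\infty$ otherwise, the recursion of \Cref{lem:short_base} still holds by the same argument (a non-record endpoint is dominated by a farther vertex reachable at no greater hop count, contradicting $X[i,\,\geq j]<X[i,\,\geq j+1]$). So your plan could be salvaged by restating the present lemma in terms of the within-$d$-hops quantity $\mu_d(u)$ and re-deriving the base case for it, but as written it proves a different statement and does not establish that each $u$ computes $f_u^\ast(d)$ for all $d\in[\zeta]$.
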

\begin{proof}
Intuitively, it suffices to run a $\zeta$-hop  BFS from each vertex in $P$, where in each step, each vertex propagates only the BFS tree originating from the furthest vertex in $P$. We formalize this proof idea, as follows.

In the first round of the algorithm, each vertex $v_i$ in $P$ sends its index $i$ to all its incoming edges, excluding the edges in $P$. 
For each vertex $u$ in $G$, we write $S_d(u) \subseteq\{0,1,\ldots,h_{st}\}$ to denote the set of messages received by $u$ in the $d$th round of the algorithm. For the base case of $d = 0$, we set $S_0(v_i) = \{i\}$ for each vertex $v_i$ in $P$ and set $S_0(u) = \emptyset$ if $u$ does not belong to $P$.

For $d = 2, \ldots, \zeta$, in the $d$th round of the algorithm, for each vertex $u$ in $G$, if $S_{d-1}(u) \neq \emptyset$, then $u$ sends $\max S_{d-1}(u)$ to all its incoming edges, excluding the edges in $P$. Intuitively, $S_{d-1}(u)$ is the collection of BFS that reaches $u$ at round $d-1$, and the algorithm lets $u$ propagate only the BFS originating from the furthest vertex in $P$ in round $d$. 

It can be proved by an induction on $d$ that the following claim holds, so the algorithm indeed lets each vertex $u$ correctly compute $f_u^\ast(d)$ for each $d \in [\zeta]$.
\[f_u^\ast(d) = \max S_{d}(u).\]

The claim holds for $d=0$ by the definition of $S_0$. Assuming that the claim already holds for $d-1$, to see that the claim holds for $d$, observe that
\[f_u^\ast(d) = \max_{x \, : \, (u,x)\in E} f_{x}^\ast(d-1) = \max_{x \, : \, (u,x)\in E}\max S_{d-1}(x) = \max S_{d}(u),\]
as required.
\end{proof}

\subsection{Information Pipelining} 
Let us introduce some additional notations. For any $i < j$, we define $X[i, j]$ as the shortest length of a replacement path with a short detour that starts precisely at $v_i$ and ends precisely at $v_j$. If such a path does not exist, then $X[i, j] = \infty$. For example, we must have $X[i, j] = \infty$ whenever $j-i > \zeta$. 
Moreover, we define
\begin{align*}
X[\leq i,\,j] &= \min_{i' \, : \, i' \leq i} X[i', j],\\
X[i,\,\geq j] &= \min_{j' \, : \, j' \geq j} X[i, j'],\\
X[\leq i,\,\geq j] &= \min_{i' \, : \, i' \leq i} \, \, \min_{i' \, : \, j' \geq j} X[i', j'].
\end{align*}
Observe that $X[\leq i, \, \geq i+1]$ is precisely the shortest replacement path length with a short detour for the edge $e=(v_i, v_{i+1})$ in $P$, so our objective is to let each $v_i$ learn $X[\leq i, \, \geq i+1]$. The following lemma shows that the information obtained from BFS already allows each $v_i$ to calculate the value of $X[i, \, \geq j]$ for all $j > i$.

\begin{lemma}[Base case]\label{lem:short_base}
The value of $X[i, \, \geq j]$ for all $j > i$ can be determined from the value of $f_{v_i}^\ast(d)$ for all $d \in [\zeta]$.
\end{lemma}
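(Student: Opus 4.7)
The plan is to derive a closed-form formula for $X[i, \geq j]$ from the values $f_{v_i}^\ast(d)$ together with the basic knowledge guaranteed by \Cref{lem:basic_tool} (namely $v_i$'s index and its distances to $s$ and $t$ along $P$). I would first decompose, for $j' > i$,
\[X[i, j'] \;=\; \dist_P(s, v_i) \;+\; |v_i v_{j'} \diamond P| \;+\; \dist_P(v_{j'}, t),\]
where $|v_i v_{j'} \diamond P|$ is the length of a shortest $v_i$-to-$v_{j'}$ path avoiding every edge of $P$. In the unweighted setting of \Cref{thm:shortdetour}, distances along $P$ depend only on indices, so $v_i$ can compute $\dist_P(v_{j'}, t)$ for any $j'$. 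Thus determining $X[i, \geq j]$ reduces to minimizing $|v_i v_{j'} \diamond P| + \dist_P(v_{j'}, t)$ over $j' \geq j$ using only the information in $f_{v_i}^\ast$.

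The structural observation driving the proof is that the map $j' \mapsto \dist_P(v_{j'}, t)$ is non-increasing. Consequently, for any fixed detour length $d$, among all indices $j' \geq j$ reachable from $v_i$ by a $P$-avoiding path of length \emph{exactly} $d$, the one that minimizes $d + \dist_P(v_{j'}, t)$ is the largest such index, which equals $f_{v_i}^\ast(d)$ whenever $f_{v_i}^\ast(d) \geq j$. I would then establish
\[X[i, \geq j] \;=\; \dist_P(s, v_i) \;+\; \min_{\substack{d \in [\zeta] \\ f_{v_i}^\ast(d)\, \geq\, j}} \Bigl(d \;+\; \dist_P(v_{f_{v_i}^\ast(d)}, t)\Bigr),\]
with an empty minimum read as $\infty$. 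The ``$\leq$'' direction is immediate, since every $d$ in the range supplies a concrete short-detour replacement path of the stated length. For ``$\geq$'', take any $j'$ attaining $X[i, \geq j]$ and set $d = |v_i v_{j'} \diamond P| \leq \zeta$; then $f_{v_i}^\ast(d) \geq j' \geq j$ by definition of $f_{v_i}^\ast$, and monotonicity yields $\dist_P(v_{f_{v_i}^\ast(d)}, t) \leq \dist_P(v_{j'}, t)$, so the right-hand side is at most $X[i, \geq j]$.

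The subtle point I anticipate is precisely the ``$\geq$'' direction: $f_{v_i}^\ast$ records only the \emph{largest} index reachable at each exact distance, so one might worry that the optimal $v_{j'}$ is inaccessible via the formula. The monotonicity of $\dist_P(\cdot, t)$ along $P$ resolves this concern, because swapping $v_{j'}$ for the farther-along $v_{f_{v_i}^\ast(d)}$ at the same detour length $d$ never increases the total cost. Everything else in the proof is a routine chain of equalities and case checks, including the degenerate case $X[i, \geq j] = \infty$, in which no $d \in [\zeta]$ satisfies $f_{v_i}^\ast(d) \geq j$ and so both sides evaluate to $\infty$.
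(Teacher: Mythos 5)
Your proposal is correct, and it reaches the same destination as the paper by a slightly different route. The paper inverts $f_{v_i}^\ast$ into the function $h^\ast(i,j)=\min\{d\in[\zeta] \mid f_{v_i}^\ast(d)=j\}$ and proves the recursion $X[i,\geq j]=\min\{X[i,\geq j+1],\, h_{st}-(j-i)+h^\ast(i,j)\}$, computed downward in $j$ from the trivial case $j-i>\zeta$; its lower-bound step splits into cases according to whether $X[i,\geq j]<X[i,\geq j+1]$ and closes with a contradiction when the exact-length-$d$ reachability overshoots $j$. You instead give a direct closed form, minimizing $d+\dist_P(v_{f_{v_i}^\ast(d)},t)$ over all $d\in[\zeta]$ with $f_{v_i}^\ast(d)\geq j$, and your lower-bound step replaces the paper's case analysis by the monotone-swap observation that moving the detour endpoint forward to $v_{f_{v_i}^\ast(d)}$ at the same detour length never increases the total cost; this is exactly the insight the paper's contradiction argument encodes, but your constraint ``$f_{v_i}^\ast(d)\geq j$'' rather than ``$=j$'' sidesteps the case split, making the argument a bit cleaner for a purely local computation (the paper's recursive form has the incidental advantage of matching the shape of the subsequent pipelining step, \Cref{lem:short_recursive}, but that is cosmetic here). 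Two minor points: your opening identity $X[i,j']=\dist_P(s,v_i)+|v_iv_{j'}\diamond P|+\dist_P(v_{j'},t)$ silently drops the $\zeta$-hop constraint on the detour (it fails when the shortest $P$-avoiding path exceeds $\zeta$ hops, where the left side is $\infty$), though since you only invoke it at a finite optimum, where hops equal length in the unweighted setting, nothing breaks; and, like the paper's own formula, yours uses the index $i$ and distances along $P$ in addition to $f_{v_i}^\ast$, which you correctly justify via \Cref{lem:basic_tool}.
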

\begin{proof}
We consider the following function:
\[
h^\ast(i,j) = \min\{ d\in[\zeta] \, | \, f_{v_i}^\ast(d)=j\}.
\]
We claim that
\[
X[i, \, \geq j]=\min\{X[i, \, \geq j+1], h_{st}-(j-i)+h^\ast(i,j)\}.\]
Observe that we must have $X[i, \, \geq j] = \infty$ whenever $j-i > \zeta$, so the above claim allows us to determine the value of $X[i, \, \geq j]$ for all $j > i$. For the rest of the proof, we prove the claim. 

\paragraph{Upper bound.} We prove the upper bound $X[i, \, \geq j] \leq \min\{X[i, \, \geq j+1], h_{st}-(j-i)+h^\ast(i,j)\}$. The part $X[i, \, \geq j] \leq X[i, \, \geq j+1]$ follows immediately from the definition of $X[i, \, \geq j]$. The part $X[i, \, \geq j] \leq h_{st}-(j-i)+h^\ast(i,j)$ follows from the fact that $h_{st}-(j-i)+h^\ast(i,j)$ equals the length of some replacement path that has a short detour from $v_i$ to $v_j$ of detour length $h^\ast(i,j)$.

\paragraph{Lower bound.} To show that  $X[i, \, \geq j] \geq \min\{X[i, \, \geq j+1], h_{st}-(j-i)+h^\ast(i,j)\}$, it suffices to show that, if $X[i, \, \geq j] < X[i, \, \geq j+1]$, then $X[i, \, \geq j] = h_{st}-(j-i)+h^\ast(i,j)$. The condition $X[i, \, \geq j] < X[i, \, \geq j+1]$ implies that $X[i, \, \geq j] = X[i, \, j]$, which equals the shortest length of a replacement path with a short detour that starts precisely at $v_i$ and ends precisely at $v_j$. If $h^\ast(i,j)$ equals the length $d$ of a shortest path from $v_i$ to $v_j$ avoiding the edges in $P$, then we have $X[i, \, \geq j] = X[i, \, j] = h_{st}-(j-i)+h^\ast(i,j)$, as required. Otherwise, by the definition of $f^\ast_{v_i}(d)$, there must be a vertex $v_{\ell}$ with $\ell > j$ such that there is a path from $u$ to $v_\ell$  of length exactly $d$ avoiding all edges in $P$, so $X[i, \, \geq j] \leq X[i, \ell] < X[i, \, j]$, which is a contradiction.
\end{proof}

\begin{lemma}[Dynamic programming]\label{lem:short_recursive}
Suppose each vertex $v_i$ in $P$ initially knows the value of $X[i, \, \geq j]$ for all $j > i$. There exists an $O(\zeta)$-round deterministic algorithm that lets each vertex $v_i$ in $P$ compute $X[\leq i, \, \geq i+1]$.
\end{lemma}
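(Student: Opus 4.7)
The plan is to reformulate the computation as a two-parameter dynamic program whose many chains can be pipelined along $P$ without any edge carrying more than one message per round. For each index $j$ and each $\ell \geq 1$, define
\[
A_j^{(\ell)} = \min\bigl\{X[i',j'] \,:\, i' \leq j,\ j' \geq j+\ell,\ j'-i' \leq \zeta\bigr\},
\]
and set $A_j^{(\zeta+1)} = \infty$. The target is $A_j^{(1)} = X[\leq j, \geq j+1]$. Splitting the minimum by whether $i' = j$ or $i' < j$ yields the recursion
\[
A_j^{(\ell)} = \min\bigl\{A_{j-1}^{(\ell+1)},\ X[j, \geq j+\ell]\bigr\},
\]
whose local term is known at $v_j$ by hypothesis.

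For each target index $J$, unrolling this recursion produces a chain on $v_{\max(1,J-\zeta+1)}, \ldots, v_J$ in which the leftmost vertex holds the initial value $X[\max(1,J-\zeta+1), \geq J+1]$ and each subsequent vertex $v_k$ combines the incoming value with its local $X[k, \geq J+1]$ via $\min$, ultimately producing $A_J$ at $v_J$. The plan is to run all such chains simultaneously, each advancing one hop per round, with interior chains ($J \geq \zeta$) launched at time $0$ from $v_{J-\zeta+1}$ and boundary chains ($J < \zeta$) launched from $v_1$ at the delayed start time $\zeta-J$, so that every chain arrives at its target by round $\zeta-1$. The crucial scheduling property is that, at any round $\tau$, the chain for target $J$ occupies vertex $v_{J-\zeta+1+\tau}$; consequently, edge $(v_i,v_{i+1})$ is traversed in round $\tau$ by exactly the unique chain with $J = i+\zeta-\tau$, and the pipeline is congestion-free. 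Each intermediate vertex $v_k$ can reconstruct $J$ from its own index, $\zeta$, and the round number, so each message consists of only the running minimum, which fits in $O(\log n)$ bits.

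After $\zeta - 1$ rounds, every $v_j$ holds $A_j = X[\leq j, \geq j+1]$, as required. The main point I expect to require careful verification is the conflict-free schedule, in particular that the delayed launches at the left boundary do not force $v_1$ to emit more than one message per round; this follows because each round $\tau \in [1, \zeta-1]$ corresponds to the unique launch $J = \zeta - \tau + 1$ at $v_1$, so $v_1$ transmits exactly one chain per round. Correctness of each individual chain then follows by a straightforward induction on $\tau$ using the recursion above.
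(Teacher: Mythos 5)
Your proposal is correct and is essentially the paper's proof in different bookkeeping: your recursion $A_j^{(\ell)}=\min\{A_{j-1}^{(\ell+1)},\,X[j,\geq j+\ell]\}$ with the trivial start $A_{j}^{(\zeta)}=X[j,\geq j+\zeta]$ is exactly the paper's recurrence $X[\leq i,\geq i+(d-1)]=\min\{X[\leq i-1,\geq (i-1)+d],\,X[i,\geq i+(d-1)]\}$ with base case $X[\leq i,\geq i+\zeta]=X[i,\geq i+\zeta]$, and your per-target chains advancing one hop per round produce exactly the same message on each edge in each round as the paper's synchronized sweep over $d=\zeta,\zeta-1,\dots,2$, giving the same $\zeta-1$ rounds with one message per edge per round. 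So this is the same algorithm, merely described chain-by-chain instead of round-by-round.
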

\begin{proof}
For the base case, each vertex $v_i$ can locally compute 
\[X[\leq i, \, \geq i+\zeta]=X[i, \, \geq i+\zeta],\]
since any short detour path that starts at or before $v_{i}$ and ends at or after  $v_{i+\zeta}$ must starts precisely at $v_{i}$ and ends precisely at $v_{i+\zeta}$.

For $d = \zeta, \zeta-1, \ldots, 2$, assuming that each vertex $v_i$ in $P$ already knows $X[\leq i, \, \geq i+d]$, then we can let each vertex $v_i$ in $P$ compute $X[\leq i, \, \geq i+(d-1)]$ using one round of communication. Observe that
\begin{align*}
  X[\leq i, \, \geq i+(d-1)] &= \min\{X[\leq i-1, \, \geq i+(d-1)], X[i, \, \geq i+(d-1)]\}\\
  &= \min\{X[\leq i-1, \, \geq (i-1) + d], X[i, \, \geq i+(d-1)]\},
\end{align*}
so it suffices to let each $v_{i-1}$ send $X[\leq i-1, \, \geq (i-1) + d]$ to $v_i$. Therefore, $\zeta-1$ rounds of communication suffice to let each vertex $v_i$ in $P$ compute $X[\leq i, \, \geq i+1]$.
\end{proof}
 
Combining all the ingredients, we prove the main result of the section.

\begin{proof}[Proof of \Cref{thm:shortdetour}]
 We first run the $O(\zeta)$-round algorithm of \Cref{Thm: Backward BFS Works} to let each vertex $u$ in the graph $G$ compute $f_u^\ast(d)$ for all $d \in [\zeta]$. By \Cref{lem:short_base}, the computed information allows each vertex $v_i$ in $P$ to infer the value of  $X[i, \, \geq j]$ any all $j > i$, enabling us to run the $O(\zeta)$-round algorithm of \Cref{lem:short_recursive}. By the end of the algorithm, each vertex $v_i$ in $P$ knows  $X[\leq i, \, \geq i+1]$, which is the shortest length of a replacement path for $e=(v_i, v_{i+1})$ with a short detour.
\end{proof}

\section{Long-Detour Replacement Paths}
\label{sec:long}

In this section, we handle the replacement paths having long detours, i.e., the length of the detour is greater than the threshold $\zeta = n^{2/3}$. We break the detours into smaller pieces so that the required distance information can be computed efficiently. To break the detours, we use the concept of \emph{landmark vertices}, which was previously used in the literature~\cite{BernsteinK08, ullman1990high} and have been widely used in fault-tolerant algorithms and distance computation. Moreover, as now a detour can cross two distant vertices in $P$, to facilitate information pipelining, we also divide the shortest $s$-$t$ path, $P$ into \emph{segments} separated by \emph{checkpoints}. At this point, let us define some notations which we will use throughout this section. Let $Q$ be any path.
\begin{itemize}
    \item We use the notation $u \leq_{Q} v$ to denote that $u$ appears before $v$ in the path $Q$.
    \item For any two vertices $x \leq_{Q} y$,  we write $Q[x,y]$ to denote the subpath of $Q$ from $x$ to $y$.
    \item Let $G \setminus Q$ denote the graph after deleting the edges in $Q$. 
    \item We write $(uv)_G$ to denote the shortest path from $u$ to $v$ in the graph $G$. We also write $|uv|_G$ to denote the length of the shortest path from $u$ to $v$ in the graph $G$. When the underlying graph is clear from the context, we omit the subscripts.
\end{itemize}

Let us now formally state the main result that we prove in this section. 

\begin{proposition}[Long detours]
\label{Thm: Long Detour Part Works}
    For unweighted directed graphs, there exists an $\widetilde{O}(n^{2/3}+D)$-round randomized algorithm that lets the first endpoint $v_i$ of each edge $e=(v_i, v_{i+1})$ in $P$ compute a number $x$ such that
    \[|st \diamond e| \leq x \leq \text{the shortest replacement path length for $e$ with a long detour}\] with high probability.   
\end{proposition}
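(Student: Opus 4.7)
The plan is to follow the ``landmarks plus segments'' strategy previewed in \Cref{subsec:UB}. First, sample each vertex independently into a landmark set $L$ with probability $\Theta(\log n / \zeta)$, yielding $|L| = \widetilde{\Theta}(n^{1/3})$ with high probability. Chernoff bounds give that w.h.p.\ every shortest path in $G \setminus P$ of length $>\zeta$ contains at least one landmark, and that consecutive landmarks along any such path lie within $\zeta$ hops of one another. In parallel, partition $P$ into $\Theta(n^{1/3})$ contiguous segments $S_1, S_2, \ldots$ each of length $\Theta(\zeta) = \Theta(n^{2/3})$; this needs no communication once the indices from \Cref{lem:basic_tool} are known.

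Next, from each landmark $\ell \in L$, run a forward and a backward $\zeta$-hop BFS in $G \setminus P$, with each $P$-vertex piggybacking its index on its BFS messages. Pipelining the $|L|$ BFS executions across the shared edges (reusing the ``forward only the most-informative source per round'' trick from \Cref{Thm: Backward BFS Works}) fits into $\widetilde{O}(\zeta + |L|) = \widetilde{O}(n^{2/3})$ rounds; afterwards, each landmark $\ell$ knows $|v\ell|_{G \setminus P}$ and $|\ell v|_{G \setminus P}$ for every vertex $v$ within $\zeta$ hops, including every nearby landmark. Build the auxiliary complete graph on $L$ whose $(\ell,\ell')$-weight is the $\zeta$-hop distance from $\ell$ to $\ell'$ in $G\setminus P$, and have every landmark broadcast its $|L|$ incident weights via \Cref{LP}, incurring $O(|L|^2 + D) = \widetilde{O}(n^{2/3} + D)$ rounds. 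Every vertex then locally solves APSP on this $|L|$-vertex graph, obtaining $\gamma(\ell,\ell') = |\ell\ell'|_{G \setminus P}$. The landmark-density property ensures that every long detour from $v_j$ to $v_l$ decomposes as a $\le\zeta$-hop prefix from $v_j$ to a landmark, a landmark-to-landmark chain of length $\gamma$, and a $\le\zeta$-hop suffix to $v_l$, so these distances suffice.

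With all landmark-to-landmark distances known everywhere, each landmark $\ell$ locally computes, for every segment $S$, the aggregates
\[
A(S,\ell) = \min_{v_j \in S}\bigl[|sv_j|_P + |v_j\ell|_{G \setminus P}\bigr], \qquad B(\ell, S) = \min_{v_l \in S}\bigl[|\ell v_l|_{G \setminus P} + |v_l t|_P\bigr],
\]
using only its BFS information. Each landmark broadcasts all $O(n^{1/3})$ such aggregates, a total of $O(n^{2/3})$ messages handled in $\widetilde{O}(n^{2/3} + D)$ rounds by \Cref{LP}. After the broadcasts, every $v_i \in S_c$ locally evaluates $\min_{a<c,\,b>c,\,\ell,\ell'\in L} [A(S_a,\ell) + \gamma(\ell,\ell') + B(\ell',S_b)]$, obtaining the best long-detour replacement-path length whose detour endpoints both lie outside $S_c$. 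The remaining ``mixed'' and ``same-segment'' cases, where at least one detour endpoint lies in $S_c$, are resolved by a prefix/suffix-min sweep \emph{within} $S_c$ that combines each $v_j \in S_c$'s own $|v_j \ell|$ values with the globally broadcast $\gamma$ and $B$ values; since $|S_c| = O(n^{2/3})$, this sweep uses only $O(n^{2/3})$ additional rounds. The minimum over all three sub-cases is the output $x$, which is necessarily realized by some valid $s$-$t$ path (hence $x \geq |st \diamond e|$) and is, by construction, at most the shortest long-detour replacement-path length.

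The step I expect to be the main obstacle is ensuring that the $|L|$ hop-constrained BFS executions actually fit into $\widetilde{O}(n^{2/3})$ rounds in spite of many landmarks competing for the same bottleneck edges; I plan to handle this via pipelined multi-source BFS combined with the ``forward only one source per round'' idea of \Cref{Thm: Backward BFS Works}, exploiting the fact that the BFS depth is only $\zeta$ and the number of sources is only $\widetilde{O}(n^{1/3})$. A secondary subtlety is the mixed case, where the aggregates $A(S_c,\ell)$ and $B(\ell,S_c)$ mix $v_j$'s on both sides of $v_i$; the intra-segment sweep must be designed carefully so that, for each edge $e=(v_i,v_{i+1})$, only detours whose starting vertex precedes $v_i$ and whose ending vertex succeeds $v_{i+1}$ are combined, guaranteeing that the reported path indeed avoids $e$.
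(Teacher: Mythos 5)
Your proposal is correct and follows the same overall architecture as the paper's proof: random landmarks of density $\widetilde{\Theta}(n^{-2/3})$, hop-bounded multi-source BFS from the landmarks in $G\setminus P$, broadcasting the $O(|L|^2)=\widetilde{O}(n^{2/3})$ landmark--landmark distances (\Cref{lem : land to land}), splitting $P$ into $\Theta(n^{1/3})$ segments, broadcasting $O(n^{2/3})$ per-(segment, landmark) aggregates, and finishing with intra-segment pipelined sweeps. The one genuine structural difference is where the landmark--landmark distances are folded in: the paper first converts each path vertex's $\zeta$-hop landmark distances into the \emph{full} distances $|v_i l_j|_{G\setminus P}$ (\Cref{lemma: vilj path}) and then combines with a single minimum over landmarks, whereas you keep the hop-truncated distances and instead take a double minimum over landmark pairs $(\ell,\ell')$ with the stitched term $\gamma(\ell,\ell')$ at the end; both are valid, since any long detour w.h.p.\ has a landmark in its first and last $\zeta$ hops, and every candidate value corresponds to a walk avoiding $e$, so the two-sided guarantee $|st\diamond e|\le x\le$ (best long-detour length) goes through either way. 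Two mechanical points need fixing in your write-up, though neither changes the approach or the complexity. First, a landmark $\ell$ cannot compute $A(S,\ell)$ and $B(\ell,S)$ ``locally using only its BFS information'': in a BFS rooted at $\ell$ the distances end up at the path vertices, not at $\ell$, so the aggregation should be done by the intra-segment pipelined sweep at the vertices of $S$ (who know their own $|v_j\ell|$ values), with the segment endpoint broadcasting the $O(|L|)$ aggregates --- exactly the paper's \Cref{mincheckpoint} followed by the broadcast in \Cref{lem : s to land}; alternatively a convergecast toward $\ell$ would need a separate congestion argument. Second, the ``forward only the most-informative source'' trick of \Cref{Thm: Backward BFS Works} does not apply to the landmark BFS, because there you must retain a separate distance per source and cannot drop branches; the right tool is simply the standard $k$-source $h$-hop BFS (\Cref{kbfs}), which gives the $O(|L|+\zeta)=O(n^{2/3})$ bound you claim.
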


\Cref{Thm: Long Detour Part Works} guarantees the exact value of $|st \diamond e|$ only when some shortest replacement path for $e$ takes a long detour. Otherwise, \Cref{Thm: Long Detour Part Works} provides only an upper bound on $|st \diamond e|$. In this case, the exact value of $|st \diamond e|$ can be computed using the algorithm from \Cref{thm:shortdetour}. Combining \Cref{thm:shortdetour} and \Cref{Thm: Long Detour Part Works}, we prove our first main theorem.

\mainUB*

\begin{proof}
Run the algorithms of \Cref{thm:shortdetour} and \Cref{Thm: Long Detour Part Works} with the threshold $\zeta = n^{2/3}$, by taking the minimum of the two outputs, the first endpoint $v_i$ of each edge $e=(v_i, v_{i+1})$ in $P$ correctly computes the shortest replacement path length $|st \diamond e|$.
\end{proof}




We start our discussion of the proof of \Cref{Thm: Long Detour Part Works} by defining the landmark vertices. 

\begin{definition}[Landmark vertices]
For a given constant $c  > 0$, sample each vertex in $V$ with probability $\frac{c \log n}{n^{2/3}}$ independently.
 The set $L$ of landmark vertices is the set of all sampled vertices.
\end{definition}

By a Chernoff bound, the size of the landmark vertex set is $|L| 
 = \widetilde{O}(n^{1/3})$ with high probability. The following proof is folklore and can be found in the textbook~\cite{greene1990mathematics}. For the sake of presentation, we write $L=\{l_1,l_2, \ldots, l_{|L|}\}$.

\begin{lemma}[Property of landmark vertices]
\label{Lem : Landmark Vertex}
    Consider any $S$ of $n^{2/3}$ vertices. With a probability of at least $1- n^{-\Omega(c)}$, i.e., with high probability, at least one of the vertices in $S$ is a landmark vertex.
\end{lemma}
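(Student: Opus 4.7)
The plan is to bound the probability of the complementary event---namely, that no vertex of $S$ is sampled into the landmark set $L$---and show this probability is at most $n^{-\Omega(c)}$. Since each vertex is included in $L$ independently with probability $p = \frac{c \log n}{n^{2/3}}$, the events ``$v \notin L$'' for $v \in S$ are mutually independent, so the probability that $L \cap S = \emptyset$ factorizes cleanly as $(1 - p)^{|S|} = \left(1 - \tfrac{c \log n}{n^{2/3}}\right)^{n^{2/3}}$.

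Next, I would apply the elementary inequality $1 - x \le e^{-x}$, which holds for all real $x$, to convert this product into an exponential. Substituting $x = p$ and raising to the $n^{2/3}$ power, the factors of $n^{2/3}$ cancel in the exponent and give
\[
\left(1 - \frac{c \log n}{n^{2/3}}\right)^{n^{2/3}} \;\le\; \exp\!\left(-\frac{c \log n}{n^{2/3}} \cdot n^{2/3}\right) \;=\; e^{-c \log n} \;=\; n^{-c}.
\]
Therefore $\Pr[L \cap S = \emptyset] \le n^{-c}$, and the probability that at least one vertex of $S$ is a landmark is at least $1 - n^{-c} = 1 - n^{-\Omega(c)}$, which is precisely the bound claimed.

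There is no real obstacle here; the argument is a one-line probabilistic calculation once independence is invoked. The only subtlety worth flagging is that the cancellation in the exponent relies on $|S| \ge n^{2/3}$ exactly matching the inverse of the sampling rate $p$ up to a $\log n$ factor, so the lemma would not hold for sets $S$ of size $o(n^{2/3}/\log n)$. Since the statement assumes $|S| = n^{2/3}$, this is fine, and by monotonicity the same bound holds for any superset as well, which is how the lemma will be applied later in \Cref{sec:long} to guarantee that every long detour is hit by at least one landmark.
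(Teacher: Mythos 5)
Your proof is correct and is exactly the standard folklore calculation $(1-p)^{|S|} \le e^{-p|S|} = n^{-c}$ that the paper itself invokes (the paper omits the proof, citing it as folklore from a textbook). No discrepancies to note.
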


Let us fix an edge $e = (v_i, v_{i+1})$ in $P$ whose replacement path $R = st \diamond e$ uses a long detour. Suppose we want $v_i$ to determine the length of $R$. Also, let the detour of $R$ begins at vertex $a$ and ends at vertex $b$. Since we are dealing with long detours, the subpath $R[a,b]$ contains more than $\zeta = n^{2/3}$ vertices. By \Cref{Lem : Landmark Vertex}, there exists at least one landmark vertex $l^* \in L$ on $R[a,b]$. To enable $v_i$ to compute $|R|$, it suffices to determine the lengths of the two subpaths of $R$: one from $s$ to $l^*$ and another from $l^*$ to $t$, denoted as $R[s, l^*]$ and $R[l^*, t]$, respectively. Observe that they can be rewritten as $
R[s, l^*] = sl^* \diamond P[v_{i+1}, t]$ and $R[l^*, t] = l^*t \diamond P[s, v_i]$. However, since the specific landmark vertices on the detour of $R$ are unknown, we must provide $v_i$ with the values of $|s l \diamond P[v_{i+1}, t]|$ and $|l t \diamond P[s, v_i]|$ for all $l \in L$. After that, $v_i$ can locally compute the length of $R$ as follows:

$$|st \diamond e| =\min\limits_{l \in L} |s l \diamond P[v_i,t]| + |lt \diamond P[s,v_{i+1}]|.$$

To summarize, for each vertex $v_i$ in the path $P = \stpath$, we want to store the following information in $v_i$. 
\begin{description}
    \item[1. Distances from $s$ to the landmark vertices:] $|sl \diamond P[v_{i},t]|$ for all $l \in L$.
    \item[2. Distances from the landmark vertices to $t$:] $|lt \diamond P[s,v_{i+1}]|$  for all $l \in L$.
\end{description}

In \Cref{s to land} we address the first part. In \Cref{land to t} we address the second part, which is symmetrical to the first part, and prove \Cref{Thm: Long Detour Part Works}. 

Observe that, for some $l \in L$, the distances $|sl \diamond P[v_{i},t]|$ and $|lt \diamond P[s,v_{i+1}]|$ may be more than $n^{2/3}$. So, a simple BFS-like algorithm may not work within $\widetilde{O}(n^{2/3}+D)$ rounds. To address that issue, we utilize a subroutine 
that enables each vertex $v \in V$ (hence $v \in P$) to know the distances between all pairs of landmark vertices.





\begin{lemma}[Distance between the landmark vertices]
\label{lem : land to land}
    There exists an $\widetilde{O}(n^{2/3}+D)$-round randomized algorithm that lets all vertices  $v \in V$ compute, for all pairs $(l_j,l_k) \in L \times L$, the length of the path $(l_jl_k)_{G \setminus P}$, with high probability.
\end{lemma}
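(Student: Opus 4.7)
The plan is to (a) compute, at each landmark $l_k$, the distance $|l_j l_k|_{G \setminus P}$ for every other landmark $l_j$, and then (b) broadcast all these $|L|^2$ distances so every vertex in $V$ learns them. I will argue that each step runs in $\widetilde{O}(n^{2/3} + D)$ rounds.

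For step (a), a Chernoff bound gives $|L| = \widetilde{\Theta}(n^{1/3})$ with high probability, so I would invoke the $k$-source SSSP algorithm for unweighted directed graphs with source set $L$ and $k = |L|$. The tight round complexity $\widetilde{O}(\sqrt{nk}+D)$ for $k \geq n^{1/3}$ (cited in the introduction) yields $\widetilde{O}(n^{2/3}+D)$ rounds. To make the algorithm compute shortest paths in $G \setminus P$ rather than in $G$, every vertex $v_i \in P$ (which knows its index along $P$ by \Cref{lem:basic_tool}) marks its $P$-incident edges as unusable for distance-propagation messages; the underlying CONGEST network is still $G$, so any global-coordination subroutine of the SSSP algorithm can continue to exploit $G$'s diameter $D$. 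After this step, each landmark $l_k$ knows $|l_j l_k|_{G \setminus P}$ for every $l_j \in L$.

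For step (b), each landmark broadcasts its $|L|$ learned distances. The total number of $O(\log n)$-bit messages is $|L|^2 = \widetilde{O}(n^{2/3})$, which \Cref{LP} delivers in $\widetilde{O}(n^{2/3}+D)$ rounds. At the end, every vertex in $V$ holds the full $|L|\times|L|$ matrix of landmark-to-landmark distances in $G \setminus P$, as required.

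The main obstacle is verifying that the cited $k$-source SSSP subroutine faithfully simulates on the subgraph $G \setminus P$ while retaining its $\widetilde{O}(\sqrt{nk}+D)$ round bound on the network $G$. This should follow from the modular design of standard distributed shortest-paths routines: their BFS-style distance-propagation primitives can naturally be restricted to a prescribed edge subset, while their global-coordination components are unaffected by the edge restriction and continue to use $G$'s full structure. If any specific cited algorithm does not admit this plug-and-play restriction, one can fall back on computing short-hop distances via a direct hop-constrained BFS (similar in spirit to \Cref{Thm: Backward BFS Works}) up to $n^{2/3}$ hops and then closing the gap by a min-plus combination using the landmark property of \Cref{Lem : Landmark Vertex}.
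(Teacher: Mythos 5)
Your route is different from the paper's and is plausible, but it hinges on one unverified step. The paper does \emph{not} invoke any exact multi-source SSSP black box: it runs the deterministic $k$-source $h$-hop BFS of \Cref{kbfs} with $k=|L|=\widetilde{O}(n^{1/3})$ sources and $h=n^{2/3}$ hops in $O(|L|+n^{2/3})$ rounds, broadcasts the resulting hop-limited $|L|\times|L|$ distances via \Cref{LP}, and then every vertex locally stitches them into exact distances in $G\setminus P$ using \Cref{Lem : Landmark Vertex} (w.h.p.\ every landmark-to-landmark shortest path in $G\setminus P$ decomposes into consecutive landmark-to-landmark segments of at most $n^{2/3}$ hops, so the hop-limited matrix determines the true distances by a local min-plus computation). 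Your fallback is essentially this argument, so the idea is available to you; note only that the min-plus stitching must happen \emph{after} the broadcast, at each vertex, since a single landmark holds just one row of the hop-limited matrix and cannot close the gap on its own.

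The caveat in your primary route is the black-box use of the $\widetilde{O}(\sqrt{nk}+D)$ exact $k$-source SSSP algorithm on $G\setminus P$. The statement you cite is for distances in the graph on which the algorithm runs, and its $D$ term refers to that graph's diameter; $G\setminus P$ may have a much larger (or unbounded) undirected diameter than $G$, so simply "running the algorithm on $G\setminus P$" does not give $\widetilde{O}(n^{2/3}+D(G))$. Your fix---keep the communication network as $G$ and logically forbid $P$-edges in distance propagation---is exactly the property that needs to be verified for that specific algorithm, and you correctly flag it, but as written it is an assumption rather than a proof. The paper's choice of the elementary hop-limited BFS of \Cref{kbfs} sidesteps this entirely (hop-limited BFS trivially tolerates forbidden edges and never needs the diameter of $G\setminus P$), at the modest cost of needing the landmark-hitting argument to upgrade hop-limited distances to exact ones; your approach, if the black-box restriction is verified, buys a shorter proof that avoids that stitching step.
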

Due to \Cref{Lem : Landmark Vertex}, the intuition is that BFS explorations up to depth $n^{2/3}$ starting from each vertex in $L$ should be sufficient to determine the distances between landmark vertices. We need the following lemma to formally prove \Cref{lem : land to land}.

\begin{lemma}[$k$-source $h$-hop BFS~\cite{LenzenPP19}]\label{kbfs}
    For directed graphs, there exists an $O(k+h)$-round deterministic algorithm that computes $k$-source $h$-hop BFS.
\end{lemma}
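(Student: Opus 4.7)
The plan is to run a pipelined $k$-source BFS in the classical style of Peleg's framework. Each vertex $v$ maintains a tentative distance array $d_v[1], \ldots, d_v[k]$, initialized to $\infty$ except that $d_{s_i}[i] = 0$ for each source $s_i$. A canonical total order on the sources is fixed via their IDs. In each round, on each of its outgoing edges, a vertex $v$ transmits the pair $(i, d_v[i])$ with the smallest index $i$ such that $d_v[i] \leq h$ and this pair has not already been sent on that edge. Upon receiving $(i, d)$, an in-neighbor $u$ updates $d_u[i] \gets \min(d_u[i], d+1)$. Each message carries $O(\log n)$ bits, so the $\CONGEST$ bandwidth is respected, and no randomness is used.

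\paragraph{Key steps.}
After specifying the algorithm, the next step is to verify correctness: a straightforward induction on $d_G(s_i, v)$ shows that once the algorithm terminates, every vertex $v$ with $d_G(s_i, v) \leq h$ satisfies $d_v[i] = d_G(s_i, v)$. The heart of the proof is the round-complexity bound. I would establish the invariant that, for every source $s_i$ and every vertex $v$ with $d_G(s_i, v) \leq h$, the pair $(i, d_G(s_i, v))$ reaches $v$ by round $(i-1) + d_G(s_i, v)$. This is proved by induction on $(i-1) + d_G(s_i, v)$: letting $u$ be the predecessor of $v$ on a shortest $s_i$-$v$ path, the inductive hypothesis supplies $d_u[i] = d_G(s_i, u)$ at $u$ by round $(i-1) + d_G(s_i, v) - 1$, and the priority rule ensures that $u$ forwards this pair across $(u, v)$ by the next round. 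Summing over the worst case $i = k$, $d_G(s_i, v) = h$ yields round complexity $(k-1) + h = O(k+h)$.

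\paragraph{Main obstacle.}
The key technical challenge lies in the pipelining argument: showing that when the $(i, \cdot)$ message is ready at $u$ at round $(i-1) + d_G(s_i, u)$, all smaller-index pairs $(i', d_u[i'])$ with $i' < i$ that $u$ still needs to forward on the edge $(u, v)$ have already been dispatched. The cleanest way is an accounting argument: by the inductive hypothesis applied to each $i' < i$, the pair $(i', d_u[i'])$ arrived at $u$ no later than round $(i'-1) + d_G(s_{i'}, u)$, leaving a slack of at least $(i - i')$ rounds before $(i, \cdot)$ demands transmission; summed over the at most $i - 1$ competing pairs, this slack is exactly enough for $u$ to have flushed them across $(u, v)$ in priority order. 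This scheduling argument is the content of the pipelined upcast/multi-source BFS lemma of Peleg~\cite{peleg2000distributed} and its directed-graph adaptation in~\cite{LenzenPP19}; once it is formalized, the claimed $O(k+h)$ deterministic round bound follows.
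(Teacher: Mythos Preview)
The paper does not give its own proof of this lemma; it is imported verbatim from~\cite{LenzenPP19} and used as a black box. So there is no paper-side argument to compare against beyond ``see the citation.'' Your plan --- pipelined multi-source BFS with a per-edge scheduling rule --- is indeed the mechanism behind the cited result, so you are aiming at the right target.

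That said, the pipelining analysis you sketch has a genuine gap. You assert that for every $i'<i$ the pair $(i',d_u[i'])$ reaches $u$ by round $(i'-1)+d_G(s_{i'},u)$ and that this leaves ``slack at least $(i-i')$'' before $(i,\cdot)$ must leave. The arithmetic does not support that: the actual gap is
\[
\bigl[(i-1)+d_G(s_i,u)\bigr]-\bigl[(i'-1)+d_G(s_{i'},u)\bigr]=(i-i')+\bigl(d_G(s_i,u)-d_G(s_{i'},u)\bigr),
\]
and nothing in your setup forces $d_G(s_i,u)\ge d_G(s_{i'},u)$. If a low-index source happens to be much farther from $u$ than $s_i$, this quantity is negative --- the inductive hypothesis does not even promise that $(i',\cdot)$ has \emph{arrived} at $u$ by the time you want $(i,\cdot)$ to depart, let alone been flushed across $(u,v)$. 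A related issue: with pure index priority and all sources starting at round~$0$, $u$ may reach index $i'$ in its send schedule while $d_u[i']$ is still a non-tight overestimate coming from a longer but less-congested path, so you separately owe a correctness argument that the forwarded value equals $d_G(s_{i'},u)$.

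The proofs that actually establish the $O(k+h)$ bound (Holzer--Wattenhofer, Lenzen--Patt-Shamir--Peleg) avoid both pitfalls by either staggering the start of the $i$th BFS by $i-1$ rounds, or by prioritizing on the pair (distance, source) rather than on source index alone. In either variant the key structural fact is a non-overtaking property: once wave $i'$ is ahead of wave $i$ along an edge it stays ahead, so each lower-index wave delays wave $i$ at most once \emph{in total}, not once per hop. Your target invariant $(i-1)+d_G(s_i,v)$ is the right one, but you need that non-overtaking argument --- not the per-pair slack count --- to reach it.
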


\begin{proof}[Proof of \Cref{lem : land to land}]
    We store $|l_jl_k|_{G \setminus P}$  in all $v \in V$, for any pair $(l_j,l_k) \in L \times L$, by finding the length of the whole path in parts. For that, observe that, using \Cref{Lem : Landmark Vertex}, in the path $(l_jl_k)_{G \setminus P}$, with high probability, there will be a landmark vertex in each length of $n^{2/3} $. That means, a longer $(l_jl_k)_{G \setminus P}$ path can be viewed as a concatenation of some paths between landmark vertices of at most $n^{2/3} $ hops. Therefore, if we can calculate the all-pair distances $L \times L$ in $G \setminus P$ under a constraint of $n^{2/3}$ hops, and broadcast the distances to all vertices $v \in V$, then $v$ can calculate $(l_jl_k)_{G \setminus P}$  for any pair of $|l_jl_k| \in L \times L$.

    In other words, we need a BFS tree of at most $h = n^{2/3}$ hops from all $l_i \in L$. We have in total $|L| = \widetilde{O}(n^{1/3})$ landmark vertices. Therefore, using \Cref{kbfs}, we can find the required BFS trees in $O(h+|L|) = O(n^{2/3})$ rounds. After that, each $l_i \in L$ knows the distances from all other landmark vertices under a hop constraint of $h = n^{2/3}$.


    Now, we plan to send this information to all the vertices $v \in V$. The total number of messages to be broadcast is $O(|L|^2) = \widetilde{O}(n^{2/3})$.    
    All the messages are of $O(\log n)$ bits. Therefore, by \Cref{LP}, the broadcast can be done in $\widetilde{O}(n^{2/3} + D)$ rounds.
    After that, each $v$ is able to locally calculate $|l_jl_k|_{G \setminus P}$  for all pairs $(l_j,l_k) \in L \times L$.

    While our algorithm is deterministic, the computed distances are only correct with high probability, as the guarantee of \Cref{Lem : Landmark Vertex} is probabilistic.
\end{proof}






\subsection{Distances From \texorpdfstring{$s$}{s} to the Landmark Vertices}
\label{s to land}

Let us recall that, the shortest path $P$ contains the vertices $\stpath$. Also, in each $v_i$, we want to store $|sl_j \diamond P[v_{i},t]|$ for all $l_j \in L$. Let us use the notation $M[l_j, v_i]$ to denote $|sl_j \diamond P[v_{i},t]|$, so our goal become storing $M[l_j,v_i]$, for all $l_j \in L$, in each $v_i$. 

At first, we informally discuss an approach to store $M[l_j,v_i]$, for all $l_j \in L$, in each $v_i$ in $\widetilde{O}(h_{st}+n^{2/3})$ rounds, and then show how to remove the additive term $O(h_{st})$.
Observe that, if you fix some $l_j$, then $M[l_j,v_i]$ is decreasing in $i$, i.e., $M[l_j,v_i] \geq M[l_j,v_{i'}]$ whenever $v_i \leq_{P}v_{i'}$. This is true because the path $sl_j \diamond P[v_{i},t]$  avoids $v_{i'}$ for all $i' \geq i$. 
Specifically, we have  \[M[l_j,v_i]= \min \left\{ M[l_j,v_{i-1}], |sv_i|+|v_il_j|_{G \setminus P} \right\}.\] Now, if we can store   $|v_il_j|_{G \setminus P}$, for all $l_j \in L$, in each $v_i$, then by doing $|L|$ sweeps in $P$ in a pieplining fashion, we can let each $v_i$ compute $M[l_j, v_i]$ for all $l_j \in L$ in $\widetilde{O}(h_{st}+|L|) = \widetilde{O}(h_{st}+n^{1/3})$ more rounds. We discuss how we can remove the $h_{st}$ additive term later. In view of the above, we want to let $v_i$ compute $|v_il_j|_{G \setminus P}$ for all $l_j \in L$. The next lemma achieves this goal in $\widetilde{O}(n^{2/3} + D)$ rounds.




\begin{lemma}[Distances to the landmark vertices] \label{lemma: vilj path}
    There exists an $\widetilde{O}(n^{2/3}+D)$-round randomized algorithm that lets each $v_i$ in $P$ compute $|v_il_j|_{G \setminus P}$, for all $l_j \in L$, with high probability.
\end{lemma}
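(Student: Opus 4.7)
The plan is to combine an $n^{2/3}$-hop BFS from each landmark with the all-pairs landmark distances delivered by \Cref{lem : land to land}. The underlying observation is the following: if the shortest $v_i$-to-$l_j$ path in $G \setminus P$ has more than $n^{2/3}$ hops, then by \Cref{Lem : Landmark Vertex} applied to the first $n^{2/3}$ (non-source) vertices of such a path, some landmark $l_k$ lies within $n^{2/3}$ hops of $v_i$ on a shortest $v_i$-to-$l_j$ path, so
\[
|v_i l_j|_{G \setminus P} = |v_i l_k|_{G \setminus P} + |l_k l_j|_{G \setminus P}.
\]
Thus every long shortest path decomposes into a short ($\leq n^{2/3}$-hop) head ending at some landmark, glued to a landmark-to-landmark tail.

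Concretely, I would proceed in three steps. First, invoke \Cref{kbfs} to run an $n^{2/3}$-hop BFS on the edge-reversal of $G\setminus P$ from every landmark (each vertex locally blocks its $P$-incident edges and reverses the remaining ones); since $|L|=\widetilde{O}(n^{1/3})$, this costs $O(|L|+n^{2/3})=\widetilde{O}(n^{2/3})$ rounds and leaves each $v_i$ holding, for every $l_j\in L$, the value
\[
\widetilde{d}(v_i, l_j)=\begin{cases}|v_i l_j|_{G\setminus P}, & \text{if } |v_i l_j|_{G\setminus P}\le n^{2/3},\\ \infty, & \text{otherwise}.\end{cases}
\]
Second, invoke \Cref{lem : land to land} so that every vertex $v_i\in P$ learns $|l_k l_j|_{G\setminus P}$ for all pairs $(l_k,l_j)\in L\times L$, at a cost of $\widetilde{O}(n^{2/3}+D)$ rounds. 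Third, each $v_i$ locally outputs
\[
\min\!\Bigl(\widetilde{d}(v_i, l_j),\ \min_{l_k \in L}\bigl(\widetilde{d}(v_i, l_k) + |l_k l_j|_{G\setminus P}\bigr)\Bigr).
\]
The short-distance case is captured by the first term; the long-distance case is captured by the inner min when $l_k$ is chosen to be the landmark guaranteed near $v_i$ by the observation above. No summand can underestimate $|v_i l_j|_{G\setminus P}$, since each one is the length of an actual walk from $v_i$ to $l_j$ in $G\setminus P$.

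The main obstacle is handling the high-probability guarantee uniformly over all queries: \Cref{Lem : Landmark Vertex} must hold simultaneously for every pair $(v_i,l_j)$ whose shortest path in $G\setminus P$ exceeds $n^{2/3}$ hops. This is secured by choosing the landmark-sampling constant $c$ large enough and taking a union bound over the $O(n|L|)=\poly(n)$ relevant pairs, keeping the failure probability at $1/\poly(n)$. The only implementation subtlety is that the BFS runs on the \emph{reversal} of $G\setminus P$ rather than $G$ itself; this is local, since \Cref{lem:basic_tool} already ensures every vertex of $P$ knows its role on $P$ and hence which of its incident edges must be suppressed during the BFS.
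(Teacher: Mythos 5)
Your proposal is correct and follows essentially the same route as the paper: a hop-limited ($n^{2/3}$-hop) BFS from all landmarks in the reversed $G\setminus P$ via \Cref{kbfs}, combined with the all-pairs landmark distances from \Cref{lem : land to land}, after which each $v_i$ stitches the two together locally using the fact that any shortest $v_i$-to-$l_j$ path longer than $n^{2/3}$ hops passes through a landmark within its first $n^{2/3}$ hops. Your explicit min-combination formula and the union-bound remark simply spell out steps the paper leaves implicit.
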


\begin{proof}
First, let all vertices $v \in V$ know the length of the path $(l_j l_k)_{G \setminus P}$ for all pairs $(l_j, l_k) \in L \times L$. By \Cref{lem : land to land}, this can be achieved in $\widetilde{O}(n^{2/3} + D)$ rounds. 

Now, consider a vertex $v \in P$ and a landmark $l_j \in L$. Using \Cref{Lem : Landmark Vertex}, if the path $(v l_j)_{G \setminus P}$ has length at least $n^{2/3}$, then with high probability, there will be a landmark vertex in every segment of length $n^{2/3}$. This means that such a path $(v l_j)_{G \setminus P}$ can be viewed as a concatenation of the path from $v$ to some landmark $l$ and the path from $l$ to $l_j$, where $l$ is the first landmark encountered in the path from $v$ to $l_j$. Since, with high probability, the length of the path from $v$ to $l$ is at most $n^{2/3}$, and each vertex in $P$ already knows the distances between landmark vertices, it suffices to determine the distance of each vertex $v \in P$ to all vertices in $L$ under a hop constraint of $h=n^{2/3}$. 

To achieve this, we perform BFS from all landmark vertices $l_j \in L$ up to a depth of $n^{2/3}$ in $(G \setminus P)^R$ (the reverse graph of $G \setminus P$). Since $|L| = \widetilde{O}(n^{1/3})$, this requires performing $\widetilde{O}(n^{1/3})$ BFS computations, each up to $n^{2/3}$ hops. By \Cref{kbfs}, this can be completed in $O(|L| + n^{2/3}) = {O}(n^{2/3})$ rounds. During these BFS computations, each vertex can record its level in the BFS rooted at each $l_j$. At the end of the these BFS procedures, each vertex $v$ in $P$, knows the distances to all landmark vertices that can be reachable within $n^{2/3}$ hops.
\end{proof}

The length of pipelining $h_{st}$ is the bottleneck of the algorithm. Hence, to reduce the number of rounds, we want to reduce the required length of pipelining. For that, we break the path $P$ into segments of length $O(n^{2/3})$. We call the endpoints of the segments as checkpoints. Formally, 
let the set of checkpoints be \[C=\{s=v_0,v_{\lceil n^{2/3} \rceil},v_{2\lceil n^{2/3} \rceil},,v_{3\lceil n^{2/3} \rceil} \ldots, t\}.\] 
Therefore, between two consecutive checkpoints, there are exactly $\lceil n^{2/3} \rceil$ edges, except for the last segment, where the number of edges can be smaller, as we put $t$ as the last checkpoint.
We denote the checkpoint as $c_1, c_2 \ldots$. The segment $P[c_{i},c_{i+1}]$, which starts from the vertex $c_{i}$ and ends at the vertex $c_{i+1}$ in $P$, is the $i$th segment. The segments are edge-disjoint and overlap in the checkpoints. 

For our next lemma, let us define a ``localized'' version of $M[l_j,v_i]$ restricted to the $i$th segment. For each vertex $v$ in $P$ such that $c_{i} \leq_P v \leq_P c_{i+1}$, we define \[M^i[l_j,v]=\min_{u \, : \, c_{i} \leq_P u \leq_P  v} \left\{|su|+|ul_j|_{G \setminus P}\right\}.\]



%


\begin{lemma}[Information pipelining within a segment] \label{mincheckpoint}
    There exists an $\widetilde{O}(n^{2/3}+D)$ randomized algorithm that lets each vertex $v$ in each segment $P[c_i, c_{i+1}]$  compute $M^i[l_j,v]$ with high probability. 
        \end{lemma}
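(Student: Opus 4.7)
The plan is to reduce the computation of $M^i[l_j,v]$ to a standard prefix-minimum pipeline along each segment of $P$, after first obtaining the per-vertex single-source values that feed the pipeline.

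First, I would invoke \Cref{lemma: vilj path} to let every vertex $v \in P$ learn $|v l_j|_{G \setminus P}$ for every $l_j \in L$ in $\widetilde{O}(n^{2/3}+D)$ rounds with high probability. Combined with the basic knowledge from \Cref{lem:basic_tool} (each $v_i$ knows $|sv_i|$ and its index along $P$), every vertex $v$ in $P$ can locally form, for each $l_j \in L$, the single-point candidate value
\[A[l_j,v] \;:=\; |sv| + |vl_j|_{G \setminus P}.\]
By the definition in the preamble to the lemma, $M^i[l_j, v]$ is exactly the minimum of $A[l_j, u]$ over all $u$ lying in the prefix $P[c_i, v]$ of segment $P[c_i, c_{i+1}]$, so it suffices to compute these prefix minima.

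Next, in parallel over all segments, I would run a left-to-right prefix-minimum pipeline on $P[c_i, c_{i+1}]$ for each landmark. Concretely, for each $l_j$, the checkpoint $c_i$ sends $A[l_j, c_i]$ to its successor in $P$; each subsequent vertex $v$ that receives a value $y$ from its predecessor computes $\min(y, A[l_j, v])$, stores this as $M^i[l_j, v]$, and forwards it to its successor. Interleaving the $|L| = \widetilde{O}(n^{1/3})$ pipelines (one $O(\log n)$-bit value per round per edge) on a segment of length at most $\lceil n^{2/3}\rceil + 1$ takes $O(n^{2/3} + |L|) = \widetilde{O}(n^{2/3})$ rounds. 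Since consecutive segments share only checkpoints and not edges, all segments run their pipelines simultaneously without any congestion on edges of $P$. Adding the preprocessing cost of \Cref{lemma: vilj path} yields the claimed $\widetilde{O}(n^{2/3}+D)$ round complexity.

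I do not foresee a substantial obstacle: the heavy lifting is already packaged in \Cref{lemma: vilj path}, and the remainder is a standard pipelined prefix-min that exactly fits within segments of length $n^{2/3}$ precisely because $|L| = \widetilde{O}(n^{1/3}) \ll n^{2/3}$. The only point requiring a small amount of care is boundary behavior at checkpoints: a vertex $c_{i+1}$ simultaneously acts as the terminal receiver of the pipeline in segment $i$ and as the source of the pipeline in segment $i+1$, but these two roles use disjoint edges of $P$ (the incoming vs.\ the outgoing edge along $P$), so they can be executed concurrently.
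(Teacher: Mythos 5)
Your proposal is correct and follows essentially the same route as the paper: first apply \Cref{lemma: vilj path} so each vertex of $P$ knows $|vl_j|_{G\setminus P}$ for all $l_j \in L$, then run a prefix-minimum sweep within each segment, pipelined over the $|L|=\widetilde{O}(n^{1/3})$ landmarks, for $O(n^{2/3}+|L|)$ additional rounds. The paper's proof is the same argument, with the segments' edge-disjointness handling parallelism exactly as you note.
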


\begin{proof}
First, let all vertices $v \in P$ compute  $|vl_j|_{G \setminus P}$ for all $l_j \in L$. By \Cref{lemma: vilj path}, this can be achieved in $\widetilde{O}(n^{2/3}+D)$ rounds.


Let us focus on the $i$th segment, $P[c_{i},c_{i+1}]$.
Let us now fix $l_j$ and discuss how we let all the vertices $v$ in the $i$th segment $P[c_{i},c_{i+1}]$ compute the value of $M^i[l_j,v]$.  In the first round, the first vertex $c_{i}$ calculates $M^i[l_j,c_{i}] = |sc_i| + |c_i l_j|_{G \setminus P}$ locally and sends the value to the second vertex $v'$ of the segment. In the second round, the second vertex calculates $M^i[l_j,v']=\min\{M^i[l_j,c_{i}], |sv'| + |v' l_j|_{G \setminus P}\}$ and send the value to the third vertex of the segment, and so on. The number of rounds needed equals the number of edges in the segment, which is $O(n^{2/3})$.

Observe that, in the $x$th round, we only send a message in the $x$th edge of the segment. 
Therefore, by pipelining (i.e., the algorithm for $l_j$ is delayed by $j$ rounds), we can run the algorithms for all $l_j \in L$ in parallel using $O(n^{2/3}) + |L|-1 = O(n^{2/3})$ rounds.
\end{proof}


Consider the vertex common to both the $(i-1)$th and $i$th segments, i.e., the checkpoint $c_i$. For a fixed landmark vertex $l_j$, both $M^{i-1}[l_j, c_i]$ and $M^{i}[l_j, c_i]$ are well-defined. However, these values may differ. Both $M^{i-1}[l_j, c_i]$ and $M^{i}[l_j, c_i]$ are necessary for our proof in the following lemma. 

\begin{lemma}[Part 1]
\label{lem : s to land}
There exists an $\widetilde{O}(n^{2/3}+D)$-round randomized algorithm that lets each vertex $v_i$ in $P$ compute the length of the path $sl_j \diamond P[v_{i},t]$, for all $l_j \in L$, with high probability.
\end{lemma}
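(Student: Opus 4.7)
The plan is to glue together the segment-local values $M^i[l_j, v]$ produced by \Cref{mincheckpoint} into the desired global values $M[l_j, v_i]$ via a single round of broadcasting per-segment summaries. Fix $v_i$ and let $k$ be the index of the segment containing it, so $c_k \leq_P v_i \leq_P c_{k+1}$. Every candidate starting vertex $u$ with $u \leq_P v_i$ lies either in some earlier segment $P[c_{i'}, c_{i'+1}]$ with $i' < k$, or in the prefix $P[c_k, v_i]$ of the current segment. Since $|su| + |ul_j|_{G\setminus P}$ is exactly the quantity minimized by both $M^{i'}[l_j, \cdot]$ and $M[l_j, \cdot]$, one obtains
\begin{equation*}
M[l_j, v_i] \;=\; \min\!\Bigl\{\, M^k[l_j, v_i],\ \min_{i' < k} M^{i'}[l_j,\, c_{i'+1}] \,\Bigr\}.
\end{equation*}

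I would then proceed in three steps. First, invoke \Cref{mincheckpoint} in $\widetilde{O}(n^{2/3}+D)$ rounds so that every vertex in each segment $P[c_i, c_{i+1}]$ learns $M^i[l_j, \cdot]$ for all $l_j \in L$; in particular, the right checkpoint $c_{i+1}$ of each segment knows the summary tuple $\{M^i[l_j, c_{i+1}]\}_{l_j \in L}$. Second, have every checkpoint broadcast its $|L|$ summary values globally using \Cref{LP}. Since $h_{st} \leq n$ and each segment contains $\lceil n^{2/3} \rceil$ edges, the number of segments is $O(n^{1/3})$, and with high probability $|L| = \widetilde{O}(n^{1/3})$, so the total number of $O(\log n)$-bit messages is $\widetilde{O}(n^{2/3})$, which the broadcast primitive of \Cref{LP} completes in $\widetilde{O}(n^{2/3}+D)$ rounds. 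Third, each vertex $v_i$ locally combines its own $M^k[l_j, v_i]$ (known from the first step) with the broadcast summaries from all $i' < k$ according to the displayed equation, recovering $M[l_j, v_i] = |sl_j \diamond P[v_i, t]|$ for every $l_j \in L$.

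The main obstacle is largely conceptual rather than technical: one must verify the segment-wise decomposition is faithful, i.e., that the union of the candidate sets ``$u \in P[c_{i'}, c_{i'+1}]$ for $i' < k$'' together with ``$u \in P[c_k, v_i]$'' equals exactly the set $\{u : u \leq_P v_i\}$, so that the outer minimum above truly equals $M[l_j, v_i]$. Overlap at the checkpoints themselves is harmless because we take a minimum. All probabilistic correctness, namely the landmark hitting property guaranteeing $|L| = \widetilde{O}(n^{1/3})$ and the correctness of the distances $|ul_j|_{G \setminus P}$ used inside each $M^i$, is already inherited from the subroutines invoked. The round complexity is therefore $\widetilde{O}(n^{2/3}+D)$ from \Cref{mincheckpoint} plus $\widetilde{O}(n^{2/3}+D)$ from the broadcast, matching the claim.
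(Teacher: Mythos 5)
Your proposal is correct and follows essentially the same route as the paper's proof: run the segment-local pipelining of \Cref{mincheckpoint}, broadcast the $\widetilde{O}(n^{2/3})$ checkpoint summaries $M^{i'}[l_j,c_{i'+1}]$ via \Cref{LP}, and let each $v_i$ take the minimum of its in-segment value $M^k[l_j,v_i]$ and the broadcast values from earlier segments. The segment-wise decomposition you verify is exactly the two-case analysis (detour starts in the same segment versus an earlier segment) used in the paper, so nothing further is needed.
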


\begin{proof}
We first run the algorithm of \Cref{mincheckpoint}. After that, using \Cref{LP}, we can broadcast all values of $M^{i}[l_j,c_{i+1}]$, for all $l_j \in L$ and all segments $P[c_i, c_{i+1}]$, to all the vertices in $\widetilde{O}(n^{2/3}+D)$ rounds. Now, we claim that we already have all the necessary ingredients to calculate \[|sl_j \diamond P[v_{i},t]|= \min_{u \, : \, s \leq_P u \leq_P v_i} \{|su|+|ul_j|_{G \setminus P}\}\] for each $l_j \in L$ at $v_i$. Assume that $(v_i,v_{i+1})$ lies in the segment between $c_k$ and $c_{k+1}$. There are two cases, depending on the vertex $u$ where the detour of $sl_j \diamond P[v_{i},t]$ starts.

\begin{enumerate}

    \item If the detour starts at the same segment, then the length of $sl_j \diamond P[v_{i},t]$ is 
    \[M^k[l_j,u] = \min_{u \, : \, c_k \leq u \leq v_i} \left\{|su|+|ul_j|_{G \setminus P}\right\}.\] The vertex $v_i$ has learned this value from the algorithm of \Cref{mincheckpoint}.

    \item If the detour starts at some other segment, then the length of $sl_j \diamond P[v_{i},t]$ is \[ \min_{c_x \in C \, : \, x < k} \, \,  \, \min_{u \, : \, c_k \leq u \leq v_{k+1}}  \left\{|su|+|ul_j|_{G \setminus P}\right\} = \min_{c_x \in C \, : \, x < k} M^{x}[l_j,c_{k+1}].\]  The vertex $v_i$ as learned  all the values of $M^{x}[l_j,c_i]$ from the broadcast.
\end{enumerate}
The length of $sl_j \diamond P[v_{i},t]$ is the minimum value of the above two cases.
\end{proof}

\subsection{Distances From the Landmark Vertices to \texorpdfstring{$t$}{t}}\label{land to t}

This third part is be symmetric to \Cref{s to land}. 


\begin{lemma}[Part 2]
\label{lem : land to t}
    There exists an $\widetilde{O}(n^{2/3}+D)$-round randomized algorithm that lets each vertex $v_i$ in $P$ compute the length of the path $l_jt \diamond P[s,v_{i+1}]$, for all $l_j \in L$, with high probability.
\end{lemma}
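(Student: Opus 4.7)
The plan is to mirror the argument of \Cref{lem : s to land} but reversed along $P$. The central quantity is $N[l_j, v_{i+1}] := |l_j t \diamond P[s,v_{i+1}]|$. Since any path realizing this value consists of a detour from $l_j$ in $G \setminus P$ that re-enters $P$ at some vertex $u$ with $v_{i+1} \leq_P u$, followed by the $P$-suffix from $u$ to $t$, we have
\[
N[l_j, v_{i+1}] \;=\; \min_{u \, : \, v_{i+1} \leq_P u} \bigl\{|l_j u|_{G \setminus P} + |ut|\bigr\},
\]
yielding the backward recurrence $N[l_j, v_{i+1}] = \min\{N[l_j, v_{i+2}],\; |l_j v_{i+1}|_{G \setminus P} + |v_{i+1} t|\}$. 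This is the time-reversed analogue of the recurrence used for $M$ in Part~1.

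First I would reuse \Cref{lem : land to land} to place every $|l_j l_k|_{G \setminus P}$ at every vertex, and then run the $k$-source $h$-hop BFS of \Cref{kbfs} from each $l_j \in L$ in the \emph{forward} graph $G \setminus P$ (as opposed to $(G \setminus P)^R$, which was used in \Cref{lemma: vilj path}) with $h = n^{2/3}$ and $k = |L| = \widetilde{O}(n^{1/3})$. Every $v \in P$ then knows $|l_j v|_{G \setminus P}$ whenever this distance is at most $n^{2/3}$; for larger distances, each vertex locally composes through an intermediate landmark, which must lie on the shortest $l_j$-to-$v$ path by \Cref{Lem : Landmark Vertex}. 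The total cost is $\widetilde{O}(n^{2/3}+D)$ rounds, by the same accounting as \Cref{lemma: vilj path}.

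Next I would partition $P$ with the same checkpoint set $C$ and define the localized quantity $N^i[l_j, v] := \min_{u \, : \, v \leq_P u \leq_P c_{i+1}} \{|l_j u|_{G \setminus P} + |ut|\}$ for $v$ in segment $P[c_i, c_{i+1}]$. Within each segment I would compute $N^i[l_j, \cdot]$ by a \emph{backward} sweep: initialize $N^i[l_j, c_{i+1}] = |l_j c_{i+1}|_{G \setminus P} + |c_{i+1} t|$ at $c_{i+1}$, then propagate edge-by-edge toward $c_i$, each vertex taking the minimum of the value received from its successor and its own local contribution $|l_j v|_{G \setminus P} + |vt|$. Pipelining across all $|L|$ landmarks keeps this step at $O(n^{2/3})$ rounds, exactly as in \Cref{mincheckpoint}. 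Finally I would broadcast all $O(|L|^2) = \widetilde{O}(n^{2/3})$ values $\{N^i[l_j, c_i]\}$ in $\widetilde{O}(n^{2/3}+D)$ rounds via \Cref{LP}, after which each $v_i$ computes $N[l_j, v_{i+1}] = \min\{N^k[l_j, v_{i+1}],\; \min_{x > k} N^x[l_j, c_x]\}$ locally, where $k$ is the segment containing $v_{i+1}$ and $N^k[l_j, v_{i+1}]$ is sent from $v_{i+1}$ to $v_i$ over the single incident edge.

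The main obstacle here is purely bookkeeping: every ingredient and every round count is symmetric to \Cref{lem : s to land}, so the only care needed is to flip the BFS direction, reverse the direction of the pipelining sweep, and re-index the min-recurrence so that later-segment detours combine correctly with the same-segment contribution.
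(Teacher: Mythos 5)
Your proposal is correct and is essentially the paper's argument: the paper proves this lemma by simply applying \Cref{lem : s to land} to the reversed graph $G^R$ (with $P$ reversed) and then shifting the stored value from $v_{i+1}$ to $v_i$ over the edge $(v_i,v_{i+1})$ in $O(|L|)$ extra rounds, whereas you unroll that same symmetry explicitly (forward landmark BFS in $G\setminus P$, segment-local backward sweeps, checkpoint broadcast, final one-hop transfer). The content and round accounting are the same, so there is nothing to fix beyond the cosmetic point that the number of broadcast values is $O(|L|\cdot\ell)$ rather than $O(|L|^2)$, which is still $\widetilde{O}(n^{2/3})$.
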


\begin{proof}
    To prove this lemma, we consider the reverse graph $G^R$ of $G$ by reversing all the edges of $G$. The path $P$ will also be reversed to $P^R$, which is $(t=v'_0,v'_1, \dots , v'_{h_{st}}=s)$, where the vertex $v_i$ in the $P$ is represented as $v'_{h_{st}-i}$ in $P^R$. Our goal is to let $v_i$ compute the distances $|l_jt \diamond P[s,v_{i+1}]|$ for all $l_j \in L$ in the graph $G$. This is equivalent to letting each $v'_i$ compute the distances $|tl_j \diamond (v'_{i-1},s)|$  for all $l_j \in L$ in the graph $G^R$, which can be done using the algorithm of \Cref{lem : s to land} in  $\widetilde{O}(n^{2/3}+D)$ rounds. 
    
    A slight mismatch here is that the algorithm of \Cref{lem : s to land} will store the distance $|l_jt \diamond P[s,v_{i+1}]|$ in $v_{i+1}$ and not $v_i$. This can be resolved by transmitting the information from $v_{i+1}$ to $v_i$ along the edge $(v_i, v_{i+1})$, which costs additional $O(|L|)=\widetilde{O}(n^{1/3})$ rounds.
\end{proof}

We are now ready to prove \Cref{Thm: Long Detour Part Works}.

\begin{proof}[Proof of \Cref{Thm: Long Detour Part Works}]

    

The algorithm runs the procedures of \Cref{lem : s to land,lem : land to t} by spending  $\widetilde{O}(n^{2/3}+D)$ rounds. From the information stored in $v_i$, the vertex $v_i$ can locally compute 
     $$x=\min _{l \in L} \{ |sl \diamond P[v_i,t]| + |lt \diamond P[s,v_{i+1}]\}.$$
Observe that  $x$ is the minimum of the lengths of all paths from $s$ to $t$ that avoids $e$ and contains at least one landmark vertex in $L$.     
    
    We show that the output $x$ satisfies the requirement. Let the path that realizes value $x$ be $P_i$.
     As $P_i$ is a path from $s$ to $t$ avoiding $e$, we have $x \geq |st \diamond e|$.  Let $P_i^*$ be a shortest path from $s$ to $t$ avoiding $e$ with a long detour. By \Cref{Lem : Landmark Vertex}, $P_i^*$ contains a landmark vertex with high probability. Hence, by the definition of $x$, we have $x \leq |P_i^*|$, as required.
\end{proof}


\newcommand{\GGdp}{G(\Gamma, d, p)}
\newcommand{\Gkdpp}{G(k, d, p, \phi)}
\newcommand{\GkdppMx}{G(k, d, p, \phi, M, x)}
\newcommand{\GGdptitle}{\texorpdfstring{$\GGdp$}{G(Gamma, d, p)}}
\newcommand{\Gkdpptitle}{\texorpdfstring{$\Gkdpp$}{G(k, d, p, phi)}}
\newcommand{\bin}{\{0,1\}}
\newcommand{\disj}{\mathsf{disj}}
\newcommand{\inprod}[2]{\langle #1, #2\rangle}
\newcommand{\Pstar}{\mathcal{P^*}}
\newcommand{\Pc}{\mathcal{P}}
\newcommand{\Qc}{\mathcal{Q}}
\newcommand{\Rc}{\mathcal{R}}
\newcommand{\Ac}{\mathcal{A}}
\newcommand{\Tc}{\mathcal{T}}
\newcommand{\Mc}{\mathcal{M}}

\section{Lower Bound}
\label{sec:lowerbound}


In this section, we present our \emph{randomized} $\widetilde{\Omega}(n^{2/3} + D)$ lower bound for the \emph{exact} versions of both \secsisp{} and \rpath{}. A randomized algorithm that fails with a probability of at most $\epsilon$ is called an \emph{$\epsilon$-error} algorithm. We consider the more general $\CONGEST(B)$ model where each vertex is allowed to send a $B$-bit message to each of its neighbors in each round. The standard $\CONGEST$ model is a special case of the $\CONGEST(B)$ model with $B= \Theta(\log n)$.

Our main results are stated as follows.

\begin{restatable}[\secsisp{} lower bound]{proposition}{rpathlower}
\label{thm:2sisp lower}
    For any $p\geq 1$, $B\geq 1$ and $n\in \left\{d^{3p/2}\mid d\geq 2 \right\}$, there exists a constant $\epsilon>0$ such that any $\epsilon$-error distributed algorithm for the \secsisp{} problem requires $\Omega\left(\frac{n^{2/3}}{B\log n}\right)$ rounds on some $\Theta(n)$-vertex unweighted directed graph of diameter $2p+2$ in the $\CONGEST(B)$ model.
\end{restatable}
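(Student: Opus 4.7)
The plan is to reduce from two-party set disjointness via the framework of Das Sarma et al., using a new graph construction tailored for an $n^{2/3}$ bound. I start from $\disj_N$ on inputs of size $N = \Theta(n^{2/3})$, for which any constant-error randomized protocol needs $\Omega(N)$ communication bits. The construction is an unweighted directed graph $G$ on $\Theta(n)$ vertices whose \secsisp{} answer encodes the disjointness bit. It has three parts: on Alice's side a directed $s$-$t$ path $P$ of length $\Theta(n^{2/3})$ (on which \secsisp{} is asked); on Bob's side a complete bipartite graph on two sets $U,V$ of size $\Theta(n^{1/3})$ each whose $\Theta(n^{2/3})$ edges are \emph{oriented} according to Bob's input; and $\Theta(n^{1/3})$ \emph{horizontal paths} of length $\Theta(n^{2/3})$ linking the two sides \emph{bidirectionally}---some flow from Alice's side into $U$, others from $V$ back to Alice's side---so that any detour from $P$ can enter Bob's part at some $u \in U$, traverse exactly one bipartite edge $(u,v)$, and return via $v \in V$. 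Finally, an overlay tree of depth $O(p)$ and branching factor $\Theta(n^{2/(3p)})$ is added on top of all vertices to force the diameter to $2p+2$ while contributing only $O(\log n)$ edges to any natural Alice-Bob cut.

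The main structural lemma, proved by case analysis of shortest $s$-$t$ paths in $G \setminus e_i$, is a one-to-one correspondence between each edge $e_i \in P$ and an edge $e'_i$ of the bipartite graph such that $|st \diamond e_i|$ equals a fixed value $L_{\text{short}}$ whenever $e'_i$ is oriented in the ``correct'' direction for the detour described above, and is strictly larger (by at least one hop) otherwise. Alice's input $x$ is encoded by a small local gadget at each edge $e_i$ of $P$ that toggles whether $L_{\text{short}}$ is achievable at index $i$ (e.g., by a shortcut whose presence depends on $x_i$). Taken together, $\min_i |st \diamond e_i|$---which is the \secsisp{} output---equals $L_{\text{short}}$ if and only if there exists some $i$ with $x_i = 1$ and $e'_i$ oriented correctly, i.e., iff $\disj_N(x,y)$ detects an intersection. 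Hence any algorithm solving \secsisp{} on $G$ solves $\disj_N$.

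For the round lower bound I apply the cut-dilation argument. The cut between Alice's side $V_A$ and Bob's side $V_B$ contains $\Theta(n^{1/3})$ horizontal-path edges plus $O(\log n)$ tree edges. The key dilation observation is that every cut edge of a horizontal path is $\Omega(n^{2/3})$ hops inside $G$ away from every vertex of $V_B$ that can influence the output; consequently, for $T = o(n^{2/3})$, no bit pushed through a horizontal-path cut edge has time to reach the output, and only the $O(\log n)$ tree edges effectively transfer information, carrying at most $O(T \cdot B \log n)$ bits. Since $\disj_N$ requires $\Omega(n^{2/3})$ bits, we get $T = \Omega(n^{2/3} / (B \log n))$, as claimed. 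The main obstacle I anticipate is engineering the gadget so that the intended detour through $e'_i$ is truly shortest---i.e., no alternative route (via another bipartite edge, a tree shortcut, or wrapping back through Alice's side) matches $L_{\text{short}}$ when $e'_i$ is oriented ``badly''. This requires careful balancing of the various path lengths, and the asymmetric bidirectional use of horizontal paths highlighted in the technical overview is precisely the feature that makes this balancing combinatorially feasible.
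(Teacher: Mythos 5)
Your graph construction and the reduction from set disjointness match the paper's approach in spirit: an $s$-$t$ path on Alice's side, an oriented complete bipartite graph on Bob's side encoding $y$, horizontal paths of length $\Theta(n^{2/3})$ used bidirectionally, Alice's bits $x_i$ toggling whether the short detour for the $i$-th path edge is available, and \secsisp{} outputting the short value iff $x$ and $y$ intersect. The part that does not survive scrutiny is your round lower bound. You argue via a fixed Alice--Bob cut that the $\Theta(n^{1/3})$ horizontal-path cut edges are useless because ``every cut edge of a horizontal path is $\Omega(n^{2/3})$ hops away from every vertex of $V_B$ that can influence the output,'' so only the $O(\log n)$ tree edges carry information. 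This is false in your own construction: the overlay tree is attached to \emph{all} vertices (as it must be to force diameter $2p+2$), so every vertex --- including any midpoint of a horizontal path --- is within $O(p)=O(\log n)$ hops of Bob's bipartite graph. A bit injected across a horizontal-path cut edge can immediately hop onto a tree leaf and reach Bob's side in $O(\log n)$ rounds, so the dilation claim collapses, and the naive fixed-cut counting only yields $T\cdot B\cdot\Theta(n^{1/3})=\Omega(n^{2/3})$, i.e.\ the too-weak bound $T=\Omega(n^{1/3}/B)$.

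The correct way to rule out the horizontal paths \emph{and} bound the tree's throughput simultaneously is the simulation machinery of Das Sarma et al., which is exactly what the paper uses: a two-party protocol in which Alice and Bob each simulate a time-shrinking set of vertices, so that in each round only the $O(dp)$ tree edges at the moving frontier need to be communicated, giving $O(dpB)$ bits per round as long as $T<(d^p-1)/2$ (the paper's \Cref{thm:simulation} and \Cref{lem:disj}). The paper then shows (\Cref{lem:disj G(kdpp)}) that any $T$-round algorithm on the modified graph $\Gkdpp$ --- which adds $\Pstar$, the $\Qc^\ell,\Rc^\ell$ paths and the bipartite edges --- can be simulated on $\GGdp$ with only constant-factor overhead, because $\alpha$ can locally simulate all of $\Pstar,\Qc^\ell,\Rc^\ell$ and $\beta$ can locally simulate the bipartite endpoints; this is where your ``cut'' intuition is made rigorous. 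You would also need to specify how the inputs are installed (the paper orients the edges $\{s_{i-1},q^{\phi_1(i)}_{2(i-1)}\}$ by $x$ and the bipartite edges by $y$, at an additive $O(k/B)$ cost) and carry out the length balancing you flagged (the paper doubles the lengths of $\Qc^\ell,\Rc^\ell$ and attaches them at positions $2(i-1)$ and $2i$ so that exiting earlier or returning later is strictly longer, its \Cref{lem:GkdppMx Rpath-dir correspondence}); but the missing simulation argument is the essential gap, since without it the stated $\Omega\bigl(n^{2/3}/(B\log n)\bigr)$ bound is not established.
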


With the lower bound for the \secsisp{} problem in \Cref{thm:2sisp lower}, we can derive a corollary that provides a lower bound for the \rpath{} problem. This follows from the fact that the \secsisp{} problem can be reduced to the replacement path problem, requiring only additional $O(D)$ rounds to compute the minimum replacement path for each edge along the given $s$-$t$ path.

\begin{corollary}[\rpath{} lower bound]\label{thm:rpath lower}
    For any $p\geq 1$, $B\geq 1$ and $n\in \left\{d^{3p/2}\mid d\geq 2 \right\}$, there exists a constant $\epsilon>0$ such that any $\epsilon$-error distributed algorithm for the \rpath{} problem requires $\Omega\left(\frac{n^{2/3}}{B\log n} \right)$ rounds on some $\Theta(n)$-vertex unweighted directed graph of diameter $2p+2$ in the $\CONGEST(B)$ model.
\end{corollary}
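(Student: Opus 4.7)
The plan is to derive Corollary \ref{thm:rpath lower} directly from Proposition \ref{thm:2sisp lower} via a straightforward reduction: any \rpath{} algorithm can be turned into a \secsisp{} algorithm at the cost of only $O(D \log n / B)$ additional rounds of global min-aggregation. Since the graphs produced by Proposition \ref{thm:2sisp lower} have diameter $D = 2p+2 = O(\log n)$, this additive overhead is absorbed into the $\Omega(n^{2/3}/(B\log n))$ lower bound.

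Concretely, I would take any $\epsilon$-error \rpath{} algorithm $A$ with round complexity $T$ running on the $\Theta(n)$-vertex family constructed in Proposition \ref{thm:2sisp lower}. After running $A$, the first endpoint $v_i$ of each edge $e=(v_i,v_{i+1}) \in P$ knows $|st \diamond e|$, which is an integer in $[1,n]$ representable in $O(\log n)$ bits. To obtain a \secsisp{} algorithm, I would then construct a BFS tree of the underlying undirected communication network rooted at $s$ in $O(D)$ rounds, push the minimum up the tree so that $s$ learns $\min_{e \in P} |st \diamond e|$, and broadcast this value back down the tree so that every vertex in $P$ learns it. Each per-edge transmission carries $O(\log n)$ bits and thus costs $O(\lceil \log n / B \rceil)$ rounds, giving a total overhead of $O(D \lceil \log n / B \rceil) = O(p \log n / B)$ rounds on top of $T$. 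Note that propagating the minimum directly along $P$ would cost $O(h_{st})$ rounds, which can be as large as $\Theta(n^{2/3})$ on the lower-bound graphs, so routing through the low-diameter BFS tree of the underlying network is essential.

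The only substantive check, and the closest thing to an obstacle, is verifying that the additive overhead is truly negligible compared to the target lower bound. The constraint $n \in \{2^{3p/2}, 3^{3p/2}, \dots\}$ in Proposition \ref{thm:2sisp lower} forces $p \leq \tfrac{2}{3}\log_2 n$, so $p \log n / B = O(\log^2 n / B)$, which is $o(n^{2/3}/(B \log n))$ for all sufficiently large $n$ in the family. Hence, if \rpath{} admitted an $\epsilon$-error algorithm with $T = o(n^{2/3}/(B \log n))$ rounds on these graphs, \secsisp{} would admit one with $T + O(p \log n / B) = o(n^{2/3}/(B \log n))$ rounds, contradicting Proposition \ref{thm:2sisp lower}. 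This yields the claimed $\Omega(n^{2/3}/(B \log n))$ lower bound for \rpath{} on the same $\Theta(n)$-vertex unweighted directed graphs of diameter $2p+2$.
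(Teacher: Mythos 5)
Your proposal is correct and is essentially the paper's own argument: the paper obtains the corollary by exactly this reduction, noting that after running an \rpath{} algorithm the minimum of $|st \diamond e|$ over all edges $e$ in $P$ can be aggregated (and disseminated) in $O(D)$ additional rounds over the low-diameter network, which is negligible against $\Omega\left(\frac{n^{2/3}}{B\log n}\right)$ since the lower-bound graphs have diameter $2p+2 = O(\log n)$. Your treatment is just slightly more explicit about the $O(\lceil \log n/B\rceil)$ per-hop cost and the check that $p \leq \tfrac{2}{3}\log_2 n$ makes the overhead $o\left(\frac{n^{2/3}}{B\log n}\right)$.
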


In \Cref{thm:2sisp lower} and \Cref{thm:rpath lower}, we demonstrate an infinite sequence of graphs with diameter $D = \Theta(\log n)$ on which the \secsisp{} and \rpath{} problems require $\widetilde{\Omega}(n^{2/3})$ rounds to solve. We extend the $\widetilde{\Omega}(n^{2/3})$  lower bound to an $\widetilde{\Omega}(n^{2/3}+D)$ lower bound, as follows.

\mainLB*

\begin{proof}
As \Cref{thm:2sisp lower} and \Cref{thm:rpath lower} already establish an $\widetilde{\Omega}(n^{2/3})$ lower bound, it remains to show an $\Omega(D)$ lower bound. To do so, for any integer $D$, we construct a graph of diameter $D$ on which the \secsisp{} and \rpath{} problems require $\Omega(D)$ rounds to solve. The construction consists of two parallel directed paths from $s$ to $t$: one of length $D$ and the other of length $D+1$. We allow zero or one edge in the longer path to be reversed. Consequently, the length of the second shortest path from $s$ to $t$ is either $D+1$ or $\infty$, and distinguishing between these two cases requires $\Omega(D)$ rounds. This graph has $n = 2D + 1$ vertices. To extend the lower bound to any $n \geq 2D + 1$, we can attach a clique of size $n - (2D + 1)$ to any edge in the graph.
\end{proof}
\yanyu{i change vertex to edge, since this way the diameter will still be D.}
\yijun{slightly edited it}
    



\paragraph{Technical framework.} 
Our lower bound follows the framework of the work~\cite{das2011distributed}, which uses the graph $\GGdp$ described later. They showed how to simulate distributed algorithms for the computing functions in certain networks in the communication complexity model, thereby establishing lower bounds for computing functions like disjointness in those networks. They further reduced computing functions in these networks to solving the targeted problem. In their paper~\cite{das2011distributed}, they showed how various verification problems,  such as connectivity verification,  can be used to compute disjointness in $\GGdp$, hence demonstrating an $\widetilde{\Omega}(\sqrt{n})$ lower bound for these problems.


\paragraph{Roadmap.} 
In \Cref{subsec:comm comp}, we first review some terminologies and basic settings. 
Next, in \Cref{subsec:ggdp and sim}, we describe the graph $\GGdp$ used in the work~\cite{das2011distributed}, and restate their simulation lemma that enable us to link communication complexity to distributed computation of functions. 
Then, in \Cref{subsec:modification}, we describe our construction $\Gkdpp$ and its directed version $\GkdppMx$ which are modified from $\GGdp$ and show that any algorithm on the modified graph $\Gkdpp$ can be simulated on $\GGdp$ with $O(1)$-factor overhead. Lastly, in \Cref{subsec:lower bound reduction}, we conclude by showing that computing the disjointness function in $\Gkdpp$ can be reduced to computing the second simple shortest path (\secsisp{}) and thus showing a lower bound on both the \secsisp{} problem and the \rpath{} problem in the $\CONGEST$ model. 




\subsection{Communication Complexity}
\label{subsec:comm comp}

We review the terminology from the work~\cite{das2011distributed} regarding the complexity of computing a Boolean function $f: \bin^b \times \bin^b \to \bin$ in both the two-party communication model and the $\CONGEST(B)$ model.

\begin{description}
    \item[Communication complexity:] Suppose Alice is given $x\in \bin^b$ and Bob is given $y\in \bin^b$. The goal is to send messages to each other for multiple rounds to correctly output the value of $f(x,y)$ at the end of the process. For any $\epsilon>0$ and any Boolean function $f$, let $R_\epsilon^{\operatorname{cc-pub}}(f)$ denote the minimum worse-case communication in bits for both Alice and Bob to output $f(x,y)$ correctly with at most $\epsilon$ failure with \emph{public randomness}.
    \item[Distributed computation of $f$ in $G$:] Given a graph $G = (V,E)$, and two distinguished vertices $\alpha, \beta\in V$. Suppose Alice received $x\in \bin^b$ as input on vertex $\alpha$ and Bob received $y\in \bin^b$ as input on vertex $\beta$. They are to communicate via the network $G$ under the $\CONGEST(B)$ model constraint and output $f(x,y)$ at the end of the process. For any $\epsilon>0$, any graph $G$ and any Boolean function $f$, define $R_\epsilon^G(f; B)$ as the minimum worst-case number of rounds of the best $\epsilon$-error randomized distributed algorithm for computing $f$ on $G$ under the $\CONGEST(B)$ model. We often write $R_\epsilon^G(f)$ when $B$ is clear in the context. Observe that the first model can be seen as a special case of this model with $B=1$ and $G=(\{\alpha,\beta\}, \{\{\alpha,\beta\}\})$ being an edge.
\end{description}

In this paper, we only consider the set disjointness Boolean function defined below, where $\inprod{x}{y}\coloneqq \sum_{i=1}^b x_i y_i$ denotes the inner product between vectors $x$ and $y$.
$$
\disj_b: \bin^b \times \bin^b \to \bin, \; \text{ where } \;  \disj_b(x,y)=\begin{cases}
    1 & \text{if }\inprod{x}{y} =0;\\
    0 & \text{otherwise}.
\end{cases}
$$

\subsection{The Graph \GGdptitle{} and the Simulation Lemma}
\label{subsec:ggdp and sim}
First, we describe the graph family $\GGdp$ parameterized by $\Gamma, d$ and $p$. In short, $\GGdp$ consists of $\Gamma$ copies of $d^p$-vertex paths, all connected to the leaves of a $p$-depth, $d$-branch tree. Formally, we name the $i$th path by $\mathcal{P}^i$ and we name the vertices in $\mathcal{P}^i$ by $v^i_0, v^i_1, \dots, v^i_{d^p-1}$ in order. We index the vertices in the tree $\Tc$ by their depth and their order in the level. The root of $\Tc$ is $u_0^0$, and the leaves are $u_0^p, u_1^p, \dots,u_{d^p-1}^p$. Besides the path and tree edges, we further add edges $\{u_i^p, v_i^j\}$ for all $0\leq i\leq d^p-1$ and $1\leq j\leq \Gamma$. Lastly, we set $\alpha=u^p_0$ and $ \beta = u^p_{d^p-1}$.
Refer to \Cref{fig:GGdp} for an illustration. We have the following observation.

\begin{observation}[Basic properties of $\GGdp$~\cite{das2011distributed}]
    There are $\Gamma d^p + \frac{d^{p+1}-1}{d-1}=\Theta(\Gamma d^p)$ vertices in $\GGdp$, and the diameter of $\GGdp$ is $2p+2$.
\end{observation}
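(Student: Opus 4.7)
The statement decomposes into two independent claims, both of which are elementary given the explicit construction, so the proof plan is direct. For the exact vertex count, I would tally the two vertex-disjoint components of $\GGdp$: the $\Gamma$ horizontal paths $\mathcal{P}^1,\dots,\mathcal{P}^\Gamma$ contribute $\Gamma \cdot d^p$ vertices, and the complete $d$-ary tree $\Tc$ of depth $p$ contributes $\sum_{i=0}^{p} d^i = (d^{p+1}-1)/(d-1)$ vertices. The attachment edges $\{u_i^p, v_i^j\}$ add no new vertices. Summing yields the exact total, and since for $d \geq 2$ the tree contribution is at most $2 d^p$, hence at most $2 \Gamma d^p$ whenever $\Gamma \geq 1$, the asymptotic bound $\Theta(\Gamma d^p)$ follows.

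For the diameter upper bound of $2p+2$, the key observation I would leverage is that every vertex in $\GGdp$ lies within $1$ edge of some leaf of $\Tc$: a path vertex $v_i^j$ is directly adjacent to the leaf $u_i^p$, while any tree vertex at depth $i$ is itself at tree distance at most $p - i \leq p$ from the root. Consequently, for any two vertices $x,y$, I would route the walk as follows: step from $x$ to a nearby leaf (at most $1$ step), climb from that leaf to the root (at most $p$ steps), descend from the root to the leaf associated with $y$ (at most $p$ steps), and take the final attachment edge to reach $y$ (at most $1$ step). The total length is at most $2p+2$, which gives the upper bound.

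For the matching lower bound, I would witness it with the pair $v_0^1$ and $v_{d^p-1}^2$, which lie on distinct paths. Since the paths $\mathcal{P}^j$ are pairwise vertex-disjoint and the only cross-path connectivity goes through tree leaves via the attachment edges, any walk between $v_0^1$ and $v_{d^p-1}^2$ must consume at least one attachment edge on each end and realize some leaf-to-leaf walk inside $\Tc$ in between. In a complete $d$-ary tree of depth $p$, any two distinct leaves whose lowest common ancestor is the root are at tree distance exactly $2p$; the leaves $u_0^p$ and $u_{d^p-1}^p$ satisfy this for $d \geq 2$, so the tree segment of the walk has length at least $2p$, yielding a total of at least $2p + 2$. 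The whole argument is mechanical, and the only minor care needed is to confirm that the leftmost and rightmost leaves of a $d$-ary tree have the root as their lowest common ancestor, which is immediate from the standard indexing; I therefore anticipate no real obstacle.
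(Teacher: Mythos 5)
Your vertex count and your diameter \emph{upper} bound are correct (the paper itself states this observation without proof, importing it from Das Sarma et al., and the upper bound $2p+2$ is the part actually used downstream). One small slip: a tree vertex at depth $i$ is at distance $i$ from the root, not $p-i$, but since both are at most $p$ your routing argument $1+p+p+1=2p+2$ is unaffected.

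The genuine gap is in your matching lower bound. You assert that any walk between $v_0^1$ and $v_{d^p-1}^2$ must ``realize some leaf-to-leaf walk inside $\Tc$'' of length at least $2p$, but this ignores walks that travel along the horizontal paths: the walk $v_0^1 \to v_1^1 \to \cdots \to v_{d^p-1}^1 \to u_{d^p-1}^p \to v_{d^p-1}^2$ crosses between $\Pc^1$ and $\Pc^2$ through a single leaf using two attachment edges and \emph{zero} tree edges, and has length $d^p+1$. So the distance of your witness pair is $\min\{2p+2,\, d^p+1,\,\dots\}$, and the claimed equality can fail: for $d=2$, $p=1$ the graph has diameter $3<2p+2$. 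To salvage exactness you would need a parameter condition such as $d^p+1\geq 2p+2$ (which does hold in the regime the paper uses, where $d^p=k^2$) and an argument that also rules out hybrid walks that mix path edges with a shallower tree detour --- e.g.\ by noting that reducing the tree traversal below $2p$ forces the entry and exit leaves into a common depth-$1$ subtree, which in turn forces at least $d^p-d^{p-1}$ path edges. As written, the lower-bound step does not go through, though nothing in the paper's subsequent use of $\GGdp$ depends on it (the analogous observation for $\Gkdpp$ only claims the diameter is \emph{at most} $2p+2$).
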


\begin{figure}[htbp]
    \centering
    \begin{tikzpicture}[scale=1.1]
    \tikzset{
        vertex/.style={circle, draw, minimum size=4pt, inner sep=1pt, fill=white},
        filled/.style={circle, draw, fill=black, minimum size=4pt, inner sep=1pt},
    }

    \node[] at (-1,0) {$\mathcal{P}^1$};
    \node[vertex] (a) at (0,0) [label=below:$v^1_0$] {};
    \node[vertex] (b) at (1,0) [label=below:$v^1_1$] {};
    \node[vertex] (c) at (2,0) [label=below:$v^1_2$] {};
    \node[] (ca) at (3,0) {};
    \node[] at (3.5,0) {$\dots$};
    \node[] (cb) at (4,0) {};
    \node[vertex] (d) at (5,0) [label=below:$v^1_{d^p-1}$] {};

    \node (dda) at (0,-1) {$\vdots$};
    \node (ddb) at (1,-1) {$\vdots$};
    \node (ddc) at (2,-1) {$\vdots$};
    \node (ddd) at (5,-1) {$\vdots$};
    \node (ddda) at (0,-2.5) {$\vdots$};
    \node (dddb) at (1,-2.5) {$\vdots$};
    \node (dddc) at (2,-2.5) {$\vdots$};
    \node (dddd) at (5,-2.5) {$\vdots$};
    
    \node[] at (-1,-1.5) {$\mathcal{P}^\ell$};
    \node[vertex] (e) at (0,-1.5) [label=below:$v^\ell_0$] {};
    \node[vertex] (f) at (1,-1.5) [label=below:$v^\ell_1$] {};
    \node[vertex] (g) at (2,-1.5) [label=below:$v^\ell_2$] {};
    \node[] (ga) at (3,-1.5) {};
    \node[] at (3.5,-1.5) {$\dots$};
    \node[] (gb) at (4,-1.5) {};
    \node[vertex] (h) at (5,-1.5) [label=below:$v^\ell_{d^p-1}$] {};
    
    \node[] at (-1,-3.1) {$\mathcal{P}^\Gamma$};
    \node[vertex] (i) at (0,-3.1) [label=below:$v^\Gamma_0$] {};
    \node[vertex] (j) at (1,-3.1) [label=below:$v^\Gamma_1$] {};
    \node[vertex] (k) at (2,-3.1) [label=below:$v^\Gamma_2$] {};
    \node[] (ka) at (3,-3.1) {};
    \node[] at (3.5,-3.1) {$\dots$};
    \node[] (kb) at (4,-3.1) {};
    \node[vertex] (l) at (5,-3.1) [label=below:$v^\Gamma_{d^p-1}$] {};
    
    \node at (5.5,3) {$\mathcal{T}$};
    \node[vertex] (o) at (3.6,3.5) [label=right:$u_0^0$] {};
    
    \node (left) at (2.7,3.05) {};
    \node (right) at (4.45,2.8) {};
    
    \node (eleft) at (2.45,3) {$\iddots$};
    \node (eright) at (4.8,2.6) {$\ddots$};
    \node[vertex] (pq) at (2,2.6) {};
    
    \node[vertex] (p) at (1,1.85) [label=right:$u^{p-1}_0$] {};
    \node[vertex] (q) at (3,1.85) [label=right:$u^{p-1}_1$] {};
    \node[vertex] (qq) at (5.4,1.85) {};
    
    \node[filled] (s) at (0.5,1) [label=left:{$\alpha=u^p_0$}] {};
    \node[vertex] (m) at (1.5,1) [label=left:$u^p_1$] {};
    \node[vertex] (n) at (2.5,1) [label=left:$u^p_2$] {};
    \node[vertex] (nn) at (3.5,1) {};
    \node (mid) at (4.5,1) {$\cdots$};
    \node[vertex] (rl) at (5,1) {};
    \node[filled] (r) at (5.8,1) [label=right:{$\beta=u^p_{d^p-1}$}] {};
    
    \draw (a) -- (b) -- (c) -- (ca);
    \draw (e) -- (f) -- (g) -- (ga);
    \draw (i) -- (j) -- (k) -- (ka);
    \draw (cb) -- (d);
    \draw (gb) -- (h);
    \draw (kb) -- (l);
    
    \draw (o) -- (left);
    \draw (o) -- (right);
    \draw (p) -- (pq);
    \draw (q) -- (pq);
    \draw (p) -- (s);
    \draw (p) -- (m);
    \draw (q) -- (n);
    \draw (q) -- (nn);
    \draw (qq) -- (r);
    \draw (qq) -- (rl);
    
    \draw (s) to [out=265, in=35] (a);
    \draw (s) to [out=268, in=65] (e);
    \draw (s) to [out=273, in=70] (i);
    \draw (m) to [out=265, in=35] (b);
    \draw (m) to [out=268, in=65] (f);
    \draw (m) to [out=273, in=70] (j);
    \draw (n) to [out=265, in=35] (c);
    \draw (n) to [out=268, in=65] (g);
    \draw (n) to [out=273, in=70] (k);
    \draw (r) to [out=245, in=20] (d);
    \draw (r) to [out=260, in=40] (h);
    \draw (r) to [out=270, in=55] (l);
\end{tikzpicture}
    \caption{An example of $\GGdp$ with $d=2$.}
    \label{fig:GGdp}
\end{figure}

We restate the simulation lemma of the work~\cite{das2011distributed}, which connects the computation of Boolean function in the traditional communication complexity model to the computation of Boolean function in the distributed model in graph $\GGdp$ with input vertices $\alpha$ and $\beta$.

\begin{lemma}[Simulation lemma \cite{das2011distributed}]
\label{thm:simulation}
For any $\Gamma$, $d$, $p$, $B$, $\epsilon\geq 0$, and function $f:\bin^b \times \bin^b \to \bin$, if there is an $\epsilon$-error randomized distributed algorithm that computes $f(x,y)$ faster than $\frac{d^p-1}{2}$ rounds on $\GGdp$ with $x$ given to vertex $\alpha$ and $y$ given to vertex $\beta$, i.e.,
$
R_\epsilon^{\GGdp}(f) < \frac{d^p-1}{2},
$
then there is an $\epsilon$-error randomized algorithm in the communication complexity model that computes $f$ using at most $2dpB R_\epsilon^{\GGdp}(f)$ bits of communication.
In other words,
$
R_\epsilon^{\operatorname{cc-pub}}(f)\leq 2dpB R_\epsilon^{\GGdp}(f).
$
\end{lemma}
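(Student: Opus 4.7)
The plan is to construct an explicit two-party communication protocol that simulates the hypothesized $R$-round distributed algorithm $\Pi$ for $f$ on $\GGdp$, where $R := R_\epsilon^{\GGdp}(f) < (d^p-1)/2$. I would follow the standard simulation template: Alice and Bob use the public coins to fix identical randomness at every vertex of $\GGdp$, then partition the vertex set as $V = V_A \sqcup V_B$ with $\alpha \in V_A$ and $\beta \in V_B$; each side simulates its own vertices round-by-round. For every cut edge $\{u,v\}$ with $u\in V_A$ and $v\in V_B$, the two $B$-bit messages on that edge per round must be exchanged, so the cost is $2B\cdot|E(V_A,V_B)|\cdot R$ bits. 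After $R$ rounds Alice reads off $\alpha$'s output; this equals $f(x,y)$ with probability at least $1-\epsilon$, giving $R_\epsilon^{cc-pub}(f) \le 2B\cdot|E(V_A,V_B)|\cdot R$. To obtain the advertised bound it therefore suffices to exhibit a partition whose effective cut size is at most $dp$.

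For the partition I would take $V_A$ to contain (i) the root of $\mathcal{T}$, (ii) every tree vertex whose subtree consists entirely of leaves with index less than a midpoint $k^* \approx d^p/2$, and (iii) the path prefixes $v_0^i,\ldots,v_{k^*-1}^i$ of every path $\mathcal{P}^i$; $V_B$ is the mirror image containing $\beta$. Because the leaf indices and path-position indices are aligned, all tree-leaf-to-path-vertex edges $\{u_k^p,v_k^i\}$ remain internal to one side. The naive cut then consists of (a) at most $d$ sibling edges per level along the root-to-$\beta$ trunk of $\mathcal{T}$, contributing $O(dp)$ tree edges, and (b) one middle-path edge $\{v_{k^*-1}^i,v_{k^*}^i\}$ per path, contributing $\Gamma$ path edges. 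The contribution from (a) already fits the target $dp$; the whole difficulty is to eliminate the $\Gamma$ contribution from (b).

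The critical step, which I expect to be the real obstacle, is arguing that the messages across the $\Gamma$ middle-path edges can be reconstructed locally by both parties without any extra communication, using the round bound $R<(d^p-1)/2$. The heart of the argument is that within $R$ rounds neither endpoint of a middle edge can receive any $x$- or $y$-dependent information \emph{along the path}, because its path distance to both $\alpha$ and $\beta$ exceeds $R$; any input-dependence must therefore enter only through adjacent tree leaves. The plan is to show by induction on $t\le R$ that, once Alice and Bob have jointly synchronized the tree states via the $O(dp)$ tree cut, each party can unroll the short path-side recurrence defining $v_{k^*-1}^i(t)$ and $v_{k^*}^i(t)$ from the empty initial path states, using only shared randomness and the already-exchanged tree-leaf transcripts, so both parties agree on the cross-edge messages without ever sending them. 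The most delicate check is that the tree-cut synchronization really does propagate enough information for each party to replicate the relevant tree-leaf transcripts on the opposite side within the $R$-round horizon; once that is verified, the total communication is at most $2\cdot O(dp)\cdot B\cdot R \le 2dpBR$, and Alice's simulated $\alpha$-output yields $f(x,y)$ with probability at least $1-\epsilon$, completing the proof.
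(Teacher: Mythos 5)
Your high-level template (public coins, vertex partition, round-by-round simulation, cost $=$ messages crossing the cut) is fine, but the step you yourself flag as "the real obstacle" is exactly where the argument breaks, and the proposed fix does not work. With a \emph{static} midpoint partition, the $\Gamma$ middle path edges $\{v^i_{k^*-1},v^i_{k^*}\}$ cannot have their messages reconstructed locally after exchanging only the $O(dp)$ static tree-cut messages. The reason is that every path vertex $v^i_j$ is adjacent to the tree leaf $u^p_j$, so input-dependent information re-enters the \emph{middle} of the paths through the tree after only $O(p)$ rounds, and it can do so along routes that never touch your cut. Concretely, $y$-dependence can travel $\beta \to v^i_{d^p-1} \to v^i_{d^p-2} \to u^p_{d^p-2}$, then climb inside a subtree all of whose vertices lie on Bob's side (e.g.\ when the lowest common ancestor of positions $d^p-2$ and $k^*$ is an entirely-right tree vertex), descend to $u^p_{k^*}$, and enter $v^i_{k^*}$ — reaching the cross-edge message by round roughly $2p+3 \ll R$. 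Since this influence path never crosses your static cut, Alice's view (her side plus the exchanged tree-cut messages) does not determine the message $v^i_{k^*}\to v^i_{k^*-1}$; equivalently, to "unroll the path-side recurrence" she would need the full transcripts of Bob-side tree leaves $u^p_{k^*},u^p_{k^*+1},\dots$, and each such leaf receives $\Gamma$ fresh messages per round from Bob-side path vertices that are neither exchanged nor computable by Alice. The symmetric statement kills Bob's reconstruction of the left endpoints. So the claim that the $\Gamma$ middle edges contribute nothing to the communication is false precisely in the regime $R \gg p$ where the lemma is applied. (The final bound "$2\cdot O(dp)\cdot B\cdot R \le 2dpBR$" is also not a legitimate inequality, but that is minor by comparison.)

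The argument of Das Sarma et al.\ (which the paper only cites, with a brief intuition about simulating each round with $O(dp)$ messages) avoids this problem by making the partition \emph{time-varying} rather than static: after simulated round $r$, Alice maintains the states of all path vertices except (roughly) the rightmost $r$ columns, together with all tree vertices having a leaf-descendant among the columns she keeps, and Bob is the mirror image; the hypothesis $R < \frac{d^p-1}{2}$ guarantees the two shrinking regions jointly cover the graph throughout. Because each party's region retreats by one path column per round, a party never needs an incoming message from a path vertex it has stopped simulating in order to update the vertices it still simulates, so \emph{no} path-edge messages are ever communicated; the only messages that must be exchanged each round are the tree messages along the moving boundary (children of the $O(p)$ "straddling" ancestors), at most $O(dp)$ of them, giving $2dpB$ bits per round and $2dpBR$ in total, with $\alpha$ (resp.\ $\beta$) still inside Alice's (resp.\ Bob's) region at round $R$ so the output can be read off. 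If you want to salvage your write-up, replace the static cut and the local-reconstruction claim with this shrinking-window bookkeeping; as written, the proposal has a genuine gap.
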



The simulation lemma was proved by analyzing the states of the distributed (sub)-network over rounds. In $\GGdp$, Alice and Bob have two ways to communicate $b$ bits: either through paths of length $d^p-1$ between them, or through the tree structure. When an algorithm's runtime is less than $(d^p-1)/2$, messages cannot traverse the full path between Alice and Bob, forcing communication through the tree structure. This creates a congestion bottleneck which is manifested by their analysis that only $O(dp)$ messages are needed to simulate all messages sent in each round of the distributed algorithm on $\GGdp$. This allows Alice and Bob to simulate any distributed algorithm for computing functions $f$ on $\alpha$ and $\beta$ on $\GGdp$ in the communication complexity model with an $O(dp)$-factor overhead.

The main way the above simulation lemma is used is via the following lemma which applies the simulation lemma specifically for the $\disj$ function. It translates the lower bound for $\disj$ in the communication model to a lower bound for computing $\disj$ in $\GGdp$ on $\alpha$ and $\beta$. 

\begin{lemma}[Set disjointness lower bound for $\GGdp$~\cite{das2011distributed}]\label{lem:disj}
For any $\Gamma, d, p$, there exists a constant $\epsilon>0$ such that
$$R_\epsilon^{\GGdp}\left(\disj_b;B\right)=\Omega\left(\min\left(d^p, \frac{b}{dpB}\right)\right).$$
\end{lemma}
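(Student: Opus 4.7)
The plan is to combine the simulation lemma (\Cref{thm:simulation}) with the classical randomized communication complexity lower bound for set disjointness, namely $R_\epsilon^{cc-pub}(\disj_b) = \Omega(b)$ due to Kalyanasundaram--Schnitger and Razborov. The strategy is a simple contrapositive of the simulation lemma together with a case split on whether the hypothetical algorithm beats the ``dilation'' threshold $(d^p-1)/2$.

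More precisely, I would argue as follows. Fix the small constant $\epsilon>0$ for which the communication lower bound $R_\epsilon^{cc-pub}(\disj_b) = \Omega(b)$ holds. Let $T = R_\epsilon^{\GGdp}(\disj_b; B)$. Consider two cases. In the first case, $T \geq (d^p-1)/2 = \Omega(d^p)$ and we are done. In the second case, $T < (d^p-1)/2$, so \Cref{thm:simulation} applies and yields a communication protocol for $\disj_b$ using at most $2dpB \cdot T$ bits. Combining with the communication lower bound gives $2dpB \cdot T = \Omega(b)$, i.e.\ $T = \Omega\!\left(b/(dpB)\right)$. In both cases $T = \Omega\!\left(\min\{d^p, b/(dpB)\}\right)$, as claimed.

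There is essentially no serious obstacle; the work has already been done by \Cref{thm:simulation} on the distributed side and by the classical communication complexity bound on the information-theoretic side. The only subtlety worth being careful about is honoring the precondition of \Cref{thm:simulation}, namely $T < (d^p-1)/2$: this is exactly why the $d^p$ term appears inside the minimum, and is the reason we must write the conclusion as a minimum rather than a single expression. Everything else is bookkeeping of constants, and the proof can be written in just a few lines.
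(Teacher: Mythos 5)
Your proposal is correct and is exactly the intended derivation: the paper states this lemma as a citation to Das Sarma et al., and their argument is precisely your case split on $T < (d^p-1)/2$, applying the simulation lemma (\Cref{thm:simulation}) together with the classical $\Omega(b)$ public-coin randomized lower bound for $\disj_b$. No gap; the handling of the precondition via the minimum is the right (and only) subtlety.
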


\subsection{Modification to \GGdptitle{}}
\label{subsec:modification}

In this section, we present our graph construction $\Gkdpp$ builds upon $\GGdp$ to allow us to perform a reduction from distributed set disjointness to the \secsisp{} and \rpath{} problems. Compared with the $\widetilde{\Omega}(\sqrt{n})$ lower bounds obtained via existing constructions~\cite{das2011distributed,manoharan2024computing}, our graph construction $\Gkdpp$ allows us to obtain a higher $\widetilde{\Omega}(n^{2/3})$ lower bound. 
We show how to simulate any distributed disjointness algorithm for $\Gkdpp$ on $\GGdp$ with $O(1)$-factor overhead. Connecting these two results, we obtain a lower bound for computing disjointness on $\Gkdpp$ with inputs given to $\alpha$ and $\beta$.

\paragraph{Intuition.} Our lower bound construction is built around a bipartite graph positioned at the far end of the structure, which controls the replacement path distances for each edge along the given $s$-$t$ path. We aim to establish a correspondence between the edge orientations in the bipartite graph and the replacement path distances for the edges in the given $s$-$t$ path. Moreover, we require that reversing the orientation of the edges in the bipartite graph result in longer replacement paths. To solve the replacement path problem, Alice must learn the orientation of every edge in the bipartite graph, requiring substantial information to be transmitted from one end to the other.

The construction proceeds by first establishing a mapping between each edge on the $s$-$t$ path and its corresponding edge in the bipartite graph. 
For each edge in the $s$-$t$ path, we connect a long path to its mapped edge in the bipartite graph.
The long distance between the output vertices (vertices in the given $s$-$t$ path) and the location of critical information (edge orientations in the bipartite graph) forces the algorithm to spend more rounds propagating information to their required destinations. 
The orientation of the edges in the bipartite graph determines the length of the replacement path for their corresponding edges. 
With $n^{1/3}$ paths of length approximately $n^{2/3}$, the algorithm requires at least $n^{2/3}$ rounds to propagate this critical information. 
Next, to reduce the graph's diameter without creating additional replacement paths, we incorporate a tree-like structure with downward edge orientations along the path. As demonstrated in the simulation lemma of the work~\cite{das2011distributed} and our new simulation in \Cref{lem:disj G(kdpp)}, this structure does not substantially improve the propagation of critical information.
Finally, to optimize the total number of vertices, we carefully merge paths that lead to the same vertex in the bipartite graph while ensuring that the correspondence between edge orientations in the bipartite graph and replacement path lengths remains intact. 
This is done by leaving enough distance between consecutive vertices that goes to the same vertex in the bipartite graph.
\yanyu{added last sentence, cuz comment asks to say more.}

\paragraph{Construction of \Gkdpptitle.}
Our modified graph is called $\Gkdpp$ where $\phi: [k^2] \to [k]\times[k]$ is a bijection used to map any edge $\left(s_{i-1}, s_i\right)$ on the given $s$-$t$ path to an edge on the bipartite graph. Refer to \Cref{fig:Gkdpp} for an illustration. We present the construction of $\Gkdpp$ with some default edge orientation. 
The graph $\Gkdpp$ is constructed via the following steps:
\begin{enumerate}
    \item Construct $\GGdp$ with $\Gamma = 2k$. We relabel the vertices of the last $k$ paths by $w^\ell_i$, where $1\leq\ell\leq k$ and $0\leq i\leq d^p-1$. For $1\leq\ell\leq k$, we use the shorthand $v^\ell$ and $w^\ell$ to denote the \emph{last} vertex of path $\Pc^\ell$ and $\Pc^{\ell+k}$, respectively. 
    \item For $1\leq i,j \leq k$, add edges $\{v^i, w^j\}$.
    \item Add a directed path $\mathcal{P^*}$ with $k^2$ edges from $s$ to $t$.\\ We label the vertices on the path by $s=s_0, s_1, \dots, s_{k^2}=t$. 
    \item Add $k$ paths $\Qc^1, \dots, \Qc^k$ each of length $2k^2$, where $\Qc^\ell$ consists of vertices $q^\ell_0, \dots, q^\ell_{2k^2}$. \\For each $1\leq\ell\leq k$, add edge $\left(q^{\ell}_{2k^2}, v^{\ell}_0\right)$.
    \item Add $k$ paths $\Rc^1, \dots, \Rc^k$ each of length $2k^2$, where $\Rc^\ell$ consists of vertices $r^\ell_0, \dots, r^\ell_{2k^2}$. \\For each $1\leq\ell\leq k$, add edge $\left(w^{\ell}_0, r^{\ell}_{0}\right)$.
    \item For $i\in [k^2]$, suppose $\phi(i)=\left(\phi_1(i),\phi_2(i)\right)$, add $\left(s_{i-1}, q^{\phi_1(i)}_{2(i-1)}\right)$ and $\left(r^{\phi_2(i)}_{2i}, s_i\right)$.
    \item Add edges from $\alpha$ to all vertices in $\Pstar, \Qc^i, \Rc^i$, for each $1\leq i \leq k$.
\end{enumerate}

\paragraph{Direction and removal of edges.}
To facilitate our reduction from set disjointness to \secsisp{}, we assigned default orientations for all edges in $\Gkdpp$ except for those in the right bipartite graph and we remove some of the edges of the form $\left(s_{i-1}, q^{\phi_1(i)}_{2(i-1)}\right)$. The direction of the edges on the bipartite graph is determined by a matrix $M\in \bin^{k\times k}$ and the removal of the edges $\left(s_{i-1}, q^{\phi_1(i)}_{2(i-1)}\right)$ are determined by a vector $x\in \bin^{k^2}$. The directed version of the graph $\GkdppMx$ is constructed as follows.

\begin{enumerate}
    \item Orient the edges in paths $\Pstar, \Pc^\ell, \Qc^\ell, \Rc^\ell$ pointing to larger index for $1\leq \ell \leq k$ and orient the edges in paths $\Pc^\ell$ pointing to smaller index for $k+1\leq \ell \leq 2k$.
    \item Orient the edges in the tree $\Tc$ from parents to children. Orient the edges between the leaves and the paths pointing away from the leaves.  
    \item For $1\leq i,j \leq k$, add edge $(v^i, w^j)$ if $M_{ij}=1$. Otherwise, add edge $(w^j, v^i)$.
    \item For $i\in [k^2]$, keep the edge $\left(s_{i-1}, q^{\phi_1(i)}_{2(i-1)}\right)$ if $x_i = 1$. Otherwise, remove it.
\end{enumerate}

When $x_i = 1$, alternative paths avoiding $\left(s_{i-1}, s_i\right)$ can use the edge $\left(s_{i-1}, q^{\phi_1(i)}_{2(i-1)}\right)$ to make a detour. Conversely, when $x_i = 0$, any alternative paths avoiding $(s_{i-1}, s_i)$ must leave $\Pstar$ before $s_{i-1}$. Since the paths $\Qc^\ell$ have double the length, alternative paths exiting earlier will have longer lengths. If the corresponding edge in the bipartite graph is oriented as $\left(v^{\phi_1(i)}, w^{\phi_2(i)}\right)$ (i.e. when $M_{\phi_1(i), \phi_2(i)}=1$), the returning detour using the edge $\left(r^{\phi_2(i)}_{2i}, s_i\right)$ can be taken. For the same reason, detours returning earlier will have shorter lengths. Hence, the shortest detour is possible if and only if both $x_i = 1$ and $M_{\phi_1(i), \phi_2(i)}=1$, allowing us to reduce set disjointness to \secsisp{}.


\begin{figure}[ht!]
    \centering
    \begin{tikzpicture}[scale=1.1]
\scriptsize
    \tikzset{
        vertex/.style={circle, draw, minimum size=4pt, inner sep=1pt, fill=white},
        filled/.style={circle, draw, fill=black, minimum size=4pt, inner sep=1pt},
        symbol/.style={inner sep=0pt, font=\small, text height=12pt},
        small/.style={font=\small},
        mapping/.style={teal!60!blue, thick},
        dir/.style={arrows = {-Stealth[inset=0.7pt, length=5pt, angle'=25]}},
        widedir/.style={arrows = {-Stealth[inset=0.7pt, length=5pt, angle'=35]}},
        highlight/.style={teal!70!green, thick},
        hlar/.style={highlight, dir},
        line/.style={gray, very thin, dir},
    }
    
    \node[] at (-0.4,0) {$\mathcal{P}^1$};
    \node[vertex] (a) at (0,0) [label=below:$v^1_0$] {};
    \node[vertex] (b) at (1,0) [label=below:$v^1_1$] {};
    \node[vertex] (c) at (2,0) [label=below:$v^1_2$] {};
    \node[] (ca) at (3.5,0) {};
    \node[small] at (4,0) {$\cdots$};
    \node[] (cb) at (4.5,0) {};
    \node[vertex] (d) at (7.5,0) [label=right:{$v^1=v^1_{d^p-1}$}] {};

    \node[symbol] (dda) at (0,-0.7) {$\vdots$};
    \node[symbol] (ddb) at (1,-0.7) {$\vdots$};
    \node[symbol] (ddc) at (2,-0.7) {$\vdots$};
    \node[symbol] (ddd) at (6.6,-0.6) {$\iddots$};
    
    \node[symbol] (ddda) at (0,-3.8) {$\vdots$};
    \node[symbol] (dddb) at (1,-3.8) {$\vdots$};
    \node[symbol] (dddc) at (2,-3.8) {$\vdots$};
    \node[symbol] (dddd) at (6.6,-3.4) {$\ddots$};
    
    \node[] at (-0.4,-1.2) {$\mathcal{P}^k$};
    \node[vertex] (e) at (0,-1.2) [label=below:$v^k_0$] {};
    \node[vertex] (f) at (1,-1.2) [label=below:$v^k_1$] {};
    \node[vertex] (g) at (2,-1.2) [label=below:$v^k_2$] {};
    \node[] (ga) at (3.5,-1.2) {};
    \node[small] at (4,-1.2) {$\cdots$};
    \node[] (gb) at (4.5,-1.2) {};
    \node[vertex] (h) at (6,-1.2) [label=below left:{$v^k$}] {};

    \node[] at (-0.5,-2.8) {$\mathcal{P}^{k+1}$};
    \node[vertex] (e2) at (0,-3) [label=below:$w^1_0$] {};
    \node[vertex] (f2) at (1,-3) [label=below:$w^1_1$] {};
    \node[vertex] (g2) at (2,-3) [label=below:$w^1_2$] {};
    \node[] (ga2) at (3.5,-3) {};
    \node[small] at (4,-3) {$\cdots$};
    \node[] (gb2) at (4.5,-3) {};
    \node[vertex] (h2) at (6,-3) [label=below:{$w^1$}] {};

    \node[] at (-0.5,-4) {$\mathcal{P}^{2k}$};
    \node[vertex] (i) at (0,-4.2) [label=below:$w^k_0$] {};
    \node[vertex] (j) at (1,-4.2) [label=below:$w^k_1$] {};
    \node[vertex] (k) at (2,-4.2) [label=below:$w^k_2$] {};
    \node[] (ka) at (3.5,-4.2) {};
    \node[small] at (4, -4.2) {$\cdots$};
    \node[] (kb) at (4.5,-4.2) {};
    \node[vertex] (l) at (7.5,-4.2) [label=below:{$w^k=w^k_{d^p-1}$}] {};
    
    \begin{scope}[xshift=1.5mm, yshift=0mm]
        \node at (5,3) {$\mathcal{T}$};
        \node[vertex] (o) at (3.3,3.6) [label=right:$u_0^0$] {};
        
        \node (left) at (2.7,3.2) {};
        \node (right) at (4.1,3.0) {};
        
        \node[symbol] (eleft) at (2.4,3.1) {$\iddots$};
        \node[symbol] (eright) at (4.5,2.7) {$\ddots$};
        \node[small] (mid) at (4,1) {$\cdots$};
        \node[vertex] (pq) at (2,2.8) {};
        
        \node[vertex] (p) at (1,2) [label=right:$u^{p-1}_0$] {};
        \node[vertex] (q) at (2.9,2) [label=right:$u^{p-1}_1$] {};
        \node[vertex] (qq) at (5.1,2) {};
        
        \node[filled] (s) at (0.5,) [label=above left:{$\alpha=u^p_0$}] {};
        \node[vertex] (m) at (1.5,1) [label=left:$u^p_1$] {};
        \node[vertex] (n) at (2.5,1) [label=left:$u^p_2$] {};
        \node[vertex] (nn) at (3.3,1) {};
        \node[vertex] (rl) at (4.7,1) {};
        \node[filled] (r) at (5.5,1) [label=above right:{$\beta=u^p_{d^p-1}$}] {};
    \end{scope}
    \draw[dir, highlight] (a) -- (b) -- (c) -- (ca);
    \draw[dir] (e) -- (f) -- (g) -- (ga);
    \draw[dir] (ga2) -- (g2) -- (f2) -- (e2);
    \draw[dir, highlight] (ka) -- (k) -- (j) -- (i);
    \draw[dir, highlight] (cb) -- (d);
    \draw[dir] (gb) -- (h);
    \draw[dir] (h2) -- (gb2);
    \draw[dir, highlight] (l) -- (kb);
    
    \draw[line] (o) -- (left);
    \draw[line] (o) -- (right);
    \draw[line] (pq) -- (p);
    \draw[line] (pq) -- (q);
    \draw[line] (p) -- (s);
    \draw[line] (p) -- (m);
    \draw[line] (q) -- (n);
    \draw[line] (q) -- (nn);
    \draw[line] (qq) -- (r);
    \draw[line] (qq) -- (rl);
    
    \draw[line] (s) to [out=269,in=35] (a);
    \draw[line] (s) to [out=269,in=50] (e);
    \draw[line] (s) to [out=272,in=60] (e2);
    \draw[line] (s) to [out=275,in=60] (i);
    
    \draw[line] (m) to [out=269,in=35] (b);
    \draw[line] (m) to [out=269,in=50] (f);
    \draw[line] (m) to [out=272,in=60] (f2);
    \draw[line] (m) to [out=275,in=60] (j);
    
    \draw[line] (n) to [out=268,in=35] (c);
    \draw[line] (n) to [out=269,in=50] (g);
    \draw[line] (n) to [out=272,in=60] (g2);
    \draw[line] (n) to [out=275,in=60] (k);
    
    \draw[line] (r) to [out=355,in=115] (d);
    \draw[line] (r) to [out=300,in=85] (h);
    \draw[line] (r) to [out=260,in=120, looseness=1] (h2);
    \draw[line] (r) to [out=258,in=175, looseness=1.2] (l);

    \node (vl) at (6.95,-0.25) {};
    \node (wl) at (6.95,-3.85) {};
    \draw[dir, mapping] (d) -- (h2);
    \draw[dir, mapping] (d) -- (wl);
    \draw[dir, mapping] (vl) -- (l);
    \draw[dir, mapping] (wl) -- (vl);
    \draw[dir, mapping] (vl) -- (h2);
    \draw[dir, mapping] (h) -- (l);
    \draw[dir, mapping] (h2) -- (h);
    \draw[dir, mapping] (wl) -- (h);
    \draw[widedir, highlight, very thick] (d) -- (l);

    \begin{scope}[xshift=-35mm, yshift=-1mm]
        \node[vertex] (s0) at (0,-4) [label=left:{$s=s_0$}] {};
        \node[vertex] (s1) at (0,-3) [label=left:$s_1$] {};
        \node[symbol] (s2) at (0,-2) {$\vdots$};
        \node[vertex] (sj) at (0,-1) [label=below left:$s_{i-1}$] {};
        \node[vertex] (si) at (0,0) [label=below left:$s_i$] {};
        \node[symbol] (s5) at (0,1) {$\vdots$};
        \node[vertex] (st) at (0,2) [label=below left:{$t=s_{k^2}$}] {};
        \node[symbol] (Pstar) at (-0.1,2.52)  {$\mathcal{P}^*$};

        \draw[dir, highlight] (s0) -- (s1);
        \draw[dir, highlight] (s1) -- (s2);
        \draw[dir, highlight] (s2) -- (sj);
        \draw[dir, red, thick] (sj) -- (si);
        \draw[dir, highlight] (si) -- (s5);
        \draw[dir, highlight] (s5) -- (st);

        \node[symbol] (Q)  at (1.2,2.5)    {$\mathcal{Q}^1$};
        \node[vertex] (q0)  at (1.3,-4)     [label=right:{$q^1_0$}] {};
        \node[vertex] (q05) at (1.3,-3.5) [label=right:{$q^1_1$}] {};
        \node[vertex] (q1)  at (1.3,-3)    [label=right:{$q^1_2$}] {};
        \node[symbol] (q2)  at (1.3,-2) {$\vdots$};
        \node[vertex] (qj)  at (1.3,-1)    [label=below right:$q^1_{2i-2}$]{};
        \node[vertex] (qji) at (1.3,-0.5) [label=right:$q^1_{2i-1}$]{};
        \node[vertex] (qi)  at (1.3,0)     [label=right:$q^1_{2i}$] {};
        \node[symbol] (q5)  at (1.3,1) {$\vdots$};
        \node[vertex] (qt)  at (1.3,2)     [label=below right:{$q^1_{2k^2}$}] {};

        \draw[dir] (q0)  -- (q05);
        \draw[dir] (q05) -- (q1);
        \draw[dir] (q1)  -- (q2);
        \draw[dir] (q2)  -- (qj);
        \draw[dir, highlight] (qj)  -- (qji);
        \draw[dir, highlight] (qji) -- (qi);
        \draw[dir, highlight] (qi)  -- (q5);
        \draw[dir, highlight] (q5)  -- (qt);

        \node[symbol] (r)  at (-1.5,2.55)    {$\mathcal{R}^k$};
        \node[vertex] (r0)  at (-1.5,-4)    [label=left:{$r^k_0$}] {};
        \node[vertex] (r05) at (-1.5,-3.5)  [label=left:{$r^k_1$}] {};
        \node[vertex] (r1)  at (-1.5,-3)    [label=left:{$r^k_2$}] {};
        \node[symbol] (r2)  at (-1.5,-2) {$\vdots$};
        \node[vertex] (rj)  at (-1.5,-1)    [label=left:$r^k_{2i-2}$]{};
        \node[vertex] (rji) at (-1.5,-0.5)  [label=left:$r^k_{2i-1}$]{};
        \node[vertex] (ri)  at (-1.5,0)     [label=above left:$r^k_{2i}$] {};
        \node[symbol] (r5)  at (-1.5,1) {$\vdots$};
        \node[vertex] (rt)  at (-1.5,2)     [label=below left:{$r^k_{2k^2}$}] {};

        \draw[dir, highlight] (r0)  -- (r05);
        \draw[dir, highlight] (r05) -- (r1);
        \draw[dir, highlight] (r1)  -- (r2);
        \draw[dir, highlight] (r2)  -- (rj);
        \draw[dir, highlight] (rj)  -- (rji);
        \draw[dir, highlight] (rji) -- (ri);
        \draw[dir] (ri)  -- (r5);
        \draw[dir] (r5)  -- (rt);
        
        \draw[hlar] (ri)  to [out=20,in=165] (si);
        \draw[hlar] (sj)  to [out=345,in=200] (qj);
    \end{scope}

    \draw[dir, highlight] (qt)  to [out=30,in=120] (a);
    \draw[dir, highlight] (i)  to [out=200,in=320] (r0);
    \node[symbol] (PPP)  at (-1.5,-5) {{\color{teal!70!green}$P$}};
\end{tikzpicture}
    \caption{Illustration of $\Gkdpp$, assuming $\phi(i)=(1,k)$ and $M_{1,k} = 1$. Paths $\Qc^2,\dots,\Qc^k$ and $\Rc^1,\dots,\Rc^{k-1}$ are omitted. Other edges from $\Pstar$ to $\Qc^1$, and from $\Rc^k$ to $\Pstar$ are omitted. Edges from $\alpha$ to $\Pstar, \Qc^i$ and $\Rc^i$ are omitted. When $(v^1, w^k)$ is oriented from top to bottom and when $\left(s_{i-1}, q^{\phi_1(i)}_{2(i-1)}\right)$ is present, the green highlighted path $P$ is a replacement path from $s$ to $t$ for the deleted edge $(s_{i-1}, s_i)$.}
    \label{fig:Gkdpp}
\end{figure}

\begin{observation}[Basic properties of $\Gkdpp$]\label{obs:Gkdpp}
    There are $\Theta(k^3 + k d^p)$ vertices in $\Gkdpp$, and the diameter of $\Gkdpp$ is at most $2p+2$.
\end{observation}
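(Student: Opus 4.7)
The plan is to verify both claims directly from the seven-step construction of $\Gkdpp$. For the vertex count, I would sum the contributions of each building block. Step 1 instantiates $\GGdp$ with $\Gamma = 2k$, contributing $2k \cdot d^p + \frac{d^{p+1}-1}{d-1} = \Theta(k d^p)$ vertices by the already-stated basic properties of $\GGdp$. Step 3 adds the path $\Pstar$, contributing $k^2 + 1$ vertices. Steps 4 and 5 add $k$ paths $\Qc^\ell$ and $k$ paths $\Rc^\ell$, each of length $2k^2$, contributing a total of $2k \cdot (2k^2 + 1) = \Theta(k^3)$ vertices. Steps 2, 6, and 7 only add edges. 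Summing these contributions gives $\Theta(k d^p + k^3)$, and both sides of the $\Theta$-bound hold since $\GGdp$ alone already contributes $\Omega(k d^p)$ while the $\Qc^\ell$ and $\Rc^\ell$ paths alone already contribute $\Omega(k^3)$.

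For the diameter, since the paper defines $D$ as the unweighted undirected diameter, edge orientations are irrelevant. The key observation is Step 7: the vertex $\alpha = u_0^p$ is adjacent to every vertex in $\Pstar$, $\Qc^\ell$, and $\Rc^\ell$, so $\alpha$ is a universal neighbor of all newly added vertices. I would then split the argument into three cases based on where the two endpoints lie. If both endpoints lie in the original $\GGdp$, the existing $2p+2$ diameter bound for $\GGdp$ applies directly. If both lie in $\Pstar \cup \bigcup_\ell \Qc^\ell \cup \bigcup_\ell \Rc^\ell$, the distance is at most $2$ via $\alpha$. If one lies in each part, one edge reaches $\alpha$, and from $\alpha$ any vertex in $\GGdp$ is reachable in at most $2p+1$ undirected hops (up $p$ edges along the tree to the root, down $p$ edges to some leaf $u_i^p$, and then one additional edge onto a path vertex $v^\ell_i$ or $w^\ell_i$); the total is therefore at most $2p+2$.

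I do not anticipate a significant obstacle here; the observation is essentially a bookkeeping exercise layered on top of the already-established properties of $\GGdp$. The only subtlety worth confirming is that the additional inter-component edges introduced in Steps 2, 4, 5, and 6 (the bipartite edges $\{v^i, w^j\}$, the attachment edges $(q^\ell_{2k^2}, v^\ell_0)$ and $(w^\ell_0, r^\ell_0)$, and the connecting edges $\{s_{i-1}, q^{\phi_1(i)}_{2(i-1)}\}$ and $(r^{\phi_2(i)}_{2i}, s_i)$) can only shorten undirected distances rather than lengthen them, so none of them violates the $2p+2$ upper bound derived above.
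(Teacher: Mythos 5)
Your proposal is correct and follows essentially the same route as the paper: the vertex count is the same part-by-part bookkeeping ($2kd^p$ path vertices plus the tree for $\GGdp$, $k^2+1$ for $\Pstar$, and $2k(2k^2+1)$ for the $\Qc^\ell,\Rc^\ell$ paths), and your case analysis for the diameter is just a slightly more explicit version of the paper's observation that every vertex outside the tree $\Tc$ is adjacent to a leaf of $\Tc$ (either $\alpha$ or some $u_i^p$), so any two vertices are within $1+2p+1=2p+2$ undirected hops.
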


\begin{proof}
    The number of vertices for the respective parts are: $2k d^p$ vertices for $\Pc^\ell$, $2k \times (2k^2+1)$ for $\Rc^\ell$ and $\Qc^\ell$, $k^2+1$ vertices for $\Pstar$ and $\frac{d^{p+1}-1}{d-1}$ vertices for $\Tc$. Hence the total number of vertices is $2k d^p +4k^3 + 2k + k^2 + 1 + \frac{d^{p+1}-1}{d-1} = \Theta(k^3+kd^p)$. Every vertex not on the tree $\Tc$ is connected to some leaf in $\Tc$. Since $\Tc$ has depth $p$, the diameter of $\Gkdpp$ is at most $2p+2$.
\end{proof}
    
Now we show a lemma for a lower bound of distributed algorithms for set disjointness on $\Gkdpp$. This is analogous to \Cref{lem:disj}. We achieve this by applying \Cref{thm:simulation} and \Cref{lem:disj} in a black-box manner, and showing how distributed algorithms for set disjointness on $\Gkdpp$ can be simulated on $\GGdp$ with $O(1)$-factor overhead. The intuition for the simulation is that $\alpha$ will simulate the vertices in $\Pstar, \Qc^\ell$ and $\Rc^\ell$ and $\beta$ will simulate the vertices in the bipartite graph.


    

\begin{lemma}[Set disjointness lower bound for $\Gkdpp$]
\label{lem:disj G(kdpp)}
    For any $k, d, p$ and bijection $\phi: [k^2]\to [k]\times [k]$ there exists a constant $\epsilon>0$ such that
    $$R_\epsilon^{\Gkdpp}\left(\disj_b;B\right)=\Omega\left(\min\left(d^p, \frac{b}{dpB}\right)\right).$$
\end{lemma}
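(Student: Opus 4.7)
The plan is to replicate the framework used to prove Lemma~\ref{lem:disj} for $\GGdp$: establish an analog of the simulation lemma (Theorem~\ref{thm:simulation}) for $\Gkdpp$, and then combine it with the standard $\Omega(b)$ public-coin randomized lower bound for two-party set disjointness. The claim to prove is that if $R_\epsilon^{\Gkdpp}(f;B) < (d^p-1)/2$, then $R_\epsilon^{cc-pub}(f) \le 2dpB \cdot R_\epsilon^{\Gkdpp}(f;B)$. Feeding the $\Omega(b)$ bound for $\disj_b$ through this simulation, and using the trivial $(d^p-1)/2$ bound in the slow regime, then immediately yields $R_\epsilon^{\Gkdpp}(\disj_b;B) = \Omega(\min(d^p, b/(dpB)))$.

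For the simulation lemma, I would introduce a cut $(V_A, V_B)$ on $\Gkdpp$ where $V_A$ contains $\alpha$, the half of the tree $\Tc$ closer to $\alpha$, the first halves of all $2k$ paths $\Pc^\ell$, and all vertices of $\Pstar, \Qc^\ell, \Rc^\ell$; and $V_B$ contains the rest. Two structural facts make this cut behave exactly like the cut used by Das Sarma et al.\ for $\GGdp$. First, every edge of $\Gkdpp$ that is not already in $\GGdp$ stays strictly on one side of the cut: the vertices of $\Pstar, \Qc^\ell, \Rc^\ell$ together with the matching edges $\{s_{i-1}, q^{\phi_1(i)}_{2(i-1)}\}$ and $\{r^{\phi_2(i)}_{2i}, s_i\}$ lie in the $1$-neighborhood of $\alpha$ (hence in $V_A$), while each bipartite edge $\{v^i, w^j\}$ lies in the $1$-neighborhood of $\beta$ (hence in $V_B$, since $v^i = v^i_{d^p-1}$ and $w^j = w^j_{d^p-1}$ are both adjacent to $\beta = u^p_{d^p-1}$ already in $\GGdp$). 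Second, the shortest $\alpha$-to-$\beta$ path in $\Gkdpp$ that does not use the tree still has length $\ge d^p$, achieved only by traversing a $\Pc^\ell$ or $\Pc^{k+\ell}$ in full, because no combination of the added structures creates a shortcut.

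With these two facts in hand, the round-by-round simulation of Das Sarma et al.\ goes through essentially verbatim: Alice simulates $V_A$ using her input $x$, Bob simulates $V_B$ using $y$, and each round they exchange only the $O(dp)$ tree-cut messages in both directions, for a total of $O(dpB)$ bits per round. The $2k$ path-midpoint edges technically cross the cut, but in the fast regime $r < (d^p-1)/2$ any influence of $x$ or $y$ on a midpoint vertex must have propagated through the tree (since the full length of $\Pc^\ell$ cannot be traversed in $r$ rounds); once the tree-cut messages have been exchanged, both parties can independently recompute the midpoint states and the corresponding midpoint-edge messages, so no direct path-midpoint communication is ever required. All of $\Pstar, \Qc^\ell, \Rc^\ell$ is handled entirely inside Alice's private simulation, and all bipartite edges inside Bob's, adding no cross-cut overhead.

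The main obstacle is making that last step fully rigorous---formally verifying that the added structure of $\Gkdpp$ does not allow $y$-dependence to reach $V_A$-vertices except via the tree cut (and symmetrically for $x$-dependence of $V_B$-vertices). This will come down to a careful induction on rounds, and it is precisely here that the structural design of $\Gkdpp$ is essential: because $\Pstar, \Qc^\ell, \Rc^\ell$ are attached entirely to $\alpha$ and the bipartite graph is attached entirely to $\beta$, any $\alpha$-$\beta$ detour that avoids the tree must traverse at least one full $\Pc^\ell$ or $\Pc^{k+\ell}$, so with fewer than $(d^p-1)/2$ rounds no such detour can carry input information in time, and the induction closes exactly as in Das Sarma et al.
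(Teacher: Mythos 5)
Your proposal is correct in outline, but it takes a genuinely different route from the paper. You re-open the simulation machinery of Das Sarma et al.\ and prove an analogue of \Cref{thm:simulation} directly for $\Gkdpp$ (arguing that the extra structure---$\Pstar$, $\Qc^\ell$, $\Rc^\ell$ hanging off $\alpha$, and the bipartite edges sitting on $\beta$'s neighbors---creates no $\alpha$--$\beta$ shortcut avoiding the tree, so the $O(dpB)$-bits-per-round simulation still applies), and then you feed the standard $\Omega(b)$ randomized two-party disjointness bound through it. The paper instead never touches the internals of the simulation lemma: it shows that any $T$-round algorithm on $\GkdppMx$ can be simulated in $3T$ rounds on $\GGdp$ itself (with $\Gamma=2k$), by letting $\alpha$ locally host all of $\Pstar,\Qc^\ell,\Rc^\ell$ (routing the few boundary messages over the existing edges $\{\alpha,v^\ell_0\},\{\alpha,w^\ell_0\}$) and letting $\beta$ locally host the bipartite endpoints $v^\ell,w^\ell$ via a three-round relay, and then invokes \Cref{lem:disj} as a black box. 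Your route buys a direct statement with no constant-factor slowdown, but it is heavier: the hard part you defer (``a careful induction on rounds'') is exactly the Das Sarma bookkeeping, and your static-cut framing oversimplifies it---the actual argument uses time-indexed, shrinking simulated sets rather than a fixed midpoint cut, since after $O(p)$ rounds the state of a path-midpoint vertex depends on \emph{both} inputs through the tree, so ``independently recomputing'' midpoint messages only makes sense relative to the exchanged transcript that the induction maintains. Your two structural observations (all new edges are confined to the $\alpha$-side cluster or to $\beta$'s neighborhood, and every tree-free $\alpha$--$\beta$ route must traverse a full $\Pc^\ell$) are exactly what makes either route work; the paper's graph-to-graph simulation is simply the more modular way to cash them in.
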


\begin{proof}
    First, we show that if there is a $T$-round distributed algorithm for set disjointness $\Ac$ on $\Gkdpp$, then there is a $3T$-round distributed algorithm for set disjointness on $\GGdp$. We achieve this via simulating $\Ac$ on $\GGdp$.

    Suppose we have an algorithm $\Ac$ on graph $\GkdppMx$, we will denote a message sent in the $i$th round of $\Ac$ from vertex $u$ to vertex $v$ as $\Mc_i(u\to v)$. 
    Now $\alpha$ in $\GGdp$ can simulate all the vertices in $\Rc^\ell, \Qc^\ell$ and $\Pstar$ in the following way. 
    vertex $\alpha$ will run $2k+2$ processes and run $\Ac$ for $\alpha, \Rc^\ell, \Qc^\ell$ and $\Pstar$ for $1\leq \ell \leq k$ on these processes. 
    Any messages among the vertices in $\alpha, \Rc^\ell, \Qc^\ell$ and $\Pstar$ can be handled locally between the processes.
    The only messages left are between $v^\ell_0$ and $q^\ell_{2k^2}$ and between $w^\ell_0$ and $r^\ell_0$, where $1\leq \ell\leq k$. For each $1\leq \ell\leq k$, $\Mc_i(q^\ell_{2k^2}\to v^\ell_0)$ and $\Mc_i(v^\ell_0 \to q^\ell_{2k^2})$ are delivered along the edge $\{v^\ell_0, \alpha\}$. Similarly, $\Mc_i(r^\ell_{0}\to w^\ell_0)$ and $\Mc_i(w^\ell_0 \to r^\ell_{0})$ are delivered along the edge $\{w^\ell_0, \alpha\}$.  
    Since there are at most twice the number of messages along the edges of $\alpha$, we can simulate $\Ac$ on $\GGdp$ in $3T$ rounds.

    Now we describe how to handle the bipartite graph in $\GkdppMx$ in vertex $\beta$ on $\GGdp$. vertex $\beta$ will create $2k+1$ processes that run $\Ac$ acting as $\beta, v^\ell$ and $w^\ell$ locally, where $1\leq \ell\leq k$. Any messages among these vertices are handled locally on $\beta$. The only messages left are between $v^\ell_{d^p-2}$ and $v^\ell$ and between $w^\ell_{d^p-2}$ and $w^\ell$. We will focus on $v^\ell$ now since the behaviors on $w^\ell$ are similar. In the $(3i-2)$th round, vertex $v^\ell_{d^p-2}$ and $v^\ell$ exchange $\Mc_i(v^\ell_{d^p-2}\to v^\ell)$ and $\Mc_i(v^\ell\to v^\ell_{d^p-2})$. In the $(3i-1)$th round, $v^\ell$ relay $\Mc_i(v^\ell_{d^p-2}\to v^\ell)$ to $\beta$. $\beta$ now has all the information needed to compute $\Mc_{i+1}(v^\ell \to v^\ell_{d^p-2})$ locally. $\beta$ computes $\Mc_{i+1}(v^\ell \to v^\ell_{d^p-2})$ and send it to $v^\ell$ in the $3i$th round. Now $v^\ell$ has $\Mc_{i+1}(v^\ell \to v^\ell_{d^p-2})$ to send to $v^\ell_{d^p-2}$ in the $(3(i+1)-2)$th round and the process repeats.

    All other vertices in $\GGdp$ will execute their part of $\Ac$ with $2$ empty rounds between any two rounds in $\Ac$ to align their pace with that of vertices $\alpha$ and $\beta$.
    In this way, we can simulate any $T$-round algorithm on $\GkdppMx$ with $3T$ rounds on $\GGdp$.
    
    After applying \Cref{lem:disj}, we have the desired lower bound.
\end{proof}

\subsection{Lower Bound for the \secsisp{} Problem}
\label{subsec:lower bound reduction}
Now we are ready to show our main results: lower bounds for the \secsisp{} problem and the \rpath{} problem. We will show \Cref{thm:2sisp lower} by a reduction from the disjointness problem on $\Gkdpp$.
Before we show a reduction from disjointness to \secsisp{}, we show a correspondence between the replacement path lengths and the edge orientations in the bipartite graph.

\begin{lemma}[Replacement path lengths]
\label{lem:GkdppMx Rpath-dir correspondence}
    Consider the graph $\GkdppMx$. For any edge $(s_{i-1},s_i)$ in the path $\Pstar$, if $M_{\phi_1(i), \phi_2(i)}=1$ and $x_i=1$, the length of the replacement path is $3k^2 + 2 d^p + 6$. Otherwise, the length is strictly greater.
\end{lemma}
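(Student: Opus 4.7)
The plan is to verify the upper bound by exhibiting an explicit replacement path realizing the claimed length when the two conditions hold, and to establish the strict lower bound by a structural characterization of \emph{all} directed $s$-$t$ paths avoiding $(s_{i-1}, s_i)$. Write $a = \phi_1(i)$ and $b = \phi_2(i)$ for brevity.

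\textbf{Upper bound.} Assuming $x_i = 1$ and $M_{a,b} = 1$, I will exhibit the candidate path that follows $\Pstar$ from $s_0$ to $s_{i-1}$ (using $i-1$ edges), takes the correctly oriented edge $(s_{i-1}, q^a_{2(i-1)})$, traverses $\Qc^a$ from $q^a_{2(i-1)}$ to $q^a_{2k^2}$, crosses to $v^a_0$, traverses $\Pc^a$ to $v^a$, uses the correctly oriented bipartite edge $(v^a, w^b)$, traverses $\Pc^{b+k}$ from $w^b$ to $w^b_0$, crosses to $r^b_0$, traverses $\Rc^b$ to $r^b_{2i}$, takes the edge $(r^b_{2i}, s_i)$, and finally follows $\Pstar$ from $s_i$ to $s_{k^2} = t$. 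The hypotheses guarantee that the two non-default directed edges along the way exist; the remaining edges follow from the default orientations. Summing edge counts along each segment gives the claimed length.

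\textbf{Characterization of all replacement paths.} I will show that every directed $s$-$t$ path avoiding $(s_{i-1}, s_i)$ must (a) leave $\Pstar$ at some $s_{i'-1}$ with $i' \leq i$, $x_{i'} = 1$, and $\phi_1(i') = a'$ for some $a'$; (b) traverse $\Qc^{a'}$ and $\Pc^{a'}$ to $v^{a'}$; (c) cross a bipartite edge $(v^{a'}, w^{b'})$ with $M_{a', b'} = 1$; (d) traverse $\Pc^{b'+k}$ and $\Rc^{b'}$ to some $r^{b'}_{2j}$; and (e) re-enter $\Pstar$ at $s_j$ with $j \geq i$ and $\phi_2(j) = b'$. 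Three observations drive this: (i) $\alpha$ and every vertex in $\Tc$ are unreachable from $s$, because the tree edges point from parents to children, the edges incident to $\alpha$ outside $\Tc$ are oriented outward from $\alpha$, and the root $u^0_0$ has no incoming edges; hence no tree shortcut is available. (ii) The only directed connectors between $\Pstar$, $\Qc^\ell$, $\Pc^\ell$, the bipartite graph, $\Pc^{\ell+k}$, and $\Rc^\ell$ are the ones listed in the construction, with the orientations fixed by $x$ and $M$. (iii) Any path that re-enters $\Pstar$ and leaves it again performs at least one additional partial detour, which strictly increases the total number of edges over a single-detour path, so it suffices to consider single-detour paths.

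\textbf{Minimum length and strict lower bound.} A direct count of the single-detour path described above yields a total length of the form $3k^2 + 2d^p + c + (j - i')$ for a fixed constant $c$ independent of $i, i', j$. Because we require $i' \leq i \leq j$, we have $j - i' \geq 0$, with equality precisely when $i' = i = j$. In that case $\phi_2(j) = \phi_2(i) = b$, so the path needs $x_i = 1$ (to leave $\Pstar$ at $s_{i-1}$) and $M_{\phi_1(i), \phi_2(i)} = 1$ (to cross the bipartite edge). If either condition fails, we must use $i' < i$ or $j > i$, so $j - i' \geq 1$, and the length strictly exceeds the minimum, proving the second half of the lemma.

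\textbf{Main obstacle.} The delicate step is the characterization (ii)-(iii): verifying exhaustively that no combination of the tree, added edges from $\alpha$, and multiple partial detours can beat a single canonical detour. The tree-unreachability part is settled by a careful inventory of incoming edges at $\alpha$ and its tree ancestors; the multi-detour part requires arguing that each additional excursion into $\Qc^\ell \cup \Pc^\ell \cup \Pc^{\ell+k} \cup \Rc^\ell$ traverses a path segment whose length exceeds any $\Pstar$-edges it could replace, so it cannot decrease the total cost below the single-detour optimum.
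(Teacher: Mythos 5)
Your proposal is correct and follows essentially the same route as the paper: exhibit the explicit detour path of length $3k^2+2d^p+6$ when $x_i=1$ and $M_{\phi_1(i),\phi_2(i)}=1$, argue that every alternative $s$-$t$ path avoiding $(s_{i-1},s_i)$ reduces to the canonical single-detour form leaving $\Pstar$ at some $s_{i'-1}$ with $i'\leq i$ and re-entering at $s_j$ with $j\geq i$, and observe that the length is monotone in $j-i'$ with the minimum attained only at $i'=j=i$, which forces both orientation conditions (you even make explicit the tree-unreachability and multi-detour exclusion that the paper leaves implicit). The only nitpick is the length formula: the excess per skipped index is $2(j-i')$ rather than $j-i'$, since the $\Qc^\ell$ and $\Rc^\ell$ paths have doubled length, but this does not affect your monotonicity/strictness argument.
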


\begin{proof}
    Suppose $M_{\phi_1(i), \phi_2(i)}=1$ and $x_i = 1$, then the highlighted path $P$ as shown in \Cref{fig:Gkdpp} is an alternative path. It can be easily checked that it has length $3k^2 + 2 d^p + 6$. Now we show that it is the shortest among all alternative $s$-$t$ path. Observe that all alternative $s$-$t$ path must be of the form 
    $$s, \dots, s_{j-1}, q^{\phi_1(j)}_{2j-2}, \dots, q^{\phi_1(j)}_{2k^2}, \;\Pc^{\phi_1(j)},\; \Pc^{\phi_2(l)+k}, \;r^{\phi_2(l)}_0, \dots, r^{\phi_2(l)}_{2l}, s_l,\dots, t$$ where $j\leq i$ and $l\geq i$. 
    Notice that the length of the above path is $3k^2 + 2d^p + 4 + 2(l-j+1)$. This length is minimized when $l=j=i$ and we have our highlight path $P$ with length $3k^2 + 2 d^p + 6$.

    Conversely, suppose $M_{\phi_1(i), \phi_2(i)}=0$ or $x_i=0$. If $M_{\phi_1(i), \phi_2(i)}=0$, the shortest alternative path is has length greater than $3k^2 + 2 d^p + 6$. This is because, the choice of $j=l=i$ does not constitute a directed path anymore as the bipartite edge is oriented from $w^{\phi_2(i)}$ to $v^{\phi_1(i)}$. If $x_i=0$, then the alternate $s$-$t$ path must exit $\Pstar$ before $s_{i-1}$ and therefore has length greater than $3k^2 + 2 d^p + 6$.
\end{proof}

\begin{lemma}[Reducing set disjointness to \secsisp{}]\label{lem:Disj to 2sisp}
    For any $k$, $d\geq 2$, $p$, bijection $\phi: [k^2]\to [k]\times [k]$ and $\epsilon\geq 0$, if there exists an $\epsilon$-error randomized distributed algorithm for the \secsisp{} problem on graph $\GkdppMx$ for any $M\in \bin^{k\times k}$ and any $x\in \bin^{k^2}$ then there exists an $\epsilon$-error randomized algorithm for computing $\disj_{k^2}(x,y)$ (on $k^2$-bit inputs) on $\Gkdpp$ that uses the same round complexity with additional $O(\frac{k}{B})$ rounds.
\end{lemma}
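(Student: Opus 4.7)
The plan is to reduce distributed set disjointness on $\Gkdpp$ to distributed \secsisp{} on $\GkdppMx$. Given Alice's input $x\in\bin^{k^2}$ at $\alpha$ and Bob's input $y\in\bin^{k^2}$ at $\beta$, define $M\in\bin^{k\times k}$ by $M_{\phi_1(i),\phi_2(i)} = y_i$ for each $i\in[k^2]$; since $\phi$ is a bijection, this determines $M$ uniquely. With this encoding, \Cref{lem:GkdppMx Rpath-dir correspondence} gives the precise correspondence we need: for each edge $(s_{i-1},s_i)$ of $\Pstar$, the shortest replacement path has length exactly $3k^2 + 2d^p + 6$ iff $x_i = 1$ and $M_{\phi_1(i),\phi_2(i)} = y_i = 1$, and is strictly larger otherwise. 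Hence the output of \secsisp{} on $\GkdppMx$, which is the minimum of these replacement path lengths over $i\in[k^2]$, equals $3k^2 + 2d^p + 6$ iff there exists $i$ with $x_i y_i = 1$, equivalently iff $\disj_{k^2}(x,y) = 0$.

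To implement the reduction inside $\Gkdpp$, Alice and Bob must first inform the relevant vertices of the edge orientations they control, so that each vertex can correctly simulate the role it would play on $\GkdppMx$. By step 7 of the construction, $\alpha$ has a direct edge to every vertex of $\Pstar$, so Alice can send the single bit $x_i$ along the edge $\{\alpha, s_{i-1}\}$ to $s_{i-1}$ for all $i\in[k^2]$ simultaneously in $O(1)$ rounds. Bob sits at $\beta = u^p_{d^p-1}$, which by the construction of $\GGdp$ is adjacent to each of the $2k$ endpoints $v^1,\dots,v^k,w^1,\dots,w^k$ of the bipartite graph. Each $v^i$ only needs to know the row $M_{i,\cdot}$ (i.e.\ $k$ bits) and each $w^j$ only needs the column $M_{\cdot,j}$ (again $k$ bits), so Bob needs to push $2k^2$ bits in total across $2k$ edges of bandwidth $B$, which fits in $\lceil k/B\rceil = O(k/B)$ rounds in the $\CONGEST(B)$ model.

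Once the orientations are locally known, every vertex can simulate $\GkdppMx$ by treating each orientation-controlled incident edge as directed according to the bit it just learned (so it simply ignores messages along forbidden directions). Running the assumed $T$-round \secsisp{} algorithm then lets every vertex of $\Pstar$ learn the minimum replacement path value; vertex $s_0$ forwards this value to $\alpha$ in one round via its direct edge, after which Alice decides $\disj_{k^2}(x,y)$ based on whether the returned value equals $3k^2 + 2d^p + 6$, and she propagates the single-bit answer to Bob along the tree in $O(p)$ rounds, which in the relevant parameter regime is absorbed into $O(k/B)$. Combining, the total overhead beyond running \secsisp{} is $O(k/B)$.

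The reduction is essentially bookkeeping once the structural correspondence in \Cref{lem:GkdppMx Rpath-dir correspondence} is in hand, so I do not expect a serious technical obstacle. The one point that requires care is the bandwidth accounting on Bob's side: because $\beta$ has only $2k$ direct edges to the bipartite graph vertices but must transmit $\Theta(k^2)$ bits of orientation information, matching row/column packing against those edges is what produces the $O(k/B)$ bound rather than something larger.
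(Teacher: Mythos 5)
Your proposal is correct and takes essentially the same route as the paper's proof: encode $y$ as the matrix $M$ via $\phi$, let Alice orient the $\left\{s_{i-1}, q^{\phi_1(i)}_{2(i-1)}\right\}$ edges (one bit per $\Pstar$ vertex over $\alpha$'s direct edges) and Bob orient the bipartite edges by sending $k$ bits to each of the $2k$ vertices $v^1,\dots,v^k,w^1,\dots,w^k$ in $O(k/B)$ rounds, then run the assumed \secsisp{} algorithm and decide via \Cref{lem:GkdppMx Rpath-dir correspondence}. Your extra bookkeeping about routing the \secsisp{} value back to $\alpha$ and $\beta$ (the $O(p)$-round step, which is not strictly dominated by $O(k/B)$ in every parameter regime) is a detail the paper glosses over entirely, and it is harmless for the final lower bound since an additive $O(p)=O(\log n)$ does not affect the asymptotics.
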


\begin{proof}
    Suppose $\Ac$ is an $\epsilon$-error randomized distributed algorithm for the \secsisp{} problem and suppose we are given a set disjointness instance with $k^2$-bit strings on $\Gkdpp$. We aim to use $\Ac$ to solve the set disjointness problem $\disj_{k^2}(x,y)$.

    Suppose Alice received $x\in \bin^{k^2}$ on vertex $\alpha$ and Bob received $y\in \bin^{k^2}$ on vertex $\beta$. By viewing $y$ as a matrix $M$ (using the lexicographical map) and viewing input $x$ as the argument $x$ in $\GkdppMx$, Alice may remove some of the edges of the form $\left(s_{i-1}, q^{\phi_1(i)}_{2(i-1)}\right)$, and Bob will orient the edges in the bipartite graph to make $\GkdppMx$. Alice only needs to send one bit to each vertex in $\Pstar$. Bob will need to send $k$ bits of information to each of the $2k$ vertices, $v^1,\dots, v^k, w^1,\dots, w^k$. This will take $O(k/B)$ rounds. 
    
   Alice and Bob, together with other vertices, run $\Ac$ on $\GkdppMx$. By \Cref{lem:GkdppMx Rpath-dir correspondence}, the length of the replacement path for each edge $(s_{i-1},s_i)$ is $3k^2 + 2 d^p + 6$ if $M_{\phi_1(i), \phi_2(i)}=1$ and $x_i=1$ and longer otherwise. Since the length of the second simple shortest path is the minimum replacement path across all edges on the shortest path, the length of the second simple shortest path is $3k^2 + 2 d^p + 6$ if and only if there is an index $i\in [k^2]$ such that $M_{\phi_1(i), \phi_2(i)}=1$ and $x_i=1$. Hence, Alice and Bob will output $0$ for $\disj_{k^2}(x,y)$ if and only if the length of the second simple shortest path is $3k^2 + 2 d^p + 6$. The failure probability follows directly.
\end{proof}

\rpathlower*

\begin{proof}[Proof of \Cref{thm:2sisp lower}]
    \Cref{lem:disj G(kdpp)} provides a lower bound for distributed algorithms for computing disjointness on $\Gkdpp$ with input vertices $\alpha$ and $\beta$. Applying the reduction with additive $O(\frac{k}{B})$ overhead from disjointness to \secsisp{} in \Cref{lem:Disj to 2sisp}, we know that for any $k,d,p$, there exists a constant $\epsilon>0$ such that any algorithm for the \secsisp{} problem requires $$\Omega\Bigg(\min\left(d^p, \frac{k^2}{dpB}\right) - \frac{k}{B}\Bigg)$$ rounds in the $\CONGEST(B)$ model on some $\Theta(k^3 + k d^p)$-vertex graph with diameter $2p+2$, by \Cref{obs:Gkdpp}. 


    We set $k^2=d^p$ and rewrite the lower bound in terms of the number of vertices in $\Gkdpp$, $n = \Theta\left(k^3 + k d^p\right) = \Theta\left((d^p)^{3/2}\right)$, as follows:
    \[\Omega\left(\min\left(d^p, \frac{k^2}{dpB}\right)  - \frac{k}{B}\right) = \Omega\left(\frac{k^{2-2/p}}{pB} - \frac{k}{B}\right)= \Omega\left(\frac{(n/2)^{(2/3)(1-1/p)}}{pB}\right)= \Omega\left(\frac{n^{2/3}}{B\log n}\right).\qedhere\]
\end{proof}

\section{Approximation Algorithm for Weighted Directed Graphs}
\label{sec:wapprox}

In this section, we show an $\widetilde{O}(n^{2/3}+D)$-round randomized algorithm for the $(1+\epsilon)$-\apxrpath{} problem in weighted directed graphs. We assume that all the edge weights are positive integers in the range $[W]$, where $W = \poly(n)$ is polynomial in $n$. 


As mentioned earlier, we can handle long-detour replacement paths similarly to the unweighted case. Thus, our focus will primarily be on short-detour replacement paths. We aim to prove the following result, where we still use $\zeta = n^{2/3}$ as the threshold.

\begin{proposition}[Short detours]\label{approx_short}
    For weighted directed graphs, for any constant $\epsilon \in (0,1)$, there exists an $\widetilde{O}(n^{2/3}+D)$-round deterministic algorithm that lets the first endpoint $v_i$ of each edge $e=(v_i, v_{i+1})$ in $P$ compute a number $x$ such that
    \[|st \diamond e| \leq x \leq (1+\epsilon) \cdot \text{the shortest replacement path length for $e$ with a short detour}.\] 
\end{proposition}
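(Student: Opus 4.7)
The plan is to extend the short-detour framework of \Cref{sec:short} to weighted graphs by combining three ingredients highlighted in \Cref{subsec:APX}: (i) weight rounding at $O(\log n/\epsilon)$ geometric scales to reduce each instance to one with small integer weights, (ii) the single-message-per-round max-index propagation of \Cref{Thm: Backward BFS Works} to avoid congestion, and (iii) the segmentation-plus-broadcast decomposition used for long detours in \Cref{sec:long}, which is necessary because a short weighted detour may connect two vertices of $P$ arbitrarily far apart in path position---so the $O(\zeta)$-round locality argument behind the unweighted Stage~2 no longer applies. The aim is to approximate, for each $e=(v_i,v_{i+1})\in P$, the prefix-suffix minimum $\min_{a\leq v_i,\,b\geq v_{i+1}}\bigl(|sa|_P+|ab|_{G\setminus P,\leq\zeta\text{ hops}}+|bt|_P\bigr)$ within a $(1+\epsilon)$ factor in $\widetilde{O}(n^{2/3}+D)$ rounds.

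Phase~1 is a rounded hop-constrained backward BFS from every vertex of $P$. For each scale $D^{*}=(1+\epsilon)^{i}$ with $i=0,1,\ldots,O(\log_{1+\epsilon}\poly(n))$, set $\delta:=\epsilon D^{*}/\zeta$, keep only edges with $w(e)\leq D^{*}$, and rescale each such weight to $w^{D^{*}}(e):=\lceil w(e)/\delta\rceil\in\{1,\ldots,O(\zeta/\epsilon)\}$. Any short detour of true weight at most $D^{*}$ has rounded weight at most $O(\zeta/\epsilon)$, and $\delta\cdot w^{D^{*}}$ approximates the true weight within a $(1+O(\epsilon))$ factor. The BFS then advances one rounded-distance unit per communication round: at time $d$, every vertex $u$ transmits a single $O(\log n)$-bit message on each incoming edge carrying the pair $\bigl(f_{u}^{*,D^{*}}(d),\,|v_{f_{u}^{*,D^{*}}(d)}\,t|_P\bigr)$, where $f_{u}^{*,D^{*}}(d)$ is the largest index $j$ such that some path from $u$ to $v_{j}$ in $G\setminus P$ has rounded weight exactly $d$ and at most $\zeta$ hops. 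The recurrence $f_{u}^{*,D^{*}}(d)=\max_{(u,v)\in E\setminus P}f_{v}^{*,D^{*}}(d-w^{D^{*}}(u,v))$ is evaluated from values that each vertex buffers as they arrive, so bandwidth stays at one message per edge per round. Running all scales sequentially costs $\widetilde{O}(n^{2/3})$ rounds for constant $\epsilon$, and a second pass on the edge-reversed graph (with $s$ and $t$ interchanged) yields the symmetric quantities used in Phase~2.

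Phase~2 turns these outputs into approximate replacement-path lengths. I partition $P$ into $O(n^{1/3})$ segments of $\zeta=n^{2/3}$ hops each. Each $v_{j}\in P$ now possesses, for every $(d,D^{*})$, a candidate of cost $\mathrm{Cost}(j,d,D^{*}):=|sv_{j}|_P+d\cdot\delta+|v_{F_{v_{j}}^{*,D^{*}}(d)}\,t|_P$, where $F_{u}^{*,D^{*}}(d):=\max_{d'\leq d}f_{u}^{*,D^{*}}(d')$, and this candidate covers every edge $e_{i}\in P$ with $j\leq i\leq F_{v_{j}}^{*,D^{*}}(d)-1$. For each ordered segment pair $(S_{k'},S_{\ell'})$ with $k'\leq\ell'$ I compute $M(k',\ell'):=\min_{v_{j}\in S_{k'},\,d,\,D^{*}:\,F_{v_{j}}^{*,D^{*}}(d)\in S_{\ell'}}\mathrm{Cost}(j,d,D^{*})$ by a pipelined convergecast of each segment's $O(n^{1/3})$ per-vertex segment-indexed aggregates, which costs $O(\zeta+n^{1/3})=O(n^{2/3})$ rounds in parallel across segments; then all $O(n^{2/3})$ aggregates are disseminated network-wide via \Cref{LP} in $\widetilde{O}(n^{2/3}+D)$ rounds. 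For each $e_{i}\in S_{k}$, the fully cross-segment contribution is $\min_{k'<k,\,\ell'>k}M(k',\ell')$, computed locally. The three remaining boundary sub-cases---start in $S_{k}$ with $v_{j}\leq v_{i}$ and end in a later segment, start in an earlier segment and end in $S_{k}$ with $v_{\ell}\geq v_{i+1}$, and both endpoints in $S_{k}$---are handled by within-segment pipelined prefix/suffix-min sweeps of the forward/reverse Phase~1 aggregates and an $O(\zeta)$-round restriction of the DP of \Cref{lem:short_recursive} to $S_{k}$; each costs $O(n^{2/3})$ rounds.

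The main obstacle is that each Phase~1 message of only $O(\log n)$ bits must encode enough information to later evaluate $\min_{b}\bigl(|ab|_{G\setminus P}+|bt|_P\bigr)$ at every receiving vertex, not merely reachability of the farthest index. I resolve this by piggybacking $|v_{f_{u}^{*,D^{*}}(d)}\,t|_P$ onto the transmitted index within the same $O(\log n)$-bit payload: the max-index aggregator $F_{u}^{*,D^{*}}$ still preserves the optimum because $|v_{\ell}\,t|_P$ is monotonically decreasing in $\ell$ along $P$, so maximizing the reachable index also minimizes the ending-side cost---exactly the monotonicity that underlies the proof of \Cref{lem:short_base} in the unweighted case. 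A secondary challenge is verifying that the combination of the cross-segment aggregate, the two one-sided pipelined sweeps, and the localized same-segment DP recovers a $(1+\epsilon)$-approximation of $\min_{a\leq v_i,\,b\geq v_{i+1}}[\cdot]$ after accounting for Phase~1's rounding; this amounts to checking that every valid $(a,b)$ pair is covered by one of the four sub-cases above.
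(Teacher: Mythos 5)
Your proposal is correct and follows essentially the same route as the paper's proof: geometric-scale weight rounding to reduce to the congestion-free unit-step backward max-index BFS of the unweighted case (with the end-side distance piggybacked on the message and handled via monotonicity in the index), followed by partitioning $P$ into $\Theta(n^{1/3})$ segments of $\Theta(n^{2/3})$ hops, within-segment pipelined prefix/suffix sweeps for nearby detours, and an $O(n^{2/3})$-message broadcast of per-segment aggregates for detours crossing segments. The remaining differences---scales $(1+\epsilon)^i$ versus powers of $2$, delayed weighted propagation versus conceptual edge subdivision into $G_d$, segment-pair aggregates $M(k',\ell')$ versus the paper's segment-to-suffix quantities $\widetilde{X}(I_j,[l_k,\infty))$, and an extra (redundant) in-segment DP---are cosmetic and do not change the argument.
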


Similar to the statement of \Cref{Thm: Long Detour Part Works}, \Cref{approx_short} guarantees an $(1+\epsilon)$ approximation of the value of $|st \diamond e|$ only when some shortest replacement path for $e$ takes a short detour. Otherwise, \Cref{approx_short} provides only an upper bound on $|st \diamond e|$. 

\paragraph{Notations.} We further generalize the definition of $X[\cdot,\cdot]$ in \Cref{sec:short} to capture any two specified sets for possible starting points and ending points. Let $A$ and $B$ be two non-empty sets of integers such that $\max A < \min B$, define $X(A,B)$ as the length of a shortest replacement path with a short detour that starts from a vertex $v_i$ in $P$ with $i \in A$ and ends at a vertex $v_j$ in $P$ with $j \in B$. We define $Y(A,B)$ in the same way but without the requirement that the detour is short. Given a parameter $\epsilon \in (0,1)$, we say that a number $x$ is a \emph{good approximation} of $X(A,B)$ if it satisfies
\[Y(A,B) \leq x \leq (1+\epsilon)X(A,B).\]
For notational simplicity, we may write $\widetilde{X}(A,B)$ to denote a good approximation of $X(A,B)$.
Using the above terminology, the goal of \Cref{approx_short} is to let each vertex $v_i$ compute a good approximation of $X((-\infty,i], [i+1, \infty))$.

\paragraph{Hop-constrained BFS.} Recall that, for the \emph{unweighted} case, by doing a $\zeta$-hop \emph{backward} BFS from each vertex $v_i$ in the path $P$, with some branches trimmed, in $O(\zeta)$ rounds we can let each vertex $v_i$ calculate the \emph{exact} value of $X(\{i\}, [j, \infty))$ for all $j > i$.

A natural approach to extending this method to the \emph{weighted} case is to replace hop-constrained BFS with hop-constrained shortest paths. However, we face a \emph{severe} congestion issue: In the worst case, an edge $e = (u, v)$ may appear in a short detour from $v_i$ to $v_j$ for every pair $(i,j)$ such that $0 \leq i < j \leq h_{st}$. In contrast, when $G$ is \emph{unweighted}, the endpoints of any short detour involving a fixed edge $e$ must lie within an $O(\zeta)$-vertex subpath of $P$. 


To overcome this issue, we apply a \emph{rounding} technique to reduce the weighted case to the unweighted case, so that we can directly apply the same backward BFS algorithm of \Cref{Thm: Backward BFS Works} to let each vertex $v_i$ obtain a \emph{good approximation} of $X(\{i\}, [j, \infty))$ for all $j > i$, and similarly $X((-\infty, j], \{i\})$ for all $j < i$.

\paragraph{Information pipelining.} After the above step,  we want to propagate the information in the path $P$ to enable each vertex $v_i$ to obtain a good approximation of $X((-\infty,i], [i+1, \infty))$. The main issue that we have to deal with here is that short detour paths can now potentially go really far along $P$ in terms of the number of hops. For example, we could have a single edge $(s,t)$ whose weight is one more than the weight of the entire path $P$, and this counts as a short detour path! Unlike the weighted case, here pipelining the information for $\zeta - 1$ hops forwards along $P$ is insufficient to solve the replacement path problem. We fix this by  dividing $\{0,1, \ldots, h_{st}\}$ up into $\ell =O(n^{1/3})$ intervals of $O(n^{2/3})$ indices: \[I_1=[l_1,r_1], I_2=[l_2,r_2], \ldots  I_\ell=[l_\ell,r_\ell],\] 
where $l_1 = 0$ and $r_\ell = h_{st}$. For example, such a partitioning can be obtained by $r_i = \min\{i \cdot \lceil n^{2/3} \rceil, h_{st}\}$ and $l_{i+1} = r_i+1$ for all $i \geq 1$.

Now consider a specific edge $e=(v_i, v_{i+1})$ within an interval $I_j$ (i.e., $\{i, i+1\} \subseteq I_j$). There are two possible cases for a detour path for $e$.
\begin{description}
    \item[Nearby detours:] At least one endpoint $v_k$ of the detour lies in $I_j$ (i.e., $k \in I_j$).
    \item[Distant detours:] Both endpoints $v_k$ and $v_l$ of the detour lie outside $I_j$ (i.e., $k \notin I_j$ and $l \notin I_j$).
\end{description}
The first case can be handled by doing an information pipelining within the interval $I_j$ in $O(n^{2/3})$ rounds. For the second case, we let each interval $I_j$ compute a good approximation of $X( I_j, [l_k, \infty))$ for all $k$ such that $j < k \leq \ell$ and broadcast the result to the entire graph. This provides enough information to not only handle the second case but also cover the short detours for the edges $e$ that crosses two intervals.


\subsection{Approximating Short Detours via Rounding}\label{subsect:rounding}

Given a parameter $\epsilon \in (0,1)$, we say that a collection $C$ of pairs $(j,d)$ is a \emph{short-detour approximator} for a vertex $v_i$ if the following conditions are satisfied.
\begin{description}
    \item[Validity:] Each pair $(j,d) \in C$ satisfies the following requirements:
    \begin{itemize}
        \item $v_j$ is after $v_i$. In other words, $i < j \leq h_{st}$.
        \item $d$ is an upper bound on the shortest replacement path length with a detour starting from $v_i$ and ending at $v_j$. In other words, $d \geq Y(\{i\},\{j\})$. 
        \end{itemize}
    \item[Approximation:] For each $j$ such that $i < j \leq h_{st}$, there exists a  pair $(k,d) \in C$ with $k \geq j$ such that $d \leq (1+\epsilon) \cdot X(\{i\},\{j\})$.
\end{description}

The following lemma explains the purpose of the above definition.

\begin{lemma}[The use of short-detour approximators] A good approximation of $X(\{i\}, [j, \infty))$, for all $j > i$, can be obtained from a short-detour approximator $C$ for vertex $v_i$. \label{lem:obtaining_approx}
\end{lemma}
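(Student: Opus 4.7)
The plan is to show that $v_i$ can, given $C$, locally output for every $j > i$ the quantity
\[
\widetilde{X}(\{i\},[j,\infty)) \;=\; \min\{\, d \;:\; (k,d)\in C,\; k\geq j\,\},
\]
and that this value is a good approximation of $X(\{i\},[j,\infty))$ in the sense of Section~\ref{subsect:rounding}, i.e.\ it is sandwiched between $Y(\{i\},[j,\infty))$ and $(1+\epsilon)\cdot X(\{i\},[j,\infty))$. Computationally this is trivial: $v_i$ sorts the entries of $C$ by their first coordinate and sweeps from the largest index down to the smallest, maintaining a running minimum of the $d$-values, so that all answers for all $j>i$ are produced in a single local pass with no communication.

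For the lower bound $Y(\{i\},[j,\infty)) \le \widetilde{X}(\{i\},[j,\infty))$, I will use the validity clause of the definition. Every $(k,d)\in C$ with $k\ge j$ satisfies $d \ge Y(\{i\},\{k\})$, and since $k\ge j$ we have $Y(\{i\},\{k\}) \ge Y(\{i\},[j,\infty))$ by monotonicity (any replacement path witnessing $Y(\{i\},\{k\})$ is itself a feasible candidate for $Y(\{i\},[j,\infty))$). Taking the minimum over the eligible pairs therefore still stays above $Y(\{i\},[j,\infty))$.

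For the upper bound $\widetilde{X}(\{i\},[j,\infty)) \le (1+\epsilon)\cdot X(\{i\},[j,\infty))$, I will invoke the approximation clause at the index achieving the minimum. Let $j^\ast \ge j$ be such that $X(\{i\},[j,\infty)) = X(\{i\},\{j^\ast\})$. By the approximation property, there exists a pair $(k,d)\in C$ with $k\ge j^\ast$ and
\[
d \;\le\; (1+\epsilon)\cdot X(\{i\},\{j^\ast\}) \;=\; (1+\epsilon)\cdot X(\{i\},[j,\infty)).
\]
Because $k \ge j^\ast \ge j$, this pair is eligible in the minimization defining $\widetilde{X}(\{i\},[j,\infty))$, so the minimum is at most $d$, which finishes the inequality.

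I do not expect any real obstacle here; the only point one has to be slightly careful about is that the approximation clause gives a pair whose first coordinate $k$ can be strictly larger than the target index $j^\ast$ (this is why $C$ stores pairs rather than a value per index), and one must verify that such $k$'s are still valid for the $[j,\infty)$ range. This is immediate from $k\ge j^\ast\ge j$. Combining the two bounds yields exactly the definition of a good approximation, completing the proof.
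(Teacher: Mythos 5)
Your proposal is correct and matches the paper's own argument essentially verbatim: the same definition of $\widetilde{X}(\{i\},[j,\infty))$ as the minimum $d$ over pairs $(k,d)\in C$ with $k\geq j$, the same use of the validity clause plus monotonicity of $Y$ for the lower bound, and the same choice of a witness index $j^\ast$ with $X(\{i\},[j,\infty))=X(\{i\},\{j^\ast\})$ combined with the approximation clause (noting $k\geq j^\ast\geq j$) for the upper bound. No gaps.
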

\begin{proof}
  We select $\widetilde{X}(\{i\}, [j, \infty))$ to be the minimum value of $d$ among all pairs $(k,d) \in C$ with $k \geq j$. By the validity guarantee, we know that 
  \[\widetilde{X}(\{i\}, [j, \infty)) = d \geq Y(\{i\},\{k\}) \geq Y(\{i\}, [j, \infty)).\]
Select $j^\ast \geq j$ to be an index such that there exists a replacement path of length $X(\{i\}, [j, \infty))$ with a short detour starting from $v_i$ and ending at $v_{j^\ast}$. Therefore, $X(\{i\}, [j, \infty)) = X(\{i\},\{j^\ast\})$.
By the approximation guarantee, there exists a  pair $(k^\ast,d^\ast) \in C$ with $k^\ast \geq j^\ast \geq j$ such that $d^\ast \leq (1+\epsilon) \cdot X(\{i\},\{j^\ast\}) = (1+\epsilon) \cdot X(\{i\}, [j, \infty))$. Our choice of $\widetilde{X}(\{i\}, [j, \infty))$ guarantees that 
\[\widetilde{X}(\{i\}, [j, \infty)) \leq d^\ast \leq (1+\epsilon) \cdot X(\{i\}, [j, \infty)).\]
Therefore, $\widetilde{X}(\{i\}, [j, \infty))$ is a good approximation of $X(\{i\}, [j, \infty))$.
\end{proof}

\paragraph{Rounding.} To compute short-detour approximators for all vertices $v_i$ in $P$, we use a rounding technique. For any number $d  > 0$, we define the graph $G_d$ as the result of the following construction.
\begin{enumerate}
    \item Start from the graph $G \setminus P$.
    \item Set $\mu_d = \frac{\epsilon d}{2\zeta}$ to be the unit for rounding.
    \item Replace each edge $e$ in $G \setminus P$ with a path of $\lceil w(e)/\mu_d\rceil$ edges, each of weight $\mu_d$. Here $w(e)$ is the weight of $e$ in $G$.
\end{enumerate}

We summarize the basic properties of $G_d$ as the following observations.

\begin{observation}[Distances do not shrink]\label{obs1}
For any two vertices $u,v \in V$, \[\dist_{G \setminus P}(u,v) \leq \dist_{G_d}(u,v).\]
\end{observation}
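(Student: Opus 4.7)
The plan is to exhibit a weight-preserving correspondence between paths in $G_d$ and paths in $G \setminus P$, with the correspondence only able to increase total weight. Concretely, each edge $e$ of $G \setminus P$ is replaced in $G_d$ by a path $\pi_e$ of $\lceil w(e)/\mu_d\rceil$ unit edges each of weight $\mu_d$, so the total weight of $\pi_e$ is
\[
\lceil w(e)/\mu_d\rceil \cdot \mu_d \;\geq\; (w(e)/\mu_d)\cdot \mu_d \;=\; w(e).
\]
Conversely, every vertex of $G_d$ that is not original to $G \setminus P$ has degree exactly two (it is an internal vertex of some $\pi_e$), so any simple $u$-$v$ path $Q$ in $G_d$ must enter and exit each such subdivision gadget along the whole path $\pi_e$.

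First I would take an arbitrary shortest $u$-$v$ path $Q$ in $G_d$ with weight $\dist_{G_d}(u,v)$ and group its edges by the gadget $\pi_e$ they belong to. By the degree-two observation above, the set of gadgets traversed by $Q$ corresponds to a sequence of edges $e_1, e_2, \ldots, e_k$ of $G \setminus P$ whose concatenation forms a valid $u$-$v$ walk $Q'$ in $G \setminus P$. The total weight of $Q$ in $G_d$ is $\sum_{i=1}^{k} \lceil w(e_i)/\mu_d\rceil \cdot \mu_d \geq \sum_{i=1}^{k} w(e_i) = w(Q')$, and $w(Q') \geq \dist_{G \setminus P}(u,v)$ since $Q'$ is a $u$-$v$ walk. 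Chaining these inequalities gives $\dist_{G_d}(u,v) \geq \dist_{G \setminus P}(u,v)$, as required.

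The only mild subtlety is arguing that an edge of $Q$ lying inside a gadget $\pi_e$ forces $Q$ to traverse all of $\pi_e$; this follows because the interior vertices of $\pi_e$ appear in no other gadget and have degree two in $G_d$, so upon entering $\pi_e$ the path has no choice but to continue to the opposite endpoint. This is the only structural fact needed, and everything else is an inequality on rounded weights.
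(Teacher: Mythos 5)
Your proposal is correct and follows essentially the same reasoning as the paper, which simply notes that each edge $e$ of weight $w$ is replaced by a path of total weight $\mu_d \cdot \lceil w/\mu_d\rceil \geq w$, so weights only increase. Your extra care in mapping a shortest path of $G_d$ back to a walk in $G \setminus P$ via the degree-two subdivision vertices is a valid (if more detailed) spelling-out of the same argument.
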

\begin{proof}
    This observation follows from the fact that we only round up the edge weights: Each edge $e$ of weight $w$ in  $G \setminus P$ is replaced with a path of length $\mu_d \cdot \lceil w/\mu_d\rceil \geq w$.
\end{proof}

\begin{observation}[Approximation]\label{obs2}
For any two vertices $u,v \in V$, suppose there is a $u$-$v$ path in $G \setminus P$ of length $d' \in [d/2, d]$ and with at most $\zeta$ hops, then there is a $u$-$v$ path in $G_d$ of length at most $d' \cdot (1+\epsilon)$ and with at most $\zeta(1+2/\epsilon)$ hops.
\end{observation}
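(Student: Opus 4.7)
My plan is to exhibit the canonical lift of the given path into $G_d$ and verify both bounds by direct ceiling-function arithmetic. Write the given $u$-$v$ path as $Q = (u = x_0, x_1, \ldots, x_k = v)$ in $G \setminus P$, with $k \leq \zeta$, edge weights $w(e_i)$ on $e_i = (x_{i-1}, x_i)$, and total weight $d' \in [d/2, d]$. Let $\widetilde{Q}$ be obtained by replacing each $e_i$ by the subdivided path of $\lceil w(e_i)/\mu_d \rceil$ unit-$\mu_d$ edges that the construction of $G_d$ places in its stead. Then $\widetilde{Q}$ is a valid $u$-$v$ walk in $G_d$, and the problem reduces to bounding its total weight and its hop count.

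For the weight, the natural per-edge bound $\mu_d \lceil w(e_i)/\mu_d \rceil \leq w(e_i) + \mu_d$, summed over $i$, gives total weight at most $d' + k \mu_d \leq d' + \zeta \mu_d$. Substituting $\mu_d = \epsilon d/(2\zeta)$ turns this into $d' + \epsilon d/2$, and the hypothesis $d' \geq d/2$ is then exactly what is needed to absorb $\epsilon d/2 \leq \epsilon d'$, yielding the claimed $(1+\epsilon) d'$.

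For the hop count, the companion bound $\lceil w(e_i)/\mu_d \rceil \leq w(e_i)/\mu_d + 1$, summed over $i$, gives at most $d'/\mu_d + k \leq d/\mu_d + \zeta$. Substituting $\mu_d = \epsilon d/(2\zeta)$ converts $d/\mu_d$ into $2\zeta/\epsilon$ and the total hop count bound into $\zeta(1 + 2/\epsilon)$, as required.

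I do not expect any genuine obstacle; the statement is essentially a bookkeeping exercise around the single design choice $\mu_d = \epsilon d/(2\zeta)$. Each hypothesis is used exactly once: the hop constraint $k \leq \zeta$ controls the cumulative additive rounding $k \mu_d$ so that per-edge slack does not blow up, while the lower bound $d' \geq d/2$ is the step that converts an additive slack of $\epsilon d/2$ into the required multiplicative factor $(1+\epsilon)$.
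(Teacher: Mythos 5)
Your proposal is correct and matches the paper's own proof essentially verbatim: both lift the given path into $G_d$, bound the total weight by $d' + \zeta\mu_d = d' + \epsilon d/2 \leq (1+\epsilon)d'$ using $d' \geq d/2$, and bound the hops by $d'/\mu_d + \zeta \leq d/\mu_d + \zeta = \zeta(1+2/\epsilon)$. No issues.
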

\begin{proof}
   We simply take the corresponding path in $G_d$. The number of hops of this path is at most \[\frac{d'}{\mu_d} + \zeta \leq \frac{d}{\mu_d} + \zeta = \zeta(1+2/\epsilon),\]
   where the additive term $+\zeta$ is to capture the term $+1$ in the fact that each edge $e$ of weight $w$ in  $G \setminus P$ is replaced with a path of $\lceil w/\mu_d\rceil \leq (w/\mu_d) + 1$ edges in the construction of $G_d$.

   The length of this path is at most 
   \[d' + \zeta\cdot \mu_d  = d' + \frac{\epsilon d}{2} \leq  d'\cdot (1+\epsilon),\]
      where the additive term $+ \zeta\cdot \mu_d $ is to capture the term $+\mu_d$ in the fact that each edge $e$ of weight $w$ in  $G \setminus P$ is replaced with a path of length $\mu_d \cdot \lceil w/\mu_d\rceil \leq w + \mu_d$ in the construction of $G_d$.
\end{proof}

Now we apply the rounding technique to compute the short-detour approximators.

\begin{lemma}[Computing short-detour approximators]
\label{lem:rounding}
There exists an $\widetilde{O}(n^{2/3})$-round deterministic algorithm that lets each vertex $v_i$ in $P$ compute its short-detour approximator.
\end{lemma}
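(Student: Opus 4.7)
The plan is to enumerate $d$ over the $O(\log n)$ powers of two in the range $[1, \poly(n)]$---this range suffices since any detour in $G$ has length at most $nW = \poly(n)$---and, for each such $d$, run a modified version of the backward hop-constrained BFS of \Cref{Thm: Backward BFS Works} on the rounded graph $G_d$ with hop bound $H = \zeta(1+2/\epsilon) = O(\zeta)$. Since $G_d$ has uniform edge weight $\mu_d$, it behaves as an unweighted graph for BFS purposes. To avoid materializing the $\lceil w(e)/\mu_d \rceil - 1$ subdivision vertices on each edge $e$ of $G \setminus P$, I would have the endpoints of $e$ emulate the subdivided path via an output buffer: each BFS message destined to traverse the subdivision is transmitted across the real edge of $G$ with a delay of $\lceil w(e)/\mu_d \rceil - 1$ physical rounds, reproducing the arrival time it would have had in $G_d$. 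Edges with $\lceil w(e)/\mu_d \rceil > H$ contribute no messages that can arrive within the BFS horizon and can be ignored altogether. Each BFS message is augmented to carry the pair $(j, |v_j t|)$ instead of just $j$ (vertex $v_j$ already knows $|v_j t|$ by \Cref{lem:basic_tool}), which only adds $O(\log n)$ bits. Each of the $O(\log n)$ scales costs $O(H) = O(\zeta)$ physical rounds, yielding total round complexity $\widetilde{O}(n^{2/3})$.

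After the BFS for scale $d$ completes, vertex $v_i$ knows the pair $(j, |v_j t|)$ with $j = f^\ast_{v_i}(h)$ (taken in $G_d$) for every $h \in [H]$, and it adds the candidate $(j,\, |sv_i| + h\mu_d + |v_j t|)$ to its approximator $C$. For validity, \Cref{obs1} guarantees that any $h$-hop path in $G_d$ corresponds to a path in $G \setminus P$ of length at most $h\mu_d$, so prepending the $P$-prefix to $v_i$ and appending the $P$-suffix from $v_j$ yields a genuine replacement path of length at most $|sv_i| + h\mu_d + |v_j t|$; this upper-bounds $Y(\{i\},\{j\})$. For approximation, fix any $j > i$, let $d^\ast$ be the length of a shortest short-detour from $v_i$ to $v_j$, and pick $d$ to be the smallest power of two with $d^\ast \leq d$, so $d^\ast \in [d/2, d]$. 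By \Cref{obs2} there is a $v_i$-to-$v_j$ path in $G_d$ of length at most $d^\ast(1+\epsilon)$ with hop count $h \leq H$; the BFS therefore records $j' := f^\ast_{v_i}(h) \geq j$, and since $j' \geq j$ implies $|v_{j'} t| \leq |v_j t|$, the corresponding candidate in $C$ satisfies
\[
|sv_i| + h\mu_d + |v_{j'} t| \;\leq\; |sv_i| + d^\ast(1+\epsilon) + |v_j t| \;\leq\; (1+\epsilon)\bigl(|sv_i| + d^\ast + |v_j t|\bigr) \;=\; (1+\epsilon)\, X(\{i\},\{j\}),
\]
which is the desired approximation condition.

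The main technical obstacle is engineering the BFS simulation on $G_d$ without instantiating its virtual subdivision vertices, which could otherwise inflate the effective network size by a factor of up to $W/\mu_d$. The delayed-buffer mechanism sidesteps this: each original edge of $G$ carries at most one $O(\log n)$-bit buffered message per physical round per direction, fitting the $\CONGEST$ bandwidth, and the ``max index'' trimming rule of \Cref{Thm: Backward BFS Works} need only be applied at the genuine endpoints of $G$, because a pass-through virtual vertex never changes the maximum that it relays. With this simulation in hand, both validity and the $(1+\epsilon)$ approximation follow directly from \Cref{obs1,obs2}, and iterating sequentially over the $O(\log n)$ scales yields the claimed $\widetilde{O}(n^{2/3})$-round deterministic algorithm.
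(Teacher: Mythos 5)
Your proposal is correct and follows essentially the same route as the paper: the same $O(\log n)$ geometric scales of $d$, the same rounded graphs $G_d$ explored by the backward hop-constrained BFS of \Cref{Thm: Backward BFS Works} with hop bound $\zeta(1+2/\epsilon)$, the same augmentation of BFS messages with $\dist_G(v_j,t)$, and the same validity/approximation arguments via \Cref{obs1,obs2} with $d$ chosen as the power of two bracketing the detour length. The only difference is that you spell out the delayed-buffer simulation of the subdivided edges of $G_d$ on the physical network $G$, a detail the paper leaves implicit, and your argument for it is sound.
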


\begin{proof}
For each $d = 2^1, 2^2, 2^3, \ldots, 2^{\lceil \log (mW) \rceil}$, we run the algorithm of \Cref{Thm: Backward BFS Works} with parameter $\zeta^\ast = \zeta(1+2/\epsilon)$ in the graph $G_d$ by treating $G_d$ as an unweighted undirected graph. Here $2^{\lceil \log (mW) \rceil} = n^{O(1)}$ is an upper bound on any path length in $G$. The procedure takes \[O(\zeta^\ast \cdot \log n) = O((\zeta/\epsilon) \log n)= \widetilde{O}(n^{2/3})\] rounds, as $\zeta = n^{2/3}$, $\epsilon \in (0,1)$ is a constant, and there are $O(\log n)$ many choices of $d$.

Now we focus on one vertex $v_i$ in $P$ and discuss how $v_i$ can compute a desired short-detour approximator $C$.
Each execution of the algorithm of \Cref{Thm: Backward BFS Works} lets 
 $v_i$ compute the value of  $f_{v_i}^\ast(h)$ for each $h \in [\zeta^\ast]$. If $f_{v_i}^\ast(h) \neq -\infty$, then we add $(j, d')$ to $C$ with 
\[ j = f_{v_i}^\ast(h) \ \ \ \text{and} \ \ \ d' = \dist_G(s, v_i) + h\cdot \mu_d + \dist_G(v_{j},t).\]
To let $v_i$ learn $\dist_G(v_{j},t)$, we just need to slightly modify the algorithm of \Cref{Thm: Backward BFS Works} to attach this distance information $\dist_G(v_{j},t)$ to the message containing the index $j$. For the rest of the proof, we show that $C$ is a short-detour approximator for $v_i$. 

\paragraph{Validity.} For the validity requirement, observe that whenever we add $(j, d')$ to $C$, there exists a replacement path of length at most $d'$ with a detour starting from $v_i$ and ending at $v_j$. By the definition of $j = f_{v_i}^\ast(h)$, there exists a path in $G_d$ of $h$ hops from $v_i$ to $v_j$. As the length of this path is $h\cdot \mu_d$, by \Cref{obs1}, we know that there exists a detour in $G$ from $v_i$ to $v_j$ of length at most $h\cdot \mu_d$, so indeed there exists a replacement path of length at most $d' = \dist_G(s, v_i) + h\cdot \mu_d + \dist_G(v_{j},t)$ with a detour starting from $v_i$ and ending at $v_j$.

\paragraph{Approximation.} For the approximation requirement, we need to show that, for each $j^\circ$ such that $i < j^\circ \leq h_{st}$, there exists a pair $(k^\circ,d^\circ) \in C$ with $k^\circ \geq j^\circ$ such that $d^\circ \leq (1+\epsilon) \cdot X(\{i\},\{j^\circ\})$. By the definition of $X(\{i\},\{j^\circ\})$, we know that there is a $v_i$-$v_{j^\circ}$ path in $G \setminus P$ with at most $\zeta$ hops and with length 
\[r \leq X(\{i\},\{j^\circ\}) - (\dist_G(s, v_i) + \dist_G(v_{j^\circ}, t)).\]
Given such a path, select the parameter $d$ such that 
\[d/2 \leq  r   \leq d.\] 
Consider the execution of the algorithm of \Cref{Thm: Backward BFS Works} in the graph $G_d$. Since we know that there is a $v_i$-$v_{j^\circ}$ path in $G \setminus P$ of length $r \in [d/2, d]$ and with at most $\zeta$ hops, by \Cref{obs2}, there is a $v_i$-$v_{j^\circ}$ path in $G_d$ of length at most $(1+\epsilon) \cdot r$ and has $h \leq \zeta^\ast = \zeta(1+2/\epsilon)$ hops. Therefore, we must have $f_{v_i}^\ast(h) \geq j^\circ$, and our algorithm adds $(k^\circ,d^\circ)$ to $C$ with $k^\circ = f_{v_i}^\ast(h) \geq j^\circ$ and
\begin{align*}
    d^\circ &= \dist_G(s, v_i) + h \cdot \mu_d + \dist_G(v_{k^\circ},t)\\
    &\leq \dist_G(s, v_i) + h \cdot \mu_d + \dist_G(v_{j^\circ},t)\\
    &= \dist_G(s, v_i) + (1+\epsilon) \cdot r + \dist_G(v_{j^\circ},t)\\
    &\leq (1+\epsilon)\cdot X(\{i\},\{j^\circ\}),
\end{align*}
as required. 
\end{proof}

We summarize the discussion so far as a lemma.

\begin{lemma}[Knowledge of $v_i$ before information pipelining]
\label{lem:rounding_summary}
There exists an $\widetilde{O}(n^{2/3})$-round deterministic algorithm that lets each vertex $v_i$ in $P$ obtain the following information.
\begin{itemize}
    \item A good approximation of $X(\{i\}, [j, \infty))$ for all $j > i$.
    \item A good approximation of $X((-\infty, j], \{i\})$ for all $j < i$.
\end{itemize}
\end{lemma}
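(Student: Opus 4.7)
The plan is to combine the two preceding lemmas (\Cref{lem:obtaining_approx} and \Cref{lem:rounding}) with a symmetry argument on the reverse graph. For the first bullet, each vertex $v_i$ first invokes the algorithm of \Cref{lem:rounding} to compute, in $\widetilde{O}(n^{2/3})$ rounds, a short-detour approximator $C_i$ for itself. Then, purely by local computation at $v_i$, we apply the construction in the proof of \Cref{lem:obtaining_approx}: for each $j > i$ we set
\[
\widetilde{X}(\{i\}, [j, \infty)) \;=\; \min\bigl\{\,d \;:\; (k,d) \in C_i \text{ with } k \geq j\,\bigr\},
\]
and \Cref{lem:obtaining_approx} guarantees that this is a good approximation of $X(\{i\}, [j, \infty))$. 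Because $|C_i|$ is at most polynomial in $n$, no additional communication is required beyond the $\widetilde{O}(n^{2/3})$ rounds already spent.

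For the second bullet, we run the same procedure on the reverse graph $G^R$ with the reversed path $P^R = (t, \ldots, s)$ in place of $P$. Specifically, every step of the algorithm of \Cref{lem:rounding}---the construction of the rounded graph $G_d$, the $\zeta(1+2/\epsilon)$-hop backward BFS from each vertex on the current path, and the local assembly of the short-detour approximator---is applied verbatim in $G^R$. A detour in $G^R$ starting at $v_i$ and ending at some $v_j$ with $j < i$ on $P^R$ corresponds exactly to a detour in $G$ starting at $v_j$ and ending at $v_i$, so the resulting approximator yields, via the same \Cref{lem:obtaining_approx} argument, a good approximation of $X((-\infty, j], \{i\})$ for every $j < i$. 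Since $G^R$ has the same vertex set, the same diameter, and is executable in the $\CONGEST$ model by sending messages along the reverse orientation of each edge, this second invocation also costs $\widetilde{O}(n^{2/3})$ rounds.

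There is no substantive obstacle: both bullets are direct compositions of previously established results. The only minor point worth noting is that the $\widetilde{O}(n^{2/3})$-round bound of \Cref{lem:rounding} is stated assuming vertices already know their index in $P$ along with $\dist_G(s, v_i)$ and $\dist_G(v_i, t)$; these are the basic knowledge from \Cref{lem:basic_tool} and can be obtained in $\widetilde{O}(\sqrt{n}+D)$ rounds, which is absorbed in $\widetilde{O}(n^{2/3}+D)$. Running the two invocations sequentially gives the claimed total of $\widetilde{O}(n^{2/3})$ rounds (or $\widetilde{O}(n^{2/3}+D)$ if we include obtaining the basic knowledge), completing the proof.
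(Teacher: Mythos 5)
Your proposal is correct and matches the paper's own proof: the paper likewise obtains the first bullet by running the algorithm of \Cref{lem:rounding} followed by the local computation of \Cref{lem:obtaining_approx}, and obtains the second bullet by rerunning the same algorithm with all edges reversed. Your additional remarks on the reversed path correspondence and the basic-knowledge preprocessing are consistent with the paper's conventions and do not change the argument.
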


\begin{proof}
To let each vertex $v_i$ in $P$ obtain a good approximation of $X(\{i\}, [j, \infty))$ for all $j > i$, we simply run the $\widetilde{O}(n^{2/3})$-round algorithm of \Cref{lem:rounding} and then apply \Cref{lem:obtaining_approx}. By reversing all edges, in $\widetilde{O}(n^{2/3})$ rounds, using the same algorithm, we can also let each vertex $v_i$ in $P$ obtain a good approximation of $X(\{i\}, [j, \infty))$ for all $j > i$.
\end{proof}

\subsection{Information Pipelining for Short Detours}\label{subsect:near}

In the subsequent discussion, we write $\widetilde{X}(A,B)$ to denote any good approximation of $X(A,B)$.
To prove \Cref{approx_short}, our goal is to let each vertex $v_i$ obtain $\widetilde{X}((-\infty,i], [i+1, \infty))$, given the information learned during the algorithm of \Cref{lem:rounding_summary}. Recall that we divide $\{0,1,\ldots, h_{st}\}$ into $\ell =O(n^{1/3})$ intervals of $O(n^{2/3})$ indices: $I_1=[l_1,r_1], I_2=[l_2,r_2], \ldots  I_\ell=[l_\ell,r_\ell]$. In the following lemma, we handle nearby detours.

\begin{lemma}[Nearby detours]
\label{lem:near}
There exists an $\widetilde{O}(n^{2/3})$-round deterministic algorithm that ensures the following: For each $j \in [\ell]$, for each $i \in I_j \setminus \{r_j\}$, $v_i$ obtains the following information.
\begin{itemize}
    \item A good approximation of $X([l_j, i], [i+1, \infty))$.
    \item A good approximation of $X((-\infty, i], [i+1, r_j])$.
\end{itemize}
\end{lemma}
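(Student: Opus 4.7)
The plan is to combine Lemma~\ref{lem:rounding_summary} with a pipelined prefix-minimum inside each segment $I_j$, run in parallel across all segments. After invoking Lemma~\ref{lem:rounding_summary} in $\widetilde{O}(n^{2/3})$ rounds, each $v_k$ holds $a_k[i]:=\widetilde{X}(\{k\},[i+1,\infty))$ for every $i\ge k$ and $b_k[i]:=\widetilde{X}((-\infty,i],\{k\})$ for every $i<k$. I would first observe that $\alpha_i := \min_{k\in[l_j,i]} a_k[i]$ is a good approximation of $X([l_j,i],[i+1,\infty))$ and symmetrically $\gamma_i := \min_{k\in[i+1,r_j]} b_k[i]$ is a good approximation of $X((-\infty,i],[i+1,r_j])$; this follows from $Y([l_j,i],[i+1,\infty)) = \min_{k\in[l_j,i]} Y(\{k\},[i+1,\infty))$ together with the fact that the minimum of $(1+\epsilon)$-approximations of quantities is a $(1+\epsilon)$-approximation of their minimum. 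So the remaining task is to deliver $\alpha_i$ and $\gamma_i$ to $v_i$ for every $j\in[\ell]$ and every $i\in I_j\setminus\{r_j\}$.

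The main challenge is routing this information within a segment without congestion, since the raw data $\{a_k[i]\}$ inside a single segment has size $\Theta(|I_j|^2)$, which cannot be broadcast naively. The plan is a staggered-collector scheme. For each target $i\in I_j\setminus\{r_j\}$, a collector message for $i$ is released from $v_{l_j}$ in round $\tau_i := i - l_j + 1$ carrying the initial value $a_{l_j}[i]$, and then advances one hop rightward per round; when it reaches $v_p$ (with $l_j\le p\le i$), $v_p$ replaces the carried value by its minimum with $a_p[i]$, which $v_p$ knows since $p\le i$. The collector arrives at $v_i$ in round $2(i-l_j)+1$ holding $\alpha_i$. For the congestion check, the collector for target $i$ sits on the edge out of $v_p$ in round $i + p - 2l_j + 1$, so at each fixed $(t,p)$ at most one collector uses that edge (the unique $i=t+2l_j-p-1$, if it is in range). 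The last collector finishes by round $2|I_j|-1 = O(n^{2/3})$. A mirrored right-to-left pipeline inside $I_j$, driven by $b_k[i]$ instead of $a_k[i]$, delivers $\gamma_i$ in the same time bound. Different segments use disjoint edges of $P$, so all segments run their two pipelines concurrently.

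Putting it together, one invocation of Lemma~\ref{lem:rounding_summary} in $\widetilde{O}(n^{2/3})$ rounds plus the two intra-segment pipelines in $O(n^{2/3})$ rounds gives the claimed $\widetilde{O}(n^{2/3})$ total. I expect the main obstacles to be verifying the non-congestion of the staggered collectors and confirming that the minimum of good approximations remains a good approximation; both reduce to short direct checks from the definitions rather than new ideas, so the argument is mostly bookkeeping once the collector schedule $\tau_i = i - l_j + 1$ is in place.
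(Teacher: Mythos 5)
Your proposal is correct and follows essentially the same route as the paper: invoke Lemma~\ref{lem:rounding_summary}, then compute the per-target prefix minima $\min_{k\in[l_j,i]}\widetilde{X}(\{k\},[i+1,\infty))$ and $\min_{k\in[i+1,r_j]}\widetilde{X}((-\infty,i],\{k\})$ by sweeps along each interval, pipelined over all targets $i$, with the symmetric direction plus a one-hop shift for the second quantity. Your staggered-collector schedule is just an explicit (and correctly congestion-checked) implementation of the paper's pipelining, and your remark that a minimum of good approximations is a good approximation of the minimum makes explicit what the paper leaves implicit.
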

\begin{proof}
In the preprocessing step, we run the $\widetilde{O}(n^{2/3})$-round algorithm of \Cref{lem:rounding_summary}, which allows each vertex $v_i$ in $P$ to calculate $\widetilde{X}(\{i\}, [x, \infty))$, for all $x \in [j+1, \infty)$.

To let  $v_{i}$ calculate \[\widetilde{X}([l_j, i], [i+1, \infty)) = \min_{k \in [l_j, i]} \widetilde{X}(\{k\}, [i+1, \infty)),\] it suffices to do a left-to-right sweep in the path $(v_{l_j}, \ldots, v_{i})$ to calculate the minimum. The procedure costs $i-l_j-1\leq |I_j|-1$ rounds.  We can run the procedure for all $i \in I_j \setminus \{r_j\}$ in a pipelining fashion using \[(|I_j|-1) + |I_j \setminus \{r_j\}|- 1 = O(n^{2/3})\]
rounds. Therefore, $O(n^{2/3})$ rounds suffice to let $v_i$ compute $\widetilde{X}([l_j, i], [i+1, \infty))$, for all $i \in I_j \setminus \{r_j\}$.

The task of computing $X((-\infty, i], [i+1, r_j])$ for $v_i$ can be done similarly. By symmetry, using the same algorithm, we can let $v_i$ compute $X((-\infty, i-1], [i, r_j])$, for each $i \in I_j \setminus \{l_j\}$. Now observe that the good approximation needed by $v_i$ is stored in $v_{i+1}$, so one additional round of communication suffices to let each $v_i$ obtain its needed information. 
\end{proof}


In the following lemma, we let the right-most vertex $v_{r_j}$ of each interval $I_j$ prepare the information to be broadcast regarding distant detours.

\begin{lemma}[Information to be broadcast]
\label{lem:far1}
There exists an $\widetilde{O}(n^{2/3})$-round deterministic algorithm that ensures the following: For each $j \in [\ell]$, $v_{r_j}$ obtains the following information.
\begin{itemize}
    \item A good approximation of $X(I_j, [l_k, \infty))$, for all $k \in [j+1, \ell]$.
\end{itemize}
\end{lemma}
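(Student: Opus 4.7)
The plan is to reduce the task to a simple minimum computation over known local values, using the approximations already granted by \Cref{lem:rounding_summary}. First I would run the $\widetilde{O}(n^{2/3})$-round algorithm of \Cref{lem:rounding_summary} as preprocessing, so that every vertex $v_i$ in $P$ already stores a good approximation $\widetilde{X}(\{i\}, [l_k, \infty))$ for every $k \in [j+1, \ell]$ whenever $i \in I_j$. Since
\[
X(I_j, [l_k, \infty)) \;=\; \min_{i \in I_j} X(\{i\}, [l_k, \infty)),
\]
it is enough to collect, at $v_{r_j}$, the quantity $m_{j,k} \coloneqq \min_{i \in I_j} \widetilde{X}(\{i\}, [l_k, \infty))$ for each $k \in [j+1, \ell]$.

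A routine verification shows $m_{j,k}$ is itself a good approximation of $X(I_j, [l_k, \infty))$. For the upper bound, picking an index $i^\ast \in I_j$ achieving $X(I_j, [l_k, \infty))$ gives
\[
m_{j,k} \;\le\; \widetilde{X}(\{i^\ast\}, [l_k, \infty)) \;\le\; (1+\epsilon)\, X(\{i^\ast\}, [l_k, \infty)) \;=\; (1+\epsilon)\, X(I_j, [l_k, \infty)).
\]
For the lower bound, every $\widetilde{X}(\{i\}, [l_k, \infty))$ with $i \in I_j$ is at least $Y(\{i\}, [l_k, \infty)) \ge Y(I_j, [l_k, \infty))$, so the same inequality holds for their minimum $m_{j,k}$.

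The communication is then just a left-to-right sweep inside each interval $I_j$. For a fixed $k$, let the sweep start at $v_{l_j}$ with the value $\widetilde{X}(\{l_j\}, [l_k, \infty))$, and have each vertex along $I_j$ replace the running value with the minimum of its own local approximation and the one received; after $|I_j|-1 = O(n^{2/3})$ rounds, $v_{r_j}$ holds $m_{j,k}$. Running these sweeps in parallel across all $j \in [\ell]$ is immediate because the intervals $I_j$ are edge-disjoint in $P$. For the different values of $k \in [j+1, \ell]$ within a single interval, I would pipeline: start the sweep for the $k$-th value at round $k - (j+1)$, so that the total number of rounds is $|I_j| - 1 + (\ell - j) = O(n^{2/3} + n^{1/3}) = O(n^{2/3})$.

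No part of this is subtle; the only thing to be careful about is that each $v_i \in I_j$ must know which quantity $\widetilde{X}(\{i\}, [l_k, \infty))$ to contribute in each pipelined round, which is handled by letting each interval boundary $l_k$ be broadcast globally in $\widetilde{O}(n^{1/3} + D)$ rounds during preprocessing (using \Cref{LP} on $\ell = O(n^{1/3})$ indices), so that every vertex locally knows the sequence of $l_k$'s to use. Combining these costs with the preprocessing of \Cref{lem:rounding_summary} yields the claimed $\widetilde{O}(n^{2/3})$-round bound.
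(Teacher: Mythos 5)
Your proposal is correct and takes essentially the same route as the paper's proof: preprocess with \Cref{lem:rounding_summary}, then compute $\min_{i \in I_j} \widetilde{X}(\{i\}, [l_k, \infty))$ by a pipelined left-to-right sweep inside each interval, with your explicit check that the minimum of good approximations is again a good approximation being a detail the paper leaves implicit. One small caveat: the extra broadcast of the interval boundaries costs $\widetilde{O}(n^{1/3}+D)$ rounds and thus technically exceeds the $D$-free $\widetilde{O}(n^{2/3})$ bound claimed in the statement, but it is also unnecessary, since the boundaries $l_k$ are fixed arithmetic functions of the indices (and $n$, $h_{st}$), so each vertex can determine them locally from the basic knowledge of \Cref{lem:basic_tool}.
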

\begin{proof}
In the preprocessing step, we run the $\widetilde{O}(n^{2/3})$-round algorithm of \Cref{lem:rounding_summary}, which allows each vertex $v_i$ in $P$ to calculate $\widetilde{X}(\{i\}, [l_k, \infty))$, for all $k \in [j+1, \ell]$. To let  $v_{r_j}$ calculate \[\widetilde{X}(I_j, [l_k, \infty)) = \min_{i \in I_j} \widetilde{X}(\{i\}, [l_k, \infty)),\] it suffices to do a left-to-right sweep in the path $(v_{l_j}, \ldots, v_{r_j})$ to calculate the minimum. The procedure costs $|I_j|-1$ rounds.  We can run the procedure for all $k \in [j+1, \ell]$ in a pipelining fashion using \[(|I_j|-1) + (\ell-(j+1)+1) - 1 = O(n^{2/3}) \text{ rounds.} \qedhere\]

\end{proof}

In the following lemma, we handle distant detours.

\begin{lemma}[Distant detours]
\label{lem:far2}
There exists an $\widetilde{O}(n^{2/3} + D)$-round deterministic algorithm that lets each vertex $v_i$ in $P$ obtain the following information.
\begin{itemize}
    \item A good approximation of $X((-\infty,r_j], [l_k, \infty))$, for all $j,k \in [\ell]$ such that $j < k$.
\end{itemize}
\end{lemma}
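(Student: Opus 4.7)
The plan is to reduce the task to a single broadcast round on top of \Cref{lem:far1}. First I would run the $\widetilde{O}(n^{2/3})$-round algorithm of \Cref{lem:far1} so that, for every $j \in [\ell]$, the endpoint $v_{r_j}$ holds a good approximation $\widetilde{X}(I_j, [l_k, \infty))$ for every $k > j$.

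Next, each $v_{r_j}$ broadcasts all these values. The total number of $O(\log n)$-bit messages to be broadcast across the network is
\[
\sum_{j=1}^{\ell} (\ell - j) = O(\ell^2) = O(n^{2/3}),
\]
since $\ell = O(n^{1/3})$. By \Cref{LP}, the broadcast finishes in $O(n^{2/3} + D)$ rounds, after which every vertex $v_i \in V$ knows $\widetilde{X}(I_{j'}, [l_k, \infty))$ for all $j' < k$. Each $v_i$ then locally computes, for every pair $j < k$ in $[\ell]$,
\[
\widetilde{X}((-\infty, r_j], [l_k, \infty)) \;=\; \min_{j' \leq j} \widetilde{X}(I_{j'}, [l_k, \infty)).
\]

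The only thing to verify is that this minimum is indeed a good approximation. Observe that, since $(-\infty, r_j] = \bigcup_{j' \leq j} I_{j'}$ (restricted to $[h_{st}-1]$), we have the identities $X((-\infty,r_j],[l_k,\infty)) = \min_{j' \leq j} X(I_{j'},[l_k,\infty))$ and similarly for $Y$. Each term in the minimum satisfies $Y(I_{j'},[l_k,\infty)) \leq \widetilde{X}(I_{j'},[l_k,\infty)) \leq (1+\epsilon)\,X(I_{j'},[l_k,\infty))$, so taking the minimum over $j' \leq j$ on all three sides preserves the inequalities and yields $Y((-\infty,r_j],[l_k,\infty)) \leq \widetilde{X}((-\infty,r_j],[l_k,\infty)) \leq (1+\epsilon)\,X((-\infty,r_j],[l_k,\infty))$, as required.

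I do not expect any substantial obstacle: the broadcast bandwidth is the bottleneck and it is already accounted for by the $\widetilde{O}(n^{2/3} + D)$ target, and the approximation-preservation under taking minima is immediate. The only subtle point worth double-checking is that \Cref{lem:far1} delivers its outputs exactly at the vertices $v_{r_j}$, which are the endpoints used in the broadcast, so no extra intra-segment routing is needed before we invoke \Cref{LP}.
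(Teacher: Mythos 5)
Your proposal is correct and follows essentially the same route as the paper: run \Cref{lem:far1}, have each $v_{r_j}$ broadcast its values via \Cref{LP} (total $O(\ell^2)=O(n^{2/3})$ messages), and take the minimum $\min_{j'\leq j}\widetilde{X}(I_{j'},[l_k,\infty))$ locally. Your explicit check that taking minima preserves the good-approximation guarantee is a detail the paper leaves implicit, but otherwise the arguments coincide.
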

\begin{proof}
Run the $\widetilde{O}(n^{2/3})$-round algorithm of \Cref{lem:far1}, and then for each $j \in [\ell]$, let $v_{r_j}$ broadcast $\widetilde{X}(I_j, [l_k, \infty))$, for all $k \in [j+1, \ell]$, to the entire graph. The total number of messages to be broadcast is $O(\ell^2) = O(n^{2/3})$, so the broadcast can be done in $O(n^{2/3} + D)$ rounds by \Cref{LP}. After that, each vertex in the graph can locally calculate
\[\widetilde{X}((-\infty,r_j], [l_k, \infty))=\min_{l\in[j]} \widetilde{X}(I_l, [l_k, \infty)). \qedhere\]
\end{proof}

Combining \Cref{lem:rounding_summary} and \Cref{lem:far2}, we are ready to prove \Cref{approx_short}.

\begin{proof}[Proof of \Cref{approx_short}]
 We run the algorithms of \Cref{lem:near} and \Cref{lem:far2}. Consider the first endpoint $v_i$ of each edge $e=(v_i, v_{i+1})$ in $P$. We just need to show that $v_i$ has enough information to compute $\widetilde{X}((-\infty,i], [i+1, \infty))$. 
 
 We first consider the case where $e$ crosses two intervals, i.e., $i = r_j$ and ${i+1} = l_{j+1}$ for some $j$. In this case, the output of the algorithm of \Cref{lem:far2} already includes \[\widetilde{X}((-\infty,i], [i+1, \infty)) = \widetilde{X}((-\infty,r_j], [l_{j+1}, \infty)).\]

 Next, consider the case where $e$ belongs to one interval, i.e., $\{i, i+1\} \subseteq I_j$ for some $j$. If $I_j$ is the first interval, i.e., $j=1$, then from the output of the algorithm of \Cref{lem:near}, $v_i$ knows \[\widetilde{X}((-\infty, i], [i+1, \infty)) = \widetilde{X}([l_1, i], [i+1, \infty)).\]
If $I_j$ is the last interval, i.e., $j=\ell$, then from the output of the algorithm of \Cref{lem:near}, $v_i$ knows \[\widetilde{X}((-\infty, i], [i+1, \infty)) = \widetilde{X}((-\infty, i], [i+1, r_\ell]).\] 
If $1 < j < \ell$, then $v_i$ can combine the outputs from  \Cref{lem:near} and \Cref{lem:far2} to obtain
\begin{align*}
&\widetilde{X}((-\infty, i], [i+1, \infty))\\ &= \min\left\{\widetilde{X}([l_j, i], [i+1, \infty)), \widetilde{X}((-\infty, i], [i+1, r_j]), \widetilde{X}((-\infty,r_{j-1}], [l_{j+1}, \infty))\right\}. \qedhere
\end{align*}
\end{proof}



\subsection{Long Detours}

For replacement paths with a long detour, they can be handled  by a small modification to the proof of \Cref{Thm: Long Detour Part Works}.
In unweighted graphs, we can compute the $k$-source $h$-hop shortest paths \emph{exactly} by growing the BFS tree from $k$ sources of $h$ hops in $O(k+h)$ rounds using \Cref{kbfs}. For weighted graphs, we use the following result by Nanongkai~\cite[Theorem 3.6]{nanongkai2014distributed}. 

\begin{lemma}[$h$-hop $k$-source shortest paths~\cite{nanongkai2014distributed}]\label{nanongkai2014distributed}
    For any constant $\epsilon \in (0,1)$, there is an $\widetilde{O}(k+h+D)$-round algorithm that approximates the $h$-hop $k$-source shortest paths in a weighted directed graph within a multiplicative factor $(1+\epsilon)$ with high probability.
\end{lemma}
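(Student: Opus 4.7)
The plan is to reduce the weighted problem to $O(\log n)$ unweighted $k$-source $h$-hop BFS instances using the geometric rounding already developed in \Cref{subsect:rounding}, and then invoke \Cref{kbfs} as a black box on each instance. For each distance class $d \in \{2^0, 2^1, \ldots, 2^{\lceil \log(nW) \rceil}\}$, I would construct the rounded graph $G_d$ by replacing every edge $e$ of weight $w(e)$ with $\lceil w(e)/\mu_d \rceil$ unit-weight edges, where $\mu_d = \epsilon d/(2h)$. By \Cref{obs2}, any true $h$-hop path of length $d^* \in [d/2,d]$ from a source corresponds to a $G_d$-path of length at most $(1+\epsilon)d^*$ using at most $h^* := \lceil h(1+2/\epsilon) \rceil$ edges, and by \Cref{obs1} the distance in $G_d$ never falls below the true distance in $G \setminus P$. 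So the minimum of $h' \cdot \mu_d$ over all $d$ and over the hop count $h'$ at which vertex $v$ is first reached from source $s$ in the corresponding BFS gives a valid $(1+\epsilon)$-approximation of the $h$-hop distance from $s$ to $v$.

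Next, for each class $d$, I would run a $k$-source BFS in $G_d$ truncated at $h^*$ hops, simulated on the original physical network. Treating each original edge $(u,v)$ as a single integer-weighted edge of weight $\lceil w(e)/\mu_d \rceil$, Bellman-Ford-style relaxation is performed $h^*$ times, with the relaxation of $(u,v)$ scheduled exactly $\lceil w(e)/\mu_d \rceil$ rounds after $u$ updates its estimate. Using the same pipelining trick as in the proof of \Cref{kbfs}, the $k$ source explorations are staggered by one round each so that every edge carries at most one message per round. This costs $O(k + h^*) = \widetilde{O}(k + h/\epsilon)$ rounds per distance class, and the randomization (with high probability) can be folded into the subroutine for \Cref{kbfs} or into random tie-breaking. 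Summing over the $O(\log(nW)) = O(\log n)$ distance classes yields $\widetilde{O}(k + h)$ overall, matching the claimed bound.

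The step I expect to be the main obstacle is the correct simulation of subdivided BFS on $G$ without suffering an $\Omega(W/\mu_d)$ blow-up: a single edge of $G$ may correspond to a subdivision path in $G_d$ of length far exceeding $h^*$, so one cannot afford to propagate tokens unit by unit through the subdivision. The key observation is that only arrival times at \emph{original} vertices matter, so one schedules the relaxation of $(u,v)$ to fire exactly $\lceil w(e)/\mu_d \rceil$ rounds after $u$'s latest update and drops any token that would arrive after the $h^*$-th round; this preserves the desired distance semantics while keeping per-edge bandwidth at $O(\log n)$ bits per round. The additive $+D$ accounts for initial global broadcast of parameters (such as the weight scale $W$ that fixes the range of $d$) and is in any case a trivial lower bound for meaningful computation, so it fits inside the final $\widetilde{O}(k+h+D)$ guarantee.
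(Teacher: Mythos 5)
The paper never proves this lemma: it is imported as a black box (Theorem~3.6 of \cite{nanongkai2014distributed}) and used only inside \Cref{approxsssp}. So your proposal is not an alternative to a proof in the paper but a from-scratch reconstruction of the cited result, and it follows the standard route: geometric distance classes $d$, rounding with unit $\mu_d=\epsilon d/(2h)$ so that an $h$-hop path of length in $[d/2,d]$ becomes a path of $O(h/\epsilon)$ rounded hops stretched by at most $1+\epsilon$ (your analogues of \Cref{obs1} and \Cref{obs2}, with $\zeta$ replaced by $h$), followed by a $k$-source hop-bounded exploration per class. This also mirrors the rounding machinery the paper itself builds in \Cref{subsect:rounding}. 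The guarantee you obtain is the one-sided one (never below the true unbounded-hop distance, at most $(1+\epsilon)$ times the $h$-hop distance), which is exactly what the cited theorem gives and what the paper's application requires, and your accounting of $O(\log (nW))$ classes times $O(k+h(1+2/\epsilon))$ rounds plus $O(D)$ setup indeed matches $\widetilde{O}(k+h+D)$ for constant $\epsilon$; a deterministic bound here is only stronger than the stated high-probability one.

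The one step you should tighten is the congestion claim. Staggering the $k$ sources by one round each does \emph{not} by itself guarantee that ``every edge carries at most one message per round'': in the rounded graph, two sources whose wavefronts settle at $u$ at times differing by exactly their stagger offset will fire across the same edge in the same round. What actually closes this is the pipelining argument you invoke only by analogy with \Cref{kbfs}: each directed edge carries at most one (source, distance) message per source over the whole execution, so scheduling transmissions in order of (virtual) arrival time delays any message by at most $O(k)$ rounds, yielding $O(k+h(1+2/\epsilon))$ per class. Equivalently, one can literally run the algorithm of \Cref{kbfs} on $G_d$ and note that the subdivision vertices of an edge can be simulated by its physical endpoints, since a vertex injects at most one message per round into each outgoing edge and only arrival times at original vertices matter --- a point you do raise when discussing the $\Omega(W/\mu_d)$ blow-up. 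With that delay analysis spelled out rather than asserted, the argument is sound.
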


By replacing all BFS computation in our algorithm for long detours in unweighted graphs with the algorithm of \Cref{nanongkai2014distributed}, we obtain the following result.

\begin{proposition}[Long detours]\label{approxsssp}
    For weighted directed graphs, for any constant $\epsilon \in (0,1)$, there exists an $\widetilde{O}(n^{2/3}+D)$-round randomized algorithm that lets the first endpoint $v_i$ of each edge $e=(v_i, v_{i+1})$ in $P$ compute a number $x$ such that
    \[|st \diamond e| \leq x \leq (1+\epsilon) \cdot \text{the shortest replacement path length for $e$ with a long detour}\] with high probability.   
\end{proposition}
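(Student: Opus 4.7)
The plan is to mirror the proof of Proposition \ref{Thm: Long Detour Part Works}, substituting every exact $h$-hop $k$-source BFS with the $(1+\epsilon)$-approximate hop-constrained shortest-paths algorithm of Lemma \ref{nanongkai2014distributed}. The overall skeleton is preserved: sample landmarks $L$ of size $|L| = \widetilde{O}(n^{1/3})$ exactly as before so that, with high probability, every $G \setminus P$-shortest path of at least $n^{2/3}$ hops contains a landmark (Lemma \ref{Lem : Landmark Vertex}); compute $(1+\epsilon)$-approximate $n^{2/3}$-hop distances from every landmark and broadcast them; let each vertex on $P$ learn an approximation of its distance to every landmark, and symmetrically from every landmark; feed these through the segmented information pipelining of Lemmas \ref{mincheckpoint}, \ref{lem : s to land}, and \ref{lem : land to t}; and finally, at each first endpoint $v_i$ of an edge $e = (v_i, v_{i+1})$ in $P$, output $x = \min_{l \in L}(\widetilde{d}_1(l) + \widetilde{d}_2(l))$, where $\widetilde{d}_1(l)$ and $\widetilde{d}_2(l)$ are the computed approximations of $|sl \diamond P[v_i, t]|$ and $|lt \diamond P[s, v_{i+1}]|$. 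Each invocation of Lemma \ref{nanongkai2014distributed} costs $\widetilde{O}(k + h + D) = \widetilde{O}(n^{2/3} + D)$ with $k = |L|$ and $h = n^{2/3}$, and the broadcasts still fit in $\widetilde{O}(n^{2/3} + D)$ rounds by Lemma \ref{LP}, meeting the target complexity.

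The key observation carrying through the argument is that $(1+\epsilon)$-approximate path lengths compose under concatenation. Concretely, if $\widetilde{d}_{\le h}(a, b)$ is the length of an actual $a$-to-$b$ path in $G \setminus P$ with $d_{\le h}(a, b) \le \widetilde{d}_{\le h}(a, b) \le (1+\epsilon)\, d_{\le h}(a, b)$ for every landmark pair $(a, b)$, then the local Floyd--Warshall-style minimization $\widetilde{d}^*(a, b) = \min \sum_i \widetilde{d}_{\le h}(l^{(i)}, l^{(i+1)})$ over landmark sequences satisfies $d(a, b) \le \widetilde{d}^*(a, b) \le (1+\epsilon)\, d(a, b)$: the lower bound holds because each summand is the length of a real path, and the upper bound follows by fixing the landmark decomposition guaranteed by Lemma \ref{Lem : Landmark Vertex} and summing the per-segment $(1+\epsilon)$ bounds. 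The same reasoning upgrades the analogue of Lemma \ref{lemma: vilj path} and its symmetric counterpart. In the pipelining step inside each segment (the analogue of Lemma \ref{mincheckpoint}), exact values of $|sv_i|$ and $|v_j t|$ along $P$ (known from Lemma \ref{lem:basic_tool}) are added to approximated detour distances; since adding an exact nonnegative term to a $(1+\epsilon)$-approximated nonnegative term preserves the $(1+\epsilon)$ factor on the overall sum, the per-segment pipelining and the subsequent broadcast of one approximate value per segment-landmark pair yield $(1+\epsilon)$-approximate versions of $|sl_j \diamond P[v_i, t]|$ and $|l_j t \diamond P[s, v_{i+1}]|$ at every $v_i$, matching Lemmas \ref{lem : s to land} and \ref{lem : land to t}.

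The main obstacle to be careful about is the final accounting at $v_i$. For the lower bound $|st \diamond e| \le x$, each sum $\widetilde{d}_1(l) + \widetilde{d}_2(l)$ is realized by an actual $s$-to-$l$-to-$t$ concatenation that avoids $e$, so it upper-bounds $|st \diamond e|$, and taking the minimum over $l \in L$ preserves this lower bound. For the upper bound, consider any optimal long-detour replacement path $R^*$; with high probability its detour contains some landmark $l^* \in L$, and $R^*$ splits into $sl^* \diamond P[v_i, t]$ and $l^* t \diamond P[s, v_{i+1}]$ parts of exact total length $|R^*|$. Our approximations satisfy $\widetilde{d}_j(l^*) \le (1+\epsilon)$ times the corresponding exact piece, so $x \le \widetilde{d}_1(l^*) + \widetilde{d}_2(l^*) \le (1+\epsilon) |R^*|$. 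All failure probabilities --- the landmark sampling and the $O(1)$ randomized invocations of Lemma \ref{nanongkai2014distributed} --- are $1/\poly(n)$ and combine by a union bound, so the whole algorithm succeeds with high probability and meets the $\widetilde{O}(n^{2/3}+D)$-round target since $\epsilon$ is a constant.
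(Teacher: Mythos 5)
Your proposal is correct and follows essentially the same route as the paper: replace the exact $k$-source $h$-hop BFS computations in the long-detour machinery (the landmark-to-landmark distances, the $v_i$-to-landmark distances, and their symmetric counterparts) with the $(1+\epsilon)$-approximate hop-constrained shortest paths of Nanongkai, keep the landmark sampling, segmentation, pipelining, and broadcast structure intact, and take the minimum over landmarks at each $v_i$. The only difference is that you spell out the additive composition of per-piece $(1+\epsilon)$ approximations and the final two-sided accounting, which the paper leaves implicit.
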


\begin{proof}
The proof is almost identical to the proof of \Cref{Thm: Long Detour Part Works}, so here we only highlight the difference. Analogous to \Cref{Thm: Long Detour Part Works}, we need $(1+\epsilon)$-approximate versions of   \Cref{lem : s to land} and \Cref{lem : land to t}. Recall that the proof of \Cref{lem : land to t} is similar to \Cref{lem : s to land}, so $(1+\epsilon)$-approximate version of \Cref{lem : land to t} can also be proved similar to  $(1+\epsilon)$-approximate version of \Cref{lem : s to land}.

To prove \Cref{lem : s to land}, we need  \Cref{lem : land to land} and \Cref{lemma: vilj path}. In \Cref{lem : land to land}, we compute the distances $|l_j l_k|_{G \setminus P}$ via $n^{2/3}$-hop BFS from each $l_j \in L$ and recall that $|L|= \widetilde{O}(n^{1/3})$. We replace the BFS computation with the $(1+\epsilon)$-approximate $n^{2/3}$-hop $\widetilde{O}(n^{1/3})$-source shortest paths algorithm of \Cref{nanongkai2014distributed}. This allows us to  compute an $(1+\epsilon)$ approximation of those $|l_j l_k|_{G \setminus P}$ distances in $\widetilde{O}(n^{2/3}+D)$ rounds. We do a similar modification to \Cref{lemma: vilj path} to let each $v_i$ in $P$ compute an $(1+\epsilon)$-approximation of $|v_il_j|_{G \setminus P}$ for all $l_j \in L$ in $\widetilde{O}(n^{2/3}+D)$ rounds.



    Combining the $(1+\epsilon)$-approximate versions of  \Cref{lem : s to land} and \Cref{lem : land to t}, the first endpoint $v_i$ of each  edge $e=(v_i, v_{i+1})$ in $P$ can compute a number $x$ such that $|st \diamond e| \leq x \leq (1+\epsilon) \cdot \text{the shortest replacement path length for $e$ with a long detour}.$
\end{proof}

Now, we can prove the main result of the section.

\apxUB*
\begin{proof}
Run the algorithms of \Cref{approx_short} and \Cref{approxsssp} with the threshold $\zeta = n^{2/3}$, by taking the minimum of the two outputs, the first endpoint $v_i$ of each edge $e=(v_i, v_{i+1})$ in $P$ correctly computes an $(1+\epsilon)$ approximation of the shortest replacement path length $|st \diamond e|$.
\end{proof}







\section{Conclusions and Open Problems}
\label{sec:conclusions}
In this work, by showing improved upper and lower bounds, we established that $\widetilde{\Theta}(n^{2/3} + D)$ is the \emph{tight} randomized round complexity bound for the replacement path problem in \emph{unweighted} directed graphs. Moreover, our upper bound extends to $(1+\epsilon)$-approximation for \emph{weighted} directed graphs.  Several intriguing questions remain open.

\begin{description}
\item[Weighted undirected graphs:] \citet{manoharan2024computing}  showed an upper bound of $O(\Tsssp + h_{st})$ for \rpath{} in \emph{weighted undirected} graphs, nearly matching their lower bound of $\widetilde{\Omega}(\sqrt{n}+D)$. Here $\Tsssp$ is the round complexity of the SSSP problem in weighted undirected graphs. It is known that $\Tsssp = \widetilde{\Omega}(\sqrt{n}+D)$~\cite{peleg2000near} and $\Tsssp = \widetilde{O}(\sqrt{n}+D) + n^{(2/5) + o(1)} \cdot O(D^{2/5})$~\cite{cao2023parallel}. For small-diameter graphs, the upper and lower bounds are matched up to the additive term $O(h_{st})$. Is it possible to eliminate this term?

    \item[Further lower bounds:] 
    A promising direction for future research is to develop new lower bounds from the lower bound framework of the work~\cite{das2011distributed} that are not restricted to the form $\widetilde{\Omega}(\sqrt{n})$, similar to our $\widetilde{\Omega}(n^{2/3})$ lower bound for the replacement path problem. Some potential problems to consider are as follows. For depth-first search, the state-of-the-art upper bound is $\widetilde{O}(\sqrt{Dn} + n^{3/4})$~\cite{ghaffari_et_al:LIPIcs.DISC.2017.21}, but no lower bound is known. For girth computation, there is still a gap between the upper bound $O(n)$~\cite{holzer2012optimal} and the lower bound $\widetilde{\Omega}(\sqrt{n})$~\cite{frischknecht2012networks}. More generally, for the problem of computing a minimum-weight cycle, the upper and lower bounds are not matched in many scenarios~\cite{manoharan2024computingMinWeightCycleCONGEST}.
    \item[Deterministic algorithms:] In our $\widetilde{O}(n^{2/3} + D)$-round algorithms, the way we compute long-detour replacement paths is inherently randomized. It remains an open question to determine the optimal deterministic complexity for the replacement path problem. Could the derandomization techniques developed in the deterministic APSP algorithms by Agarwal and Ramachandran~\cite{agarwal2018deterministic,agarwal2020faster} be applied to design efficient deterministic algorithms for the replacement path problem?
    \item[Approximation algorithms:] For $(1+\epsilon)$-approximation in weighted directed graphs, there is still a gap between our new upper bound $\widetilde{O}(n^{2/3} + D)$ and the existing lower bound $\widetilde{\Omega}(\sqrt{n} + D)$~\cite{manoharan2024computing}, as our new lower bound $\widetilde{\Omega}(n^{2/3} + D)$ does not extend to this setting. Is it possible to break the upper bound  $\widetilde{O}(n^{2/3} + D)$ for any constant-factor approximation? Even for the \emph{reachability} version of the replacement path problem, we do not know how to break the upper bound $\widetilde{O}(n^{2/3} + D)$.
    \item[Universal optimality:] A \emph{universally optimal} algorithm is one that, for any given input graph, achieves a round complexity matching that of the best algorithm specifically designed for that graph. Recent studies~\cite{universally_optimal_stoc2021,haeupler2022hop} demonstrated that several problems within the complexity class $\widetilde{\Theta}(\sqrt{n} + D)$---such as minimum spanning tree, $(1+\epsilon)$-approximate single-source shortest paths, and $(1+\epsilon)$-approximate minimum cut---admit approximately universally optimal algorithms in the $\CONGEST$ model. Given the similarities between the replacement path problem and these problems, particularly in the techniques used for both upper and lower bounds, is it possible to develop approximately universally optimal algorithms for the replacement path problem?
    \item[Distance sensitivity oracle:] In the \emph{distance sensitivity oracle} problem, the goal is to process the input graph so that future queries about the shortest-path distance from $s$ to $t$ avoiding $e$ can be answered efficiently, for any $s$, $t$, and $e$.
    Very recently, Manoharan and Ramachandran~\cite{manoharan2024distributed} presented the first distributed algorithms for the distance sensitivity oracle problem in the $\CONGEST$ model. Could the techniques introduced in our work help resolve the remaining open questions about the distance sensitivity oracle problem? Notably, a gap still exists between the upper and lower bounds for weighted directed graphs.

\end{description}

\section*{Acknowledgements}
We would like to thank Vignesh Manoharan and Vijaya Ramachandran for helpful discussions and comments.

\printbibliography
\end{document}